\newcommand{\newterm}[1]{\textit{#1}}
\newcommand{\comment}[1]{}
\newcommand{\transpose}{^T}
\newcommand{\kaleeps}{\beta}
\theoremstyle{plain}
\newtheorem{theorem}{Theorem}
\newtheorem{proposition}[theorem]{Proposition}
\newtheorem{corollary}[theorem]{Corollary}
\newtheorem{lemma}[theorem]{Lemma}
\newtheorem{definition}[theorem]{Definition}
\newtheorem{problem}[theorem]{Problem}
\newcommand{\repeatclaim}[2]{
    \addvspace{3pt plus 5pt minus 0pt}
    \noindent\textbf{#1. }{\it #2}

    \addvspace{1pt plus 5pt minus 1pt}
}
\newcommand{\repeatclaimcomment}[3]{
    \addvspace{3pt plus 5pt minus 0pt}
    \noindent\textbf{#1 }\textrm{(#2)}\textbf{. }{\it #3}

    \addvspace{1pt plus 5pt minus 1pt}
}
\newenvironment{outer_alg}{
    \begin{list}{}{
        \setlength{\itemsep}{2pt}
        \setlength{\parsep}{0pt}
        \setlength{\parskip}{0pt}
        \setlength{\topsep}{1pt}
        \setlength{\leftmargin}{5pt}
    }
}
{
    \end{list}
}
\newenvironment{alg}{
    \begin{list}{}{
        \setlength{\itemsep}{2pt}
        \setlength{\parsep}{0pt}
        \setlength{\parskip}{0pt}
        \setlength{\topsep}{1pt}
    }
}
{
    \end{list}
}
\newcommand{\card}[2][]{#1| #2 #1|}
\newcommand{\ceil}[2][]{#1\lceil #2 #1\rceil}
\DeclareMathOperator{\Diag}{Diag}
\newcommand{\drop}{\setminus}
\newcommand{\Ecal}{\mathcal{E}}
\newcommand{\eps}{\varepsilon}
\newcommand{\Fcal}{\mathcal{F}}
\newcommand{\ffrom}{\colon}
\newcommand{\floor}[2][]{#1\lfloor #2 #1\rfloor}
\newcommand{\fto}{\to}
\newcommand{\Gcal}{\mathcal{G}}
\newcommand{\Hcal}{\mathcal{H}}
\newcommand{\iprod}[2]{\langle #1, #2 \rangle}
\newcommand{\iprodt}[2]{\transp{#1}#2}
\newcommand{\Laplacian}[2][]{\Lcal_{\ifthenelse{\equal{#1}{}}{G}{#1}}(#2)}
\newcommand{\Lcal}{\mathcal{L}}
\newcommand{\Lovasz}{Lov\' asz}
\DeclareMathOperator{\mean}{\mathbf{E}}
\newcommand{\myhalf}{\textstyle \frac{1}{2}}
\newcommand{\Ncal}{\mathcal{N}}
\newcommand{\paren}[2][]{#1({#2}#1)}
\DeclareMathOperator{\prob}{\mathbf{P}}
\newcommand{\oprod}[2]{#1#2^{T}}
\newcommand{\oprodsym}[1]{\oprod{#1}{#1}}
\newcommand{\Pcal}{\mathcal{P}}
\newcommand{\Pd}[1][]{\Symraw_{++}^{\ifthenelse{\equal{#1}{}}{n}{#1}}}
\newcommand{\Psd}[1][]{\Symraw_+^{\ifthenelse{\equal{#1}{}}{n}{#1}}}
\newcommand{\qform}[2]{\transp{#2}#1#2}
\newcommand{\Reals}{\mathbb{R}}
\newcommand{\set}[2][]{#1\{ {#2} #1\}}
\newcommand{\setst}[3][]{#1\{\,{#2}\,\colon{#3} #1\}}
\newcommand{\sqbrac}[2][]{#1[{#2}#1]}
\newcommand{\Sym}[1][]{\Symraw^{\ifthenelse{\equal{#1}{}}{n}{#1}}}
\newcommand{\Symraw}{\mathbb{S}}
\DeclareMathOperator{\trace}{Tr}
\newcommand{\transp}[1]{#1^T}
\newcommand{\yb}{\bar{y}}
\newcommand{\zb}{\bar{z}}
\title{Sparse Sums of Positive Semidefinite Matrices}
\author{Marcel K.\ de Carli Silva\thanks{Department of Combinatorics and Optimization,
University of Waterloo. {\tt mksilva@uwaterloo.ca}. Partially supported by an NSERC Discovery Grant of L.~Tun\c{c}el.}
\qquad
Nicholas J. A. Harvey\thanks{Department of Computer Science, University of
British Columbia. {\tt nickhar@cs.ubc.ca}.
Supported by an NSERC Discovery Grant.}
\qquad
Cristiane M.\ Sato\thanks{Department of Combinatorics and Optimization,
University of Waterloo. {\tt cmsato@uwaterloo.ca}.  Partially supported by an NSERC Discovery Grant of N.~Wormald.}
}
\date{}
\begin{document}

\maketitle

\begin{abstract}
    Recently there has been much interest in ``sparsifying''
    sums of rank one matrices:
    modifying the coefficients such that only a few are nonzero,
    while approximately preserving the matrix that results from the sum.
    Results of this sort have found applications in many different areas,
    including sparsifying graphs.
    In this paper we consider the more general problem of sparsifying
    sums of positive semidefinite matrices that have arbitrary rank.

    We give several algorithms for solving this problem.
    The first algorithm is based on the method of Batson, Spielman and Srivastava (2009).
    The second algorithm is based on the matrix multiplicative weights update method of
    Arora and Kale (2007).
    We also highlight an interesting connection between these two algorithms.

    Our algorithms have numerous applications.
    We show how they can be used to construct graph sparsifiers with auxiliary constraints,
    sparsifiers of hypergraphs, and sparse solutions to semidefinite programs.
\end{abstract}

\thispagestyle{empty} \setcounter{page}{0} \clearpage

\section{Introduction}
\label{sec:intro}

A \newterm{sparsifier} of a graph is a subgraph that approximately preserves
some structural properties of the graph.
The original work in this area studied \newterm{cut sparsifiers},
which are weighted subgraphs that approximate every cut arbitrarily well.
The celebrated work of Bencz\'ur and Karger \cite{BKConf,BK} proved that every undirected graph with
$n$ vertices and $m$ edges (and potentially non-negative weights on its edges) has a subgraph with
only $O( n \log n / \eps^2 )$ edges (and new weights on those edges) such that, for
every cut, the weight of the cut in the original graph and its subgraph agree up to a multiplicative
factor of $(1 \pm \eps)$.
 Bencz\'ur and Karger also gave a randomized algorithm to construct a cut sparsifier in
 $\tilde{O}(m/\eps^2)$ time.
 Recent work has extended and improved their algorithm in various ways \cite{FHHP, FH, GKK, HP, HP2}.

Spielman and Teng~\cite{STConf} introduced \newterm{spectral sparsifiers}, which are weighted subgraphs
such that the quadratic forms defined by the Laplacians of the graph and the sparsifier
agree up to a multiplicative factor of $(1 \pm \eps)$.
Spectral sparsifiers are also cut sparsifiers,
as can be seen by evaluating these quadratic forms at $\set{0,1}$-vectors.
They proved that every undirected graph with $n$ vertices and $m$ edges (and potentially
non-negative weights on its edges) has a spectral sparsifier with only
$n \operatorname{polylog}(n) / \eps^2$
edges (and new weights on those edges).
Spielman and Srivastava \cite{SS08} reduce the graph sparsification problem to the following
abstract problem in matrix theory.

\begin{problem}
\label{prob:rank1}
Let $v_1,\ldots,v_m \in \Reals^n$ be vectors and let $B = \sum_i v_i v_i \transpose$.
Given $\eps \in (0,1)$, find a vector $y \in \Reals^m$ with small support such that $y \geq 0$ and
\begin{equation}
\label{eq:sparse-approx-rank1}
B ~\preceq~ \sum_i y_i v_i v_i \transpose ~\preceq~ (1+\eps) B.
\end{equation}
\end{problem}
\noindent (Here the notation $X \preceq Y$ means that the matrix $Y-X$ is
    positive semidefinite.)

Spielman and Srivastava~\cite{SS08} observe that Problem~\ref{prob:rank1} can be solved using
known concentration bounds on operator-valued random variables,
specifically Rudelson's sampling lemma \cite{Rudelson,RV07}.
This approach yields a vector $y$ with support size $O(n \log n / \eps^2)$,
and therefore yields a construction of spectral sparsifiers with $O(n \log n / \eps^2)$ edges.
Their algorithm relies on the linear system solver of Spielman and Teng \cite{STConf},
which was significantly simplified by Koutis, Miller and Peng \cite{KMP}.
Recent work \cite{KL} has improved the space usage of Spielman and Srivastava's algorithm.

In subsequent work, Batson, Spielman and Srivastava \cite{BSS09} give a deterministic algorithm that
solves Problem~\ref{prob:rank1} and produces a vector $y$ with support size $O(n / \eps^2)$.
Consequently they obtain improved spectral sparsifiers with $O(n / \eps^2)$ edges.
This work led to important progress in metric embeddings~\cite{NR,Schechtman},
convex geometry~\cite{Srivastava} and Banach space theory~\cite{SSRI}.

In this paper, we focus on a more general problem.

\begin{problem}
\label{prob:rankn}
Let $B_1,\ldots,B_m$ be symmetric, positive semidefinite matrices of size $n \times n$
and let $B = \sum_i B_i$.
Given $\eps \in (0,1)$, find a vector $y \in \Reals^m$ with small support such that $y \geq 0$ and
\begin{equation}
\label{eq:sparse-approx-rankn}
B ~\preceq~ \sum_i y_i B_i ~\preceq~ (1+\eps) B.
\end{equation}
\end{problem}

This problem can also be solved by known concentration bounds:
Ahlswede and Winter~\cite{AW} give a method for generalizing
Chernoff-like bounds to operator-valued random variables, and one of
their theorems \cite[Theorem 19]{AW} directly yields a solution to
Problem~\ref{prob:rankn}.  (Other expositions of these results also
exist \cite{V,H11}.)  This approach yields a vector $y$ with support
size $O(n \log n / \eps^2)$. See Section~\ref{sec:aw} for more
details.

This paper gives two improved solutions to Problem~\ref{prob:rankn}.
Our interest in this topic is motivated by several applications, such as
constructing sparsifiers with certain auxiliary properties and sparsifiers for hypergraphs.  We discuss
these applications in Section~\ref{sec:app}.

\subsection{Our Results}

We give several efficient algorithms for solving Problem~\ref{prob:rankn}.
Our strongest solution is:

\begin{theorem}
\label{thm:bssextension}
Let $B_1,\ldots,B_m$ be symmetric, positive semidefinite matrices of size $n \times n$
and arbitrary rank.
Set $B := \sum_i B_i$.
For any $\eps \in (0,1)$, there is a deterministic algorithm to construct a vector $y \in \Reals^m$ with
$O(n / \eps^2)$ nonzero entries such that $y \geq 0$ and
$$
B ~\preceq~ \sum_i y_i B_i ~\preceq~ (1+\eps) B.
$$
The algorithm runs in $O(mn^3/\eps^2)$ time. Moreover, the result
continues to hold if the input matrices $B_1, \dotsc, B_m$ are Hermitian
and positive semidefinite.
\end{theorem}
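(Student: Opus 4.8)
The plan is to push the barrier‑function argument of Batson, Spielman and Srivastava through, with arbitrary positive semidefinite terms $B_i$ in place of rank‑one terms $v_iv_i\transpose$. First I would reduce to the normalized case $B = I$: since $B_i \preceq B$ for every $i$, the range of each $B_i$ lies in the range of $B$, so after passing to that subspace and replacing $B_i$ by $B^{+/2}B_iB^{+/2}$ (with $B^{+/2}$ the square root of the pseudo‑inverse) we may assume $\sum_i B_i = I_n$ and seek $y \ge 0$ of small support with $I \preceq \sum_i y_iB_i \preceq (1+\eps)I$. The algorithm then grows $A := \sum_i y_iB_i$ one term at a time over $Q = O(n/\eps^2)$ iterations, maintaining an upper barrier $\Phi^u(A) := \trace\paren{(uI-A)^{-1}}$ and a lower barrier $\Phi_l(A) := \trace\paren{(A-lI)^{-1}}$ with all eigenvalues of $A$ trapped strictly between $l$ and $u$; the scalars $l$ and $u$ are advanced by fixed increments $\delta_L < \delta_U$ each step, and after $Q$ steps rescaling $y$ by $1/l_Q$ gives a $(1+O(\eps))$ sandwich, the constants being chosen so the final ratio is at most $1+\eps$. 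Undoing the change of basis finishes.

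The technical heart is a one‑step lemma for matrix‑valued updates. Adding $tB_i$ to $A$ while advancing $u$ to $u' := u+\delta_U$ does not increase the upper barrier provided $t^{-1} \ge U_A(B_i)$, where, with $M' := u'I - A$,
\[
U_A(B_i) ~:=~ \frac{\trace\paren{M'^{-2}B_i}}{\Phi^u(A) - \Phi^{u'}(A)} + \trace\paren{M'^{-1}B_i};
\]
symmetrically, with $N' := A - l'I$ and $l' := l+\delta_L$, the lower barrier does not increase provided $t^{-1} \le L_A(B_i)$, where
\[
L_A(B_i) ~:=~ \frac{\trace\paren{N'^{-2}B_i}}{\Phi_{l'}(A) - \Phi_l(A)} - \trace\paren{N'^{-1}B_i},
\]
together with a gap condition ensuring $l' < \lambda_{\min}(A)$. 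I would derive these from the Woodbury identity applied to $(M' - tB_i)^{-1}$ and $(N' + tB_i)^{-1}$: writing $B_i = C_iC_i\transpose$, expand, take traces, and bound the ``middle'' factor $(I \mp tC_i\transpose M'^{-1}C_i)^{-1}$ in the positive semidefinite order through its operator norm, which is $\norm{M'^{-1/2}B_iM'^{-1/2}}$. In the rank‑one case this operator norm \emph{is} a trace; here I would simply bound it by $\trace\paren{M'^{-1}B_i}$, and this is the one place the argument appears to lose something per matrix.

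That loss washes out at the averaging step, which is the real reason the extension works. To obtain a usable index it suffices that $\sum_i U_A(B_i) \le \sum_i L_A(B_i)$, since then some term has $U_A(B_i) \le L_A(B_i)$ and we may add $tB_i$ for any $t \in [1/L_A(B_i),\,1/U_A(B_i)]$ without raising either barrier. Because $\sum_i B_i = I$, these sums collapse to
\[
\sum_i U_A(B_i) = \frac{\trace\paren{M'^{-2}}}{\Phi^u(A)-\Phi^{u'}(A)} + \trace\paren{M'^{-1}}, \qquad \sum_i L_A(B_i) = \frac{\trace\paren{N'^{-2}}}{\Phi_{l'}(A)-\Phi_l(A)} - \trace\paren{N'^{-1}},
\]
which are precisely the aggregate quantities governing the Batson--Spielman--Srivastava analysis; hence their choice of $\delta_U$, $\delta_L$, initial barriers, and iteration count $Q = O(n/\eps^2)$ transfers verbatim to show $\sum_i U_A(B_i) \le \sum_i L_A(B_i)$ at every step and that the eigenvalues of $A$ stay trapped with the gap condition intact. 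The main obstacle is therefore exactly to confirm that replacing the operator norm by a trace in the one‑step lemma only weakens the \emph{per‑term} bounds while leaving the \emph{summed} bound untouched, and that the Woodbury expansions are valid for singular $B_i$; once that is in hand, everything downstream is a transcription of the rank‑one proof.

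For the running time, each iteration forms $M'^{-1}, M'^{-2}, N'^{-1}, N'^{-2}$ and the spectrum of $A$ in $O(n^3)$ time and then the $2m$ traces $\trace\paren{M'^{-j}B_i}$, $\trace\paren{N'^{-j}B_i}$ as Frobenius inner products in $O(mn^2)$ time, so $O(n^3 + mn^2)$ per step and $O(mn^3/\eps^2)$ overall (the interesting regime being $m = \Omega(n)$, as otherwise one keeps all the matrices). The Hermitian case is identical with conjugate transpose in place of transpose throughout, all the eigenvalues and traces involved remaining real.
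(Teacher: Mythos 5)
Your proposal follows essentially the same route as the paper: reduce to $B=I$, run the BSS barrier argument with $U_A(X)$ and $L_A(X)$ defined via $\iprod{M^{-2}}{X}$ and $\iprod{M^{-1}}{X}$, prove the one-step lemmas by Sherman--Morrison--Woodbury with $V=X^{1/2}$ and the bound $\lambda_{\max}(\qform{M^{-1}}{V})\leq\trace(\qform{M^{-1}}{V})=\iprod{M^{-1}}{X}$, and close with the averaging argument $\sum_i U_A(B_i)\leq\sum_i L_A(B_i)$, which collapses to the original BSS aggregate quantities since $\sum_i B_i=I$. You correctly identified the one genuinely new ingredient (operator norm replaced by trace per term, with no loss after summing) and the running-time and Hermitian extensions match the paper as well.
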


Our proof of Theorem \ref{thm:bssextension} is quite simple and builds
on results of Batson, Spielman and Srivastava \cite{BSS09}. We remark
that the assumption that the $B_i$'s are positive semidefinite cannot be
removed; see Appendix~\ref{sec:psd-assumption}.

\vspace{6pt}

We also give a second solution to Problem~\ref{prob:rankn} which is quantitatively weaker,
although it is based on very general machinery which might prove useful in further
applications or generalizations of Problem~\ref{prob:rankn}.
This second solution is based on the matrix multiplicative weights update method (MMWUM)
of Arora and Kale~\cite{AK07,Kale07}.
By a black-box application of their theorems we obtain a deterministic algorithm
to construct a vector $y$ with $O(n \log n / \eps^3)$ nonzero entries.
By slightly refining their analysis we can improve the number of nonzero entries
to $O(n \log n / \eps^2)$.
We remark that Orecchia and Vishnoi \cite{OV} have used MMWUM for solving the balanced
separator problem; this can be used as a subroutine in Spielman and Teng's algorithm for constructing
spectral sparsifiers.

Another virtue of our second solution is that it illustrates that the
surprising Batson-Spielman-Srivastava (BSS) algorithm
is actually closely related to MMWUM.
In particular, the algorithms underlying our two solutions are \emph{identical},
except for the use of slightly different potential functions.
We explain this connection in Section \ref{sec:bss-connection}.

\subsection{Applications}
\label{sec:app}

In this section, we present several applications of Problem~\ref{prob:rankn}.
Proofs are given in Appendix~\ref{app:app}.

\paragraph{Sparsifiers with costs.}

\newcommand{\corcosts}
{
Let $G = (V,E)$ be a graph, let $w \ffrom E \to \Reals_+$ be a
weight function, and let $c_1, \dotsc, c_k \ffrom E \fto \Reals_+$ be cost functions, with $k = O(n)$.
Let $\Laplacian[G]{w}$ denote the Laplacian matrix for graph $G$ with weight function $w$.
For any real $\eps\in (0,1)$, there is a deterministic polynomial-time algorithm
to find a subgraph $H$ of $G$ and a weight function $w_H\ffrom E(H)\to \Reals_+$ such that
\begin{gather*}
\Laplacian[G]{w} ~\preceq~ \Laplacian[H]{w_H} ~\preceq~ (1+\eps)\Laplacian[G]{w},
\\
\sum_{e\in E} w_e c_{i,e}
    ~\leq~ \sum_{e\in E(H)} w_{H,e} c_{i,e}
    ~\leq~ (1+\eps)\sum_{e\in E} w_e c_{i,e}
    \qquad \text{for all $i$}
\end{gather*}
and $\card{E(H)} = O(n / \eps^2)$.
}

\begin{corollary}
\label{cor:sparsifier-costs}
\corcosts
\end{corollary}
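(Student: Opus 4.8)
The plan is to reduce Corollary~\ref{cor:sparsifier-costs} to Problem~\ref{prob:rankn} by a block-diagonal encoding and then invoke \Theorem{bssextension}. For each edge $e = uv \in E$, let $b_e \in \Reals^n$ be the vector with a $+1$ and a $-1$ in the two coordinates indexed by the endpoints of $e$ and zeros elsewhere, so that $\Laplacian[G]{w} = \sum_{e \in E} w_e \oprodsym{b_e}$. I would then attach to each edge $e$ the $(n+k)\times(n+k)$ matrix
$$
B_e ~:=~ \Diag\paren{ w_e\,\oprodsym{b_e},\ w_e c_{1,e},\ \dotsc,\ w_e c_{k,e} },
$$
whose first (size-$n$) block records the Laplacian contribution of $e$ and whose remaining $k$ diagonal entries record the weighted costs of $e$. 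Since $w_e \geq 0$ and $c_{i,e} \geq 0$ for all $i$, each $B_e$ is symmetric and positive semidefinite, so $\set{B_e}_{e\in E}$ is a valid instance of Problem~\ref{prob:rankn}, and its sum is
$$
B ~:=~ \sum_{e\in E} B_e ~=~ \Diag\paren{\textstyle \Laplacian[G]{w},\ \sum_{e\in E} w_e c_{1,e},\ \dotsc,\ \sum_{e\in E} w_e c_{k,e} }.
$$

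Next I would apply \Theorem{bssextension} to $\set{B_e}_{e\in E}$ with parameter $\eps$. Because $k = O(n)$, these matrices have size $O(n)$, so the theorem yields in deterministic polynomial time (running time $O(m(n+k)^3/\eps^2) = O(mn^3/\eps^2)$) a vector $y \geq 0$ with $O((n+k)/\eps^2) = O(n/\eps^2)$ nonzero entries such that $B \preceq \sum_{e\in E} y_e B_e \preceq (1+\eps)B$. Since a block-diagonal matrix is positive semidefinite exactly when every one of its diagonal blocks is, this single matrix inequality is equivalent to the conjunction of
$$
\Laplacian[G]{w} ~\preceq~ \sum_{e\in E} y_e w_e \oprodsym{b_e} ~\preceq~ (1+\eps)\Laplacian[G]{w}
$$
(coming from the first block) together with $\sum_{e\in E} w_e c_{i,e} \leq \sum_{e\in E} y_e w_e c_{i,e} \leq (1+\eps)\sum_{e\in E} w_e c_{i,e}$ for every $i$ (coming from the $k$ scalar blocks).

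Finally I would read off the subgraph: set $E(H) := \setst{e\in E}{y_e > 0}$ and $w_{H,e} := y_e w_e$ for $e \in E(H)$. Then $\Laplacian[H]{w_H} = \sum_{e\in E(H)} w_{H,e}\oprodsym{b_e} = \sum_{e\in E} y_e w_e \oprodsym{b_e}$ and $\sum_{e\in E(H)} w_{H,e} c_{i,e} = \sum_{e\in E} y_e w_e c_{i,e}$, so both displayed chains of inequalities in the statement hold, and $\card{E(H)} = O(n/\eps^2)$ because $y$ has that many nonzero entries. There is no real obstacle here beyond setting up the encoding; the two points that require care are that the costs and weights are non-negative, which is exactly what keeps each scalar block $w_e c_{i,e}$ positive semidefinite (and the positive-semidefiniteness hypothesis of \Theorem{bssextension} cannot be dropped; see Appendix~\ref{sec:psd-assumption}), and that $k = O(n)$, which is what keeps the ambient dimension $O(n)$ and hence the support size $O(n/\eps^2)$ rather than $O((n+k)/\eps^2)$.
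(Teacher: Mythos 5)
Your proof is correct and takes essentially the same approach as the paper: the paper likewise forms, for each edge $e$, the direct sum $w_e\bigl[(e_i-e_j)(e_i-e_j)^T \oplus c_{1,e} \oplus \dotsm \oplus c_{k,e}\bigr]$ and applies Theorem~\ref{thm:bssextension}, with the $k=O(n)$ hypothesis ensuring the ambient dimension, and hence the support bound, stays $O(n/\eps^2)$. Your write-up just spells out the block-by-block equivalence and the extraction of $H$ and $w_H$ in more detail.
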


The inequalities
$\Laplacian[G]{w} \preceq \Laplacian[H]{w_H} \preceq (1+\eps)\Laplacian[G]{w}$
are equivalent to the condition that the subgraph $H$
(with weights~$w_H$) is a spectral sparsifier of $G$ (with weights $w$).
We remark that existing methods for producing sparsifiers
have low probability of approximately satisfying even a single cost function
(i.e., the case $k=1$).

One potentially interesting application of sparsifiers with costs is as follows.

\newcommand{\corrainbow}{
    Let $G = (V,E)$ be a graph and
    let $w \ffrom E \to \Reals_+$ be a weight function.
    Let $E_1,\ldots,E_k$ be a partition of the edges,
    i.e., each edge is colored with one of $k$ colors.
    For any real $\eps\in (0,1)$, there is a deterministic polynomial-time algorithm to find a
    subgraph $H$ of $G$ and a weight function $w_H\ffrom E(H)\to \Reals_+$ such that
    \begin{gather*}
    \Laplacian[G]{w} ~\preceq~ \Laplacian[H]{w_H} ~\preceq~ (1+\eps)\Laplacian[G]{w}, \\
    (1-\eps)\sum_{e\in E_i} w_e
        ~\leq~ \sum_{e\in E(H) \cap E_i} w_{H,e}
        ~\leq~ (1+\eps) \sum_{e\in E_i} w_e
        \qquad\text{for all $i$,}
    \end{gather*}
    and $\card{E(H)} = O((n+k) / \eps^2)$.
}

\begin{corollary}[Rainbow Sparsifiers]
\label{cor:rainbow}
\corrainbow
\end{corollary}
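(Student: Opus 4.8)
The plan is to reduce this statement to Theorem~\ref{thm:bssextension}, applied in dimension $n+k$, via a block-diagonal gadget that records, alongside each single-edge Laplacian, the color class of that edge. For an edge $e = uv \in E_i$, let $\chi_u \in \Reals^n$ be the indicator vector of vertex $u$, let $L_e := \oprodsym{(\chi_u - \chi_v)} \in \Psd[n]$ be the Laplacian of the single unit-weight edge $e$, and let $f_i \in \Reals^k$ be the $i$-th standard basis vector. Define
\[
B_e ~:=~ \begin{pmatrix} w_e L_e & 0 \\ 0 & w_e\,\oprodsym{f_i} \end{pmatrix} ~\in~ \Psd[n+k].
\]
Each $B_e$ is positive semidefinite, being block diagonal with positive semidefinite blocks, and $B := \sum_{e \in E} B_e$ is block diagonal with top-left block $\Laplacian[G]{w}$ and bottom-right block $D := \Diag(d)$, where $d_i := \sum_{e \in E_i} w_e$; here we use that $E_1, \dots, E_k$ is a partition, so each $B_e$ is supported on a single coordinate of its second block.

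First I would feed $\set{B_e}_{e \in E}$ and $\eps$ into Theorem~\ref{thm:bssextension}, with the ``$n$'' of that theorem taken to be $n+k$, obtaining $y \in \Reals^E$ with $y \ge 0$, with $\card{\setst{e \in E}{y_e \ne 0}} = O((n+k)/\eps^2)$, and with $B \preceq \sum_e y_e B_e \preceq (1+\eps) B$. Set $E(H) := \setst{e \in E}{y_e \ne 0}$ and $w_{H,e} := y_e w_e$. Because the order $\preceq$ passes to every principal submatrix, restricting the sandwich inequality to the top-left $n \times n$ block gives $\Laplacian[G]{w} \preceq \sum_e y_e w_e L_e = \Laplacian[H]{w_H} \preceq (1+\eps)\Laplacian[G]{w}$, the required spectral-sparsifier condition; restricting to the bottom-right $k \times k$ block gives $D \preceq \Diag(d') \preceq (1+\eps) D$, where $d'_i := \sum_{e \in E(H) \cap E_i} w_{H,e}$. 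Since $D$ and $\Diag(d')$ are diagonal, this is equivalent to $d_i \le d'_i \le (1+\eps) d_i$ for every $i$, which implies the claimed two-sided bound (the stated lower bound $(1-\eps) d_i$ is weaker than $d_i$). The running time is that of Theorem~\ref{thm:bssextension} at parameter $n+k$, namely $O(m(n+k)^3/\eps^2)$, plus polynomial overhead to form the $B_e$'s; and $\card{E(H)} = O((n+k)/\eps^2)$ as required.

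I do not expect a real obstacle: the whole content lies in choosing the gadget, after which everything is a black-box invocation of Theorem~\ref{thm:bssextension} together with the elementary facts that $\preceq$ restricts to principal submatrices and, for diagonal matrices, to individual coordinates. The only points worth a line of care are the use of the partition hypothesis (to identify the bottom-right block of $B$ with $\Diag(d)$) and empty color classes, for which $d_i = 0$ and the $i$-th scalar constraint reads $0 \le 0 \le 0$ and so holds vacuously. As an aside, when $k = O(n)$ the corollary is also immediate from Corollary~\ref{cor:sparsifier-costs} with cost functions $c_{i,e} := 1$ for $e \in E_i$ and $c_{i,e} := 0$ otherwise; the reduction above is what handles arbitrary $k$.
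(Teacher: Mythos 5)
Your proof is correct and is essentially the paper's proof unrolled: the paper applies Corollary~\ref{cor:sparsifier-costs} with $c_i$ taken to be the characteristic vector of $E_i$, and that corollary's own proof uses exactly your block-diagonal gadget $w_e\sqbrac[\big]{L_e \oplus c_{1,e} \oplus \dotsm \oplus c_{k,e}}$ followed by a black-box call to Theorem~\ref{thm:bssextension}. Your inlined version has the minor advantage of not formally depending on the $k = O(n)$ hypothesis stated in Corollary~\ref{cor:sparsifier-costs}, which is what justifies the $O((n+k)/\eps^2)$ bound for arbitrary $k$.
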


\paragraph{Hypergraph sparsifiers.} Let $\Hcal = (V,\Ecal)$ be a
hypergraph, and let $w \ffrom \Ecal \fto \Reals_+$. We follow the
definition of Laplacian for hypergraphs as in~\cite{Rodriguez02}.  For
each hyperedge $E \in \Ecal$, define its Laplacian $\Lcal_E$ as the
graph Laplacian of a graph on~$V$ whose edge set forms a clique
on~$E$. Define the Laplacian for the hypergraph~$\Hcal$ with weight
function~$w$ as the matrix $\Lcal_{\Hcal}(w) := \sum_{E \in \Ecal} w_E
\Lcal_E$.

\newcommand{\corhyp}{
  For any real $\eps\in (0,1)$, there
  is a deterministic polynomial-time algorithm to find a sub-hypergraph
  $\Gcal$ of $\Hcal$ and a weight function $w_{\Gcal}\ffrom
  \Ecal(\Gcal)\to \Reals_+$ such that
    $$
    \Laplacian[\Hcal]{w} ~\preceq~ \Laplacian[\Gcal]{w_{\Gcal}} ~\preceq~ (1+\eps)\Laplacian[\Hcal]{w},
    $$
  and $\card{\Ecal(\Gcal)} = O(n / \eps^2)$.
}

\begin{corollary}[Spectral sparsifiers for hypergraphs]
  \label{cor:hypergraph-sparsifier}
  \corhyp
\end{corollary}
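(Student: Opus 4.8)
The plan is to apply Theorem~\ref{thm:bssextension} essentially as a black box, with each weighted hyperedge clique contributing one of the input matrices. I would write $\Ecal = \set{E_1,\dotsc,E_m}$ and, for each $i$, set $B_i := w_{E_i}\Lcal_{E_i}$, where $\Lcal_{E_i}$ is regarded as an $n \times n$ matrix (the Laplacian of the clique on $E_i$, with zero rows and columns for the vertices outside $E_i$). Each $\Lcal_{E_i}$ is a graph Laplacian and is therefore symmetric positive semidefinite; since $w_{E_i} \geq 0$, so is $B_i$. Moreover $B := \sum_i B_i = \Laplacian[\Hcal]{w}$ by the definition of the hypergraph Laplacian, so Problem~\ref{prob:rankn} for this family of matrices is exactly the problem of spectrally sparsifying $\Hcal$.

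Next I would feed $B_1,\dotsc,B_m$ and the given $\eps$ into Theorem~\ref{thm:bssextension}. This yields, in $O(mn^3/\eps^2)$ time, a vector $y \in \Reals^m$ with $y \geq 0$, with $O(n/\eps^2)$ nonzero entries, and with
$$
\Laplacian[\Hcal]{w} ~=~ B ~\preceq~ \sum_i y_i B_i ~\preceq~ (1+\eps)B ~=~ (1+\eps)\Laplacian[\Hcal]{w}.
$$
I would then take $\Gcal$ to be the sub-hypergraph of $\Hcal$ with hyperedge set $\setst{E_i}{y_i > 0}$ and define $w_{\Gcal}$ by $w_{\Gcal,E_i} := y_i w_{E_i}$ on those hyperedges. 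Since $\Laplacian[\Gcal]{w_{\Gcal}} = \sum_{i \,:\, y_i > 0} y_i w_{E_i}\Lcal_{E_i} = \sum_i y_i B_i$, the displayed chain of inequalities is precisely the claimed spectral approximation, and $\card{\Ecal(\Gcal)}$ equals the support size of $y$, which is $O(n/\eps^2)$.

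I do not expect a genuine obstacle: all the real work is inside Theorem~\ref{thm:bssextension}, and the only things to check are that clique Laplacians are positive semidefinite and that rescaling the surviving hyperedges by the coefficients $y_i$ turns the matrix inequality $B \preceq \sum_i y_i B_i \preceq (1+\eps)B$ into the desired one for $\Laplacian[\Gcal]{w_{\Gcal}}$; both are immediate. The one remark I would add is that $m = \card{\Ecal}$ can be exponential in $\card{V}$, so ``polynomial time'' here means polynomial in the input, which explicitly lists all hyperedges; handling hypergraphs specified implicitly would require more care but is not asked for.
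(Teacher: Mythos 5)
Your proposal is correct and is exactly the paper's argument: the paper's proof consists of the single sentence ``apply Theorem~\ref{thm:bssextension} to the matrices $w_E\Lcal_E$,'' and you have simply spelled out the routine details (positive semidefiniteness of clique Laplacians, identification of $\sum_i B_i$ with $\Laplacian[\Hcal]{w}$, and the reweighting $w_{\Gcal,E_i} := y_i w_{E_i}$). Nothing is missing.
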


This corollary concerns spectral sparsifiers.
It is also interesting to study sparsifiers that approximately preserve all cuts.
There are several ways to extend the definition of ``the weight of a cut''
from ordinary graphs to hypergraphs.
We consider the following two definitions,
where $S$ is any set of vertices in a hypergraph $\Hcal$ with edge weights $w$.
\begin{itemize}

\item $w(\delta_{\Hcal}(S))$: This is the sum of
the weights of all hyperedges that contain at least one vertex in~$S$ and
at least one vertex in $\overline{S} := V \drop S$.

\item $w^*(\delta_{\Hcal}(S))$: This is defined to be
$ \sum_{E \in \Ecal} w_E \cdot \card{S \cap E} \cdot \card{\overline{S} \cap E}$.
\end{itemize}
Obviously these definitions agree in ordinary graphs.

\newcommand{\corhyptwo}{
  For any real $\eps\in (0,1)$, there
  is a deterministic polynomial-time algorithm to find a sub-hypergraph
  $\Gcal$ of $\Hcal$ and a weight function $w_{\Gcal}\ffrom \Ecal(\Gcal)\to \Reals_+$ such that
    $$
        w^*(\delta_{\Hcal}(S))
        ~\leq~ w_{\Gcal}^*(\delta_{\Gcal}(S))
        ~\leq~ (1+\eps) w^*(\delta_{\Hcal}(S))
        \qquad\text{for every~$S \subseteq V$},
    $$
    and $\card{\Ecal(\Gcal)} = O(n / \eps^2)$.
}

\begin{corollary}[Cut sparsifiers for hypergraphs, second definition]
\label{cor:hyp2}
\corhyptwo
\end{corollary}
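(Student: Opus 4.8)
The plan is to reduce the statement directly to the spectral result of Corollary~\ref{cor:hypergraph-sparsifier}, in the same way that a spectral sparsifier of an ordinary graph is automatically a cut sparsifier once one restricts attention to $\set{0,1}$-valued vectors. The only thing that needs to be checked is an elementary identity relating the measure $w^*$ to the hypergraph Laplacian quadratic form; everything else is a black-box invocation.

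First I would record that identity. Fix $S \subseteq V$ and let $\chi_S \in \Reals^V$ be its characteristic vector. For a single hyperedge $E \in \Ecal$, the graph on $V$ whose edge set is the clique on $E$ has Laplacian $\Lcal_E$ by definition, so
\[
\transp{\chi_S} \Lcal_E \chi_S
~=~ \sum_{\set{u,v} \subseteq E} \bigl(\chi_S(u) - \chi_S(v)\bigr)^2
~=~ \card{S \cap E}\cdot\card{\overline{S} \cap E},
\]
since the middle sum counts exactly the pairs $\set{u,v} \subseteq E$ with precisely one endpoint in $S$. Summing over $E \in \Ecal$ with weights $w_E$ and recalling $\Laplacian[\Hcal]{w} = \sum_{E \in \Ecal} w_E \Lcal_E$ gives $\transp{\chi_S}\Laplacian[\Hcal]{w}\chi_S = w^*(\delta_{\Hcal}(S))$ for every $S$, and likewise $\transp{\chi_S}\Laplacian[\Gcal]{w_{\Gcal}}\chi_S = w_{\Gcal}^*(\delta_{\Gcal}(S))$ for any sub-hypergraph $\Gcal$ with weights $w_{\Gcal}$.

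Given this, I would simply invoke Corollary~\ref{cor:hypergraph-sparsifier} to obtain, in deterministic polynomial time, a sub-hypergraph $\Gcal$ and a weight function $w_{\Gcal} \ffrom \Ecal(\Gcal) \to \Reals_+$ with $\card{\Ecal(\Gcal)} = O(n/\eps^2)$ and
\[
\Laplacian[\Hcal]{w} ~\preceq~ \Laplacian[\Gcal]{w_{\Gcal}} ~\preceq~ (1+\eps)\Laplacian[\Hcal]{w}.
\]
Evaluating the quadratic form $x \mapsto \transp{x}(\cdot)\,x$ of each of these two positive semidefinite inequalities at $x = \chi_S$, for every $S \subseteq V$, and substituting the identity above, yields $w^*(\delta_{\Hcal}(S)) \le w_{\Gcal}^*(\delta_{\Gcal}(S)) \le (1+\eps)\,w^*(\delta_{\Hcal}(S))$, which is exactly the claim.

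I do not expect any real obstacle here: the whole content is the first step, namely that under the $w^*$ definition the cut value of $S$ coincides with the clique-Laplacian quadratic form evaluated at the $\set{0,1}$-vector $\chi_S$, so the cut statement is literally the restriction of the spectral statement to characteristic vectors. (This is precisely what breaks down for the other notion $w(\delta_{\Hcal}(S))$, which admits no such clean quadratic-form representation and must be handled by a separate argument.)
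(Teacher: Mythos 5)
Your proof is correct and follows exactly the same route as the paper's: identify $w^*(\delta_{\Hcal}(S))$ with the quadratic form $\transp{\chi_S}\Laplacian[\Hcal]{w}\chi_S$ and then apply Corollary~\ref{cor:hypergraph-sparsifier} as a black box. The paper states this in two sentences; you have merely spelled out the verification of the identity, which is fine.
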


\newcommand{\corhypthree}{
  Assume that $\Hcal$ is an $r$-uniform hypergraph.
  For any real $\eps\in (0,1)$, there
  is a deterministic polynomial-time algorithm to find a sub-hypergraph
  $\Gcal$ of $\Hcal$ and a weight function $w_{\Gcal}\ffrom \Ecal(\Gcal)\to \Reals_+$ such that
    $$
        \frac{(r-1)}{r^2/4}w(\delta_{\Hcal}(S))
            ~\leq~ w_{\Gcal}(\delta_{\Gcal}(S))
            ~\leq~ \frac{(1+\eps)r^2}{4(r-1)} w(\delta_{\Hcal}(S))
            \qquad \forall S \subseteq V,
    $$
    and $\card{\Ecal(\Gcal)} = O(n / \eps^2)$.
    In other words, the sparsified hypergraph~$\Gcal$
    approximates the weight of the cuts in the hypergraph~$\Hcal$ to within
    a factor~$\Theta(r^2)$.
}

\begin{corollary}[Cut sparsifiers for hypergraphs, first definition]
\label{cor:hyp3}
\corhypthree
\end{corollary}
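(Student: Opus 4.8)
The plan is to deduce Corollary~\ref{cor:hyp3} from Corollary~\ref{cor:hyp2} (the cut sparsifier for the second definition) together with an elementary comparison, valid on $r$-uniform hypergraphs, between the two hypergraph cut functions $w^*(\delta(\cdot))$ and $w(\delta(\cdot))$. Recall that Corollary~\ref{cor:hyp2} itself follows from the spectral sparsifier of Corollary~\ref{cor:hypergraph-sparsifier} by evaluating the Laplacian quadratic forms at indicator vectors $\mathbf{1}_S$: for a hyperedge $E$ the clique Laplacian satisfies $\qform{\Lcal_E}{\mathbf{1}_S}=\card{S\cap E}\cdot\card{\overline{S}\cap E}$, the number of clique edges crossing the cut $(S,\overline{S})$, so $\qform{\Laplacian[\Hcal]{w}}{\mathbf{1}_S}=w^*(\delta_{\Hcal}(S))$ and likewise for $\Gcal$.

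I would first apply Corollary~\ref{cor:hyp2} to $\Hcal$ with weights $w$, obtaining a sub-hypergraph $\Gcal$ and weights $w_{\Gcal}$ with $\card{\Ecal(\Gcal)}=O(n/\eps^2)$ and
\[
w^*(\delta_{\Hcal}(S)) \;\le\; w_{\Gcal}^*(\delta_{\Gcal}(S)) \;\le\; (1+\eps)\,w^*(\delta_{\Hcal}(S)) \qquad\text{for every } S\subseteq V.
\]
Next I would establish the combinatorial comparison. Fix $S\subseteq V$ and a hyperedge $E$ with $\card E=r$. If $E$ is not cut by $S$ it contributes $0$ to both $w(\delta(S))$ and $w^*(\delta(S))$; if it is cut, then setting $j:=\card{S\cap E}$ we have $1\le j\le r-1$ and hence $r-1\le j(r-j)\le\floor{r^2/4}\le r^2/4$. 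Summing over the hyperedges of $\Hcal$, weighted by $w$, and separately over those of $\Gcal$, weighted by $w_{\Gcal}$ --- observe that $\Gcal$ is again $r$-uniform, since $\Ecal(\Gcal)\subseteq\Ecal$ --- yields
\[
(r-1)\,w(\delta(S)) \;\le\; w^*(\delta(S)) \;\le\; \tfrac{r^2}{4}\,w(\delta(S))
\]
for both $\Hcal$ and $\Gcal$.

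Chaining these displays will then finish the proof. From the lower bounds, $w_{\Gcal}(\delta_{\Gcal}(S))\ge \tfrac{4}{r^2}\,w_{\Gcal}^*(\delta_{\Gcal}(S))\ge \tfrac{4}{r^2}\,w^*(\delta_{\Hcal}(S))\ge \tfrac{4(r-1)}{r^2}\,w(\delta_{\Hcal}(S))=\frac{(r-1)}{r^2/4}\,w(\delta_{\Hcal}(S))$; and from the upper bounds, $w_{\Gcal}(\delta_{\Gcal}(S))\le\tfrac{1}{r-1}\,w_{\Gcal}^*(\delta_{\Gcal}(S))\le\tfrac{1+\eps}{r-1}\,w^*(\delta_{\Hcal}(S))\le\frac{(1+\eps)r^2}{4(r-1)}\,w(\delta_{\Hcal}(S))$. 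These are precisely the two inequalities in the statement.

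I do not expect a genuine obstacle: the argument is just a transfer of a spectral guarantee plus the elementary estimate $r-1\le j(r-j)\le\lfloor r^2/4\rfloor$ for $1\le j\le r-1$. The one point worth emphasising is that this estimate is the \emph{only} source of the $\Theta(r^2)$ loss in the statement --- the sparsification itself is $(1+\eps)$-accurate, but converting $w^*$ to $w$ costs a factor $r^2/4$ on one side and $r-1$ on the other, because a single cut hyperedge may be ``balanced'' (contributing $\approx r^2/4$ to $w^*$) or ``lopsided'' (contributing only $r-1$). One should also double-check that $\Gcal$ inherits $r$-uniformity and the bound $\card{\Ecal(\Gcal)}=O(n/\eps^2)$ straight from Corollary~\ref{cor:hyp2}, so nothing extra is needed there.
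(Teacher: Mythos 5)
Your proposal is correct and follows the paper's own route: apply the $w^*$ cut-sparsifier guarantee (Corollary~\ref{cor:hyp2}, itself the spectral sparsifier evaluated at indicator vectors) and then chain it with the elementary comparison $(r-1)\,w(\delta(S)) \le w^*(\delta(S)) \le \floor{r/2}\ceil{r/2}\,w(\delta(S)) \le \tfrac{r^2}{4}\,w(\delta(S))$, valid for both $\Hcal$ and the (still $r$-uniform) sub-hypergraph $\Gcal$. Your write-up just makes explicit the chaining that the paper leaves implicit.
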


For the special case $r = 3$, we can achieve $(1+\eps)$-approximate sparsification
for all cuts, even under the first definition.

\newcommand{\threeunif}{
  Assume that $\Hcal$ is a $3$-uniform hypergraph.
  For any $\eps\in (0,1)$, there
  is a deterministic polynomial-time algorithm to find a sub-hypergraph
  $\Gcal$ of $\Hcal$ and a weight function $w_{\Gcal}\ffrom \Ecal(\Gcal)\to \Reals_+$ such that
  \[
      w(\delta_{\Hcal}(S))
        ~\leq~ w_{\Gcal}(\delta_{\Gcal}(S))
        ~\leq~ (1+\eps) w(\delta_{\Hcal}(S))
        \qquad \forall S \subseteq V,
  \]
  and $\card{\Ecal(\Gcal)} = O(n/\eps^2)$.
}

\begin{corollary}[Cut sparsifiers for 3-uniform hypergraphs]
\label{cor:cut-sparsifiers}
\threeunif
\end{corollary}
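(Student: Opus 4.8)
The plan is to reduce Corollary~\ref{cor:cut-sparsifiers} to Corollary~\ref{cor:hyp2}: the latter already produces, for an arbitrary hypergraph, a sparsifier that $(1+\eps)$-approximates the cut function $w^{*}(\delta(\cdot))$, and for $3$-uniform hypergraphs this function is \emph{exactly} twice the function $w(\delta(\cdot))$. Hence a multiplicative guarantee for the $w^{*}$-cuts transfers verbatim to the $w$-cuts, with the same bound $O(n/\eps^{2})$ on the number of hyperedges and the same polynomial running time.

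The one computation needed is a pointwise identity. Fix a $3$-element set $E$ and a set $S\subseteq V$, and write $k := \card{S\cap E}$, so that $\card{\overline{S}\cap E} = 3-k$. As $k$ ranges over $\set{0,1,2,3}$, the product $\card{S\cap E}\cdot\card{\overline{S}\cap E} = k(3-k)$ takes the values $0,2,2,0$, whereas $E$ is cut by $S$ (that is, $E\cap S\neq\emptyset$ and $E\cap\overline{S}\neq\emptyset$) precisely when $k\in\set{1,2}$. Thus $\card{S\cap E}\cdot\card{\overline{S}\cap E}$ equals $2$ when $E$ is cut by $S$ and $0$ otherwise. Summing this over all hyperedges against any weight function $w$ gives
\[
    w^{*}(\delta_{\Hcal}(S)) ~=~ 2\,w(\delta_{\Hcal}(S)) \qquad\text{for every } S\subseteq V,
\]
and, since every sub-hypergraph of a $3$-uniform hypergraph is again $3$-uniform, the same identity holds for $\Gcal$ with any weight function, i.e.\ $w^{*}_{\Gcal}(\delta_{\Gcal}(S)) = 2\,w_{\Gcal}(\delta_{\Gcal}(S))$ for all $S$.

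Next I would apply Corollary~\ref{cor:hyp2} to $\Hcal$ with weights $w$ to obtain a sub-hypergraph $\Gcal$ and $w_{\Gcal}\ffrom\Ecal(\Gcal)\to\Reals_{+}$ with $\card{\Ecal(\Gcal)} = O(n/\eps^{2})$ and
\[
    w^{*}(\delta_{\Hcal}(S)) ~\leq~ w^{*}_{\Gcal}(\delta_{\Gcal}(S)) ~\leq~ (1+\eps)\,w^{*}(\delta_{\Hcal}(S)) \qquad\text{for all } S\subseteq V.
\]
Substituting the two identities of the previous paragraph and dividing through by $2$ yields $w(\delta_{\Hcal}(S)) \leq w_{\Gcal}(\delta_{\Gcal}(S)) \leq (1+\eps)\,w(\delta_{\Hcal}(S))$ for every $S$, which is the claim; the hyperedge bound and the polynomial running time are inherited directly from Corollary~\ref{cor:hyp2}.

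There is essentially no technical difficulty here; the only point worth emphasising is \emph{why} the reduction is lossless. For a general $r$-uniform hypergraph one has $\card{S\cap E}\cdot\card{\overline{S}\cap E} = k(r-k)$ with $k = \card{S\cap E}\in\set{1,\dots,r-1}$ whenever $E$ is cut, and this quantity genuinely varies, from $r-1$ up to roughly $r^{2}/4$; that spread is exactly the source of the $\Theta(r^{2})$ loss in Corollary~\ref{cor:hyp3}. When $r=3$, however, $k(3-k)=2$ for \emph{every} admissible value $k\in\set{1,2}$, so there is no spread at all and cancelling the common factor $2$ costs nothing. The main thing to double-check when writing this up is that $3$-uniformity is inherited by $\Gcal$, so that the identity genuinely applies on both sides of the inequalities.
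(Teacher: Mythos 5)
Your proof is correct and is essentially the paper's own argument: the paper likewise observes that for $r=3$ the two cut notions satisfy $w^*(\delta_{\Hcal}(S)) = 2\,w(\delta_{\Hcal}(S))$ (as a consequence of its inequality~\eqref{eq:cut-weights}, which collapses to an equality when $r=3$) and then invokes the $w^*$-cut sparsifier of Corollary~\ref{cor:hyp2}. Your direct verification of the identity via $k(3-k)\in\set{0,2,2,0}$ and your explicit remark that $3$-uniformity is inherited by~$\Gcal$ are just slightly more detailed renderings of the same reduction.
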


\paragraph{Sparse solutions to semidefinite programs.}

\newcommand{\corsdp}
{
  Let $A_1, \dotsc, A_m$ be symmetric, positive semidefinite matrices of
  size $n \times n$, and let $B$ be a symmetric matrix of size $n \times
  n$. Let $c \in \Reals^m$ with $c \geq 0$. Suppose that the
  semidefinite program (SDP)
    $$
    \min\setst[\Big]{\iprodt{c}{z}}{
      \sum_i z_i A_i \succeq B,\, z \in \Reals^m,\, z \geq 0
    }
    $$
  has a feasible solution~$z^*$. Then, for any real $\eps \in (0,1)$, it
  has a feasible solution~$\zb$ with at most $O(n/\eps^2)$ nonzero
  entries and $\iprodt{c}{\zb} \leq (1+\eps)\iprodt{c}{z^*}$.
}

\begin{corollary}
  \label{cor:sparse-sdp}
  \corsdp
\end{corollary}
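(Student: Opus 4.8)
The plan is to reduce Corollary~\ref{cor:sparse-sdp} to Theorem~\ref{thm:bssextension} via a one-coordinate dimension-augmentation trick, so that the single two-sided spectral bound produced by that theorem simultaneously certifies feasibility of the reweighted constraint and controls the linear objective. Fix the given feasible solution $z^*$ and set $M := \sum_i z^*_i A_i$, so that $M \succeq B$ by feasibility. The starting idea is to fold $z^*$ into the constraint matrices: let $C_i := z^*_i A_i$, which is symmetric and positive semidefinite since $z^*_i \geq 0$ and $A_i \succeq 0$, and note $\sum_i C_i = M$. Given a vector $y \in \Reals^m$ with $y \geq 0$, the candidate sparse solution will be $\zb$ with $\zb_i := y_i z^*_i$, for then $\sum_i \zb_i A_i = \sum_i y_i C_i$ and the support of $\zb$ is contained in that of $y$. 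If we simply apply Theorem~\ref{thm:bssextension} to $C_1, \dotsc, C_m$ we obtain such a $y$ with $O(n/\eps^2)$ nonzero entries and $\sum_i y_i C_i \succeq M \succeq B$, hence feasibility; what is missing is any bound on $\iprodt{c}{\zb} = \sum_i c_i z^*_i y_i$, because Theorem~\ref{thm:bssextension} controls the $y_i$ only in aggregate through the matrix inequality, not entrywise.

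To fix this I would enlarge the matrices by one coordinate that records the objective contributions. Put $d_i := c_i z^*_i \geq 0$ and define $\widehat{C}_i \in \Symraw^{n+1}$ to be block-diagonal with $C_i$ in its leading $n \times n$ block and the scalar $d_i$ in position $(n+1,n+1)$; each $\widehat{C}_i$ is positive semidefinite, and $\widehat{C} := \sum_i \widehat{C}_i$ is block-diagonal with blocks $M$ and $\sum_i d_i = \iprodt{c}{z^*}$. Applying Theorem~\ref{thm:bssextension} to $\widehat{C}_1, \dotsc, \widehat{C}_m$ (still matrices of size $O(n)$) yields $y \geq 0$ with $O(n/\eps^2)$ nonzero entries such that
\[
\widehat{C} ~\preceq~ \sum_i y_i \widehat{C}_i ~\preceq~ (1+\eps)\,\widehat{C}.
\]
Since all three matrices are block-diagonal with the same block structure, this inequality holds blockwise: the leading block gives $\sum_i y_i C_i \succeq M$, and the $(n+1,n+1)$ entry gives $\sum_i y_i d_i \leq (1+\eps)\sum_i d_i$, i.e.\ $\sum_i c_i z^*_i y_i \leq (1+\eps)\iprodt{c}{z^*}$.

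It then remains to set $\zb_i := y_i z^*_i$ and check everything: $\zb \geq 0$ because $y \geq 0$ and $z^* \geq 0$; $\zb$ has at most $O(n/\eps^2)$ nonzero entries because its support lies in that of $y$; $\sum_i \zb_i A_i = \sum_i y_i C_i \succeq M \succeq B$ by transitivity of $\preceq$, so $\zb$ is feasible; and $\iprodt{c}{\zb} = \sum_i c_i z^*_i y_i \leq (1+\eps)\iprodt{c}{z^*}$. I expect no genuine obstacle here; the only nonroutine point is the realization that by appending the objective as an extra diagonal block one makes the \emph{lower} half of the two-sided Batson--Spielman--Srivastava bound deliver feasibility while the \emph{upper} half delivers the cost guarantee, and that folding $z^*$ into the $A_i$'s is what turns ``$\succeq \widehat C$'' into ``$\succeq B$''. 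Degenerate cases need no special care: if $\iprodt{c}{z^*} = 0$ the extra block of $\widehat{C}$ vanishes, but Theorem~\ref{thm:bssextension} permits a singular input sum and then forces $\sum_i y_i d_i = 0$ as well, so the cost bound still holds.
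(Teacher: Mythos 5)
Your proposal is correct and is essentially identical to the paper's own proof: the paper also folds $z^*$ into the constraint matrices and appends the objective contribution $c_i z_i^*$ as an extra diagonal block before applying Theorem~\ref{thm:bssextension}, then sets $\zb_i = y_i z_i^*$. Your $M$, $\widehat{C}_i$, and $\widehat{C}$ correspond exactly to the paper's $D$, $B_i'$, and $B'$.
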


Several important SDPs can be cast as in Corollary~\ref{cor:sparse-sdp}; see, e.g., \cite{IPS05,IPS11}. Recently, Jain and Yao~\cite{JY11} gave a parallel approximation
algorithm for SDPs in this form with $B$ positive semidefinite.

\paragraph{\Lovasz{} theta number.}
For a graph $G = (V,E)$ on~$n$ nodes, let $t'(G)$ denote the square of
the minimum radius of an Euclidean ball in $\Reals^n$ such that there is
a map from~$V$ to points in the ball such that adjacent vertices are
mapped to points at distance at least $1$. Also, let $\vartheta'(G)$
denote the variant of the \Lovasz{} theta number introduced
in~\cite{McElieceRR78a} and~\cite{Schrijver79a}.

\newcommand{\corhypersphere}{
  Let $G=(V,E)$ be a graph. For any real $\eps \in (0,1)$, there is a
  deterministic polynomial-time algorithm to find a subgraph $H$ of $G$
  such that
  \begin{gather*}
    (1-\eps)t'(G) \leq t'(H) \leq t'(G)
  \end{gather*}
  and $\card{E(H)} = O(n / \eps^2)$.
}

\begin{corollary}
  \label{cor:hypersphere}
  \corhypersphere
\end{corollary}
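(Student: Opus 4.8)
The plan is to express $t'(G)$ as a semidefinite program, pass to its dual, sparsify the dual with \Theorem{bssextension}, and then recover a lower bound on $t'(H)$ by weak duality. Translating the enclosing ball so that it is centered at the origin is free, and a unit-distance representation of radius $R$ is then recorded by the Gram matrix $X\succeq 0$ of the images of the vertices: the conditions become $X_{vv}\le R^2$ for every $v\in V$ and $\iprod{N_{uv}}{X}\ge 1$ for every $uv\in E$, where $N_{uv}:=\oprodsym{(e_u-e_v)}$ and $e_v$ is the $v$-th standard basis vector. Hence
$$
t'(G) ~=~ \min\setst[\big]{\,r\,}{\,X\succeq 0,~ X_{vv}\le r \ \forall v\in V,~ \iprod{N_e}{X}\ge 1 \ \forall e\in E\,} ,
$$
and, writing $\Laplacian[G]{\mu}:=\sum_{e\in E}\mu_e N_e$ for the weighted Laplacian, a routine Lagrangian computation gives the dual
$$
t'(G) ~=~ \max\setst[\big]{\,\transp{\mathbf 1}\mu\,}{\,\Laplacian[G]{\mu}\preceq\Diag(\lambda),~ \transp{\mathbf 1}\lambda=1,~ \mu\ge 0,~ \lambda\ge 0\,}.
$$
The primal is strictly feasible ($X=I$, $r=2$), so strong duality holds and the dual optimum is attained (the case $E=\emptyset$ is trivial, $t'(G)=0$). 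The first step of the algorithm is to compute an optimal dual pair $(\mu^*,\lambda^*)$ in polynomial time; a solver returns only an approximately optimal and approximately feasible pair, but one over-solves to accuracy $\operatorname{poly}(1/n,\eps)$ and rescales slightly, folding the error into $\eps$ in the usual way.

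Next I would apply \Theorem{bssextension} to the rank-one positive semidefinite matrices $B_e:=\mu^*_e N_e$ ($e\in E$), with $B:=\sum_{e\in E}B_e=\Laplacian[G]{\mu^*}$; equivalently one may use the construction of \cite{BSS09} directly, since these matrices have rank one. This yields $z\in\Reals^E$ with $z\ge 0$, at most $O(n/\eps^2)$ nonzero entries, and
$$
B ~\preceq~ \sum_{e\in E} z_e B_e ~\preceq~ (1+\eps)B .
$$
Let $H$ be the subgraph of $G$ whose edge set is $\operatorname{supp}(z)$, so that $\card{E(H)}=O(n/\eps^2)$, and set $\bar\mu_e:=z_e\mu^*_e/(1+\eps)$ for $e\in E(H)$ and $\bar\lambda:=\lambda^*$. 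Since $z$ is supported on $E(H)$, the right inequality gives $\Laplacian[H]{\bar\mu}=(1+\eps)^{-1}\sum_{e\in E}z_e B_e\preceq B\preceq\Diag(\lambda^*)$, so $(\bar\mu,\bar\lambda)$ is feasible for the dual SDP of $t'(H)$; the left inequality gives, on taking traces and using $\trace(N_e)=2$, that $\sum_{e\in E}z_e\mu^*_e\ge\sum_{e\in E}\mu^*_e=t'(G)$, whence the dual objective at $(\bar\mu,\bar\lambda)$ is $\transp{\mathbf 1}\bar\mu=(1+\eps)^{-1}\sum_{e\in E}z_e\mu^*_e\ge t'(G)/(1+\eps)\ge(1-\eps)t'(G)$.

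Weak duality applied to $H$ now yields $t'(H)\ge(1-\eps)t'(G)$, while $t'(H)\le t'(G)$ is immediate since $E(H)\subseteq E(G)$ (any ball representation valid for $G$ is valid for $H$). The whole procedure is a single SDP solve followed by the deterministic algorithm of \Theorem{bssextension}, so it runs in polynomial time, which is exactly the assertion of the corollary.

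The step I expect to demand the most care is the passage through SDP duality: writing down the two formulations above, verifying Slater's condition, and — for a fully rigorous running-time claim — dealing with the fact that the solver's output is only approximately optimal and approximately feasible, which forces the slight over-solving and rescaling mentioned above. The sparsification itself is a black-box use of \Theorem{bssextension}. One could alternatively route the argument through the variant \Lovasz{} theta number $\vartheta'$ using its known relation to $t'$, but the dual-SDP approach above seems the most self-contained.
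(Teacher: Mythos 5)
Your proposal is correct and follows essentially the same route as the paper: formulate $t'$ as an SDP, pass to the dual $\max\setst{\transp{\mathbf 1}\mu}{\Laplacian[G]{\mu}\preceq\Diag(\lambda),\ \transp{\mathbf 1}\lambda=1,\ \mu\ge 0}$, sparsify the optimal dual multipliers with Theorem~\ref{thm:bssextension}, and conclude by weak duality for $H$ together with the trivial bound $t'(H)\le t'(G)$. The only cosmetic difference is that the paper preserves the objective by appending a scalar block to each $B_e$ (as in Corollary~\ref{cor:sparse-sdp}), whereas you recover it by taking traces in $B\preceq\sum_e z_eB_e$ using $\trace N_e=2$; both are valid.
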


\newcommand{\corthetap}{
  Let $G=(V,E)$ be a graph. For any real $\eps \in (0,1)$, there is a
  deterministic polynomial-time algorithm to find a supergraph $H$ of
  $G$ such that
  \begin{gather*}
    \frac{\vartheta'(G)}{1-\eps+\eps\vartheta'(G)} \leq \vartheta'(H)
    \leq \vartheta'(G)
  \end{gather*}
  and $\card{E(H)} = \binom{n}{2} - O(n / \eps^2)$.
}

\begin{corollary}
  \label{cor:thetap}
  \corthetap
\end{corollary}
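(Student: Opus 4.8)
The plan is to obtain this as the ``complementary'' counterpart of Corollary~\ref{cor:hypersphere}. The one non-routine ingredient is the classical identity relating the hypersphere number $t'$ and the Schrijver theta number $\vartheta'$: for every graph $F$ on $n$ vertices,
\[
  t'(F) \;=\; \tfrac{1}{2}\bigl(1 - 1/\vartheta'(\overline{F})\bigr),
  \qquad\text{equivalently}\qquad
  \vartheta'(\overline{F}) \;=\; \frac{1}{1 - 2\,t'(F)}.
\]
This is classical and can be checked by comparing the semidefinite formulations of the two quantities; the extra nonnegativity constraints in the definition of $\vartheta'$ are precisely what make it match the hypersphere number $t'$ as defined here. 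Note that $t'(F) \in [0,\tfrac12)$ always and that $\tau \mapsto 1/(1-2\tau)$ is an increasing bijection from $[0,\tfrac12)$ onto $[1,\infty)$, so this identity transfers approximation guarantees in both directions.

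First I would apply the algorithm of Corollary~\ref{cor:hypersphere} to the complement $\overline{G}$: this produces, deterministically in polynomial time, a subgraph $H'$ of $\overline{G}$ with
\[
  (1-\eps)\,t'(\overline{G}) \;\le\; t'(H') \;\le\; t'(\overline{G})
  \qquad\text{and}\qquad
  \card{E(H')} = O(n/\eps^2).
\]
Set $H := \overline{H'}$. Since $H'$ is a subgraph of $\overline{G}$, we have $E(\overline{H}) = E(H') \subseteq E(\overline{G})$, hence $E(G) \subseteq E(H)$, so $H$ is a supergraph of $G$; and $\card{E(H)} = \binom{n}{2} - \card{E(H')} = \binom{n}{2} - O(n/\eps^2)$, as required. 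Forming complements is polynomial time, so the whole procedure remains deterministic polynomial time.

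It then remains to translate the two inequalities on $t'$ into the claimed inequalities on $\vartheta'$. Write $\theta := \vartheta'(G)$. Applying the identity with $F = \overline{G}$ gives $t'(\overline{G}) = \tfrac{1}{2}(1 - 1/\theta)$, and applying it with $F = H'$ (so that $\overline{F} = \overline{H'} = H$) gives $\vartheta'(H) = 1/(1 - 2\,t'(H'))$. From $t'(H') \le t'(\overline{G})$ we get immediately $\vartheta'(H) \le \theta = \vartheta'(G)$. From $t'(H') \ge (1-\eps)\,t'(\overline{G}) = \tfrac{1-\eps}{2}(1 - 1/\theta)$ we get $1 - 2\,t'(H') \le \eps + (1-\eps)/\theta$, and therefore
\[
  \vartheta'(H) \;=\; \frac{1}{1 - 2\,t'(H')}
  \;\ge\; \frac{1}{\eps + (1-\eps)/\theta}
  \;=\; \frac{\theta}{1 - \eps + \eps\,\theta}
  \;=\; \frac{\vartheta'(G)}{1 - \eps + \eps\,\vartheta'(G)},
\]
which is exactly the desired lower bound.

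Beyond this, everything is bookkeeping: the only real inputs are Corollary~\ref{cor:hypersphere} (which in turn rests on Corollary~\ref{cor:sparse-sdp}, hence on Theorem~\ref{thm:bssextension}) and the $t'$--$\vartheta'$ identity, the rest being the elementary algebra of the substitution $\tau \mapsto 1/(1-2\tau)$ and the observation that a graph whose complement has $O(n/\eps^2)$ edges has $\binom{n}{2} - O(n/\eps^2)$ edges. The step I would be most careful about is confirming that it is the Schrijver variant $\vartheta'$, not the plain theta number $\vartheta$, that occurs in the identity and in Corollary~\ref{cor:hypersphere} — i.e., that there is no $\vartheta$-versus-$\vartheta'$ mismatch between the two statements.
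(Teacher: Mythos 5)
Your proof is correct and follows essentially the same route as the paper: both rest on the identity $2t'(F) + 1/\vartheta'(\overline{F}) = 1$ (which the paper attributes to repeating \Lovasz's argument for the $t$--$\vartheta$ version) together with an application of Corollary~\ref{cor:hypersphere} to $\overline{G}$ followed by complementation. The paper leaves the substitution algebra implicit, whereas you carry it out explicitly; the computations check out.
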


\newcommand{\corthetaplow}{
  Let $G$ be a graph such that $\vartheta'(G) = o(\sqrt{n})$. For any
  real $\gamma >0$, there is a supergraph $H$ of $G$ such that
  \begin{gather*}
    \frac{\vartheta'(G)}{1+\gamma} \leq \vartheta'(H)
    \leq \vartheta'(G)
  \end{gather*}
  and $\card{E(H)} = \binom{n}{2} - O(n\vartheta(G)^2 / \gamma^2)$.
}

\begin{corollary}
  \label{cor:thetaplow}
  \corthetaplow
\end{corollary}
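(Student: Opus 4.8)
The plan is to deduce Corollary~\ref{cor:thetaplow} directly from Corollary~\ref{cor:thetap} by choosing the accuracy parameter $\eps$ appropriately in terms of $\gamma$ and $\vartheta'(G)$. Recall the mechanism behind Corollary~\ref{cor:thetap}: applying Corollary~\ref{cor:hypersphere} to the complement $\bar G$ produces a supergraph $H\supseteq G$ (so $\bar H\subseteq\bar G$) with $(1-\eps)\,t'(\bar G)\le t'(\bar H)\le t'(\bar G)$ and $\card{E(H)}=\binom n2 - O(n/\eps^2)$, and the classical identity $\vartheta'(G)=1/(1-2\,t'(\bar G))$ (a decreasing M\"obius transform, easily checked on e.g.\ complete and empty graphs) converts the two-sided bound on $t'$ into
$$
\frac{\vartheta'(G)}{1-\eps+\eps\,\vartheta'(G)} ~\le~ \vartheta'(H) ~\le~ \vartheta'(G),
$$
where the upper bound is just the monotonicity $\vartheta'(H)\le\vartheta'(G)$ for supergraphs. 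I will use this statement of Corollary~\ref{cor:thetap} as a black box.

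Next I pick $\eps$ so that $1-\eps+\eps\,\vartheta'(G)\le 1+\gamma$, i.e.\ $\eps(\vartheta'(G)-1)\le\gamma$. If $\vartheta'(G)\le 1+\gamma$ there is nothing to prove: since $\vartheta'(G')\ge\alpha(G')\ge 1$ for every graph, the complete graph $H:=K_n\supseteq G$ already satisfies $\vartheta'(G)/(1+\gamma)\le 1=\vartheta'(K_n)\le\vartheta'(G)$ and $\card{E(H)}=\binom n2=\binom n2-O(n\,\vartheta(G)^2/\gamma^2)$. Otherwise $\vartheta'(G)>1+\gamma$; set $\eps:=\gamma/(\vartheta'(G)-1)\in(0,1)$ and invoke Corollary~\ref{cor:thetap}. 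Then $1-\eps+\eps\,\vartheta'(G)=1+\gamma$, so $\vartheta'(H)\ge\vartheta'(G)/(1+\gamma)$, while
$$
\card{E(H)}=\binom n2-O(n/\eps^2)=\binom n2-O\!\left(\frac{n(\vartheta'(G)-1)^2}{\gamma^2}\right)=\binom n2-O\!\left(\frac{n\,\vartheta(G)^2}{\gamma^2}\right),
$$
using $\vartheta'(G)-1<\vartheta'(G)\le\vartheta(G)$; and $\vartheta'(H)\le\vartheta'(G)$ comes for free. The hypothesis $\vartheta'(G)=o(\sqrt n)$ is exactly what makes the number of deleted edges $O(n\,\vartheta'(G)^2/\gamma^2)$ smaller than $\binom n2$ (for $\gamma$ fixed it is $o(n^2)$), so that the conclusion is not vacuous---equivalently, it keeps $\eps=\Theta(\gamma/\vartheta'(G))$ large enough that Corollary~\ref{cor:hypersphere}'s bound $O(n/\eps^2)$ does not exceed $\card{E(\bar G)}$ (and if $\bar G$ is already sparse one simply takes $\bar H=\bar G$).

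I do not expect a genuine obstacle once Corollary~\ref{cor:thetap} is in hand: the whole content is the right choice of $\eps$ together with disposal of the trivial regime $\vartheta'(G)\le 1+\gamma$. The only delicate point lies upstream, in pinning down the exact M\"obius relation between $t'$ and $\vartheta'$ of the complement (and the direction of monotonicity of $t'$ under edge addition), which is precisely what Corollary~\ref{cor:thetap} already encapsulates.
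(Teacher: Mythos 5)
Your proof is correct and follows essentially the same route as the paper, which simply applies Corollary~\ref{cor:thetap} with $\eps := \gamma/\vartheta'(G)$ (your choice $\eps = \gamma/(\vartheta'(G)-1)$ and your explicit handling of the trivial regime $\vartheta'(G)\le 1+\gamma$, where $\eps$ would fall outside $(0,1)$, are minor refinements of the same idea).
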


\newcommand{\corthetaphigh}{
  Let $G$ be a graph such that $\vartheta'(G) = \Omega(\sqrt{n})$. For
  any real $\gamma \geq 1$, there is a supergraph $H$ of $G$ such that
  \begin{gather*}
    \vartheta'(H) = \Omega(\sqrt{n}/\gamma)
  \end{gather*}
  and $\card{E(H)} = \binom{n}{2} - O(n^2/\gamma^2)$.
}

\begin{corollary}
  \label{cor:thetaphigh}
  \corthetaphigh
\end{corollary}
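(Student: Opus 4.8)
The plan is to obtain Corollary~\ref{cor:thetaphigh} directly from Corollary~\ref{cor:thetap} by choosing the parameter~$\eps$ there as a suitable function of~$\gamma$ and~$n$. Intuitively, when $\vartheta'(G)$ is as large as $\Omega(\sqrt n)$, the term $\eps\,\vartheta'(G)$ in the denominator of Corollary~\ref{cor:thetap}'s guarantee dominates, so one can no longer keep $\vartheta'(H)$ within a constant factor of $\vartheta'(G)$; but one can still push it down to roughly $1/\eps$ while deleting only $O(n/\eps^2)$ edges from~$K_n$, and taking $\eps$ of order $\gamma/\sqrt n$ then produces exactly the claimed bounds. I read the hypothesis $\vartheta'(G) = \Omega(\sqrt n)$ as $\vartheta'(G) \geq a\sqrt n$ for an absolute constant $a>0$, and I will use the standard fact $\vartheta'(K_n) = 1$.

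Fix a small absolute constant $c \in (0,1)$. First I would handle the degenerate regime $\gamma \geq \sqrt n/c$ separately: here I simply take $H = K_n \supseteq G$, so that the number of non-edges is $\binom n2 - \card{E(H)} = 0 = O(n^2/\gamma^2)$ and $\vartheta'(H) = \vartheta'(K_n) = 1$; since $\gamma = \Omega(\sqrt n)$ in this regime, $\sqrt n/\gamma = O(1)$ and hence $\vartheta'(H) = \Omega(\sqrt n/\gamma)$, as required.

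In the main regime $1 \leq \gamma < \sqrt n/c$, I would set $\eps := c\gamma/\sqrt n \in (0,1)$ and apply Corollary~\ref{cor:thetap} to~$G$ with this~$\eps$. This yields a supergraph $H \supseteq G$ with $\card{E(H)} = \binom n2 - O(n/\eps^2) = \binom n2 - O(n^2/\gamma^2)$ and
\[
  \vartheta'(H) ~\geq~ \frac{\vartheta'(G)}{1 - \eps + \eps\,\vartheta'(G)}
             ~=~ \frac{1}{(1-\eps)\,\vartheta'(G)^{-1} + \eps}
             ~\geq~ \frac{1}{\vartheta'(G)^{-1} + \eps}
             ~\geq~ \frac{1}{(a\sqrt n)^{-1} + c\gamma/\sqrt n}
             ~=~ \frac{\sqrt n}{a^{-1} + c\gamma}
             ~\geq~ \frac{\sqrt n}{(a^{-1}+c)\,\gamma},
\]
where the second inequality uses $1 - \eps \leq 1$, the third uses $\vartheta'(G) \geq a\sqrt n$, and the last uses $\gamma \geq 1$. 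Hence $\vartheta'(H) = \Omega(\sqrt n/\gamma)$, which completes the argument.

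I do not anticipate a genuine obstacle: given Corollary~\ref{cor:thetap}, the statement is essentially a rescaled reading of it in the large-$\vartheta'(G)$ regime. The only points that require a little care are keeping the chosen $\eps$ inside $(0,1)$ --- which is precisely why the case $\gamma \geq \sqrt n/c$ is peeled off and settled trivially by $H = K_n$ --- and making sure the implied constants combine correctly, which is exactly where the two hypotheses $\gamma \geq 1$ and $\vartheta'(G) = \Omega(\sqrt n)$ enter. Since the construction is the one from Corollary~\ref{cor:thetap} (plus $H = K_n$ in the remaining case), the resulting algorithm is in fact polynomial time, even though the statement does not demand this.
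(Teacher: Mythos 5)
Your proposal is correct and follows essentially the same route as the paper, whose entire proof is the one-liner ``Apply Corollary~\ref{cor:thetap} with $\eps := \gamma/\sqrt{n}$.'' Your version merely adds the arithmetic the paper leaves implicit and a careful treatment of the degenerate regime $\gamma \geq \sqrt{n}/c$ (where one just takes $H = K_n$), which is a reasonable bit of extra rigor but not a different argument.
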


\paragraph{Approximate Carath\'eodory theorems.}
One immediate application for Theorem \ref{thm:bssextension} is an
approximate Carath\'eodory-type theorem.
A classic result of this sort is:

\newcommand{\Norm}[1]{\left\lVert #1 \right\rVert}

\begin{theorem}[Alth{\"{o}}fer \cite{Althofer}, Lipton-Young \cite{LY}]
\label{thm:alt}
Let $v_1,\ldots,v_m \in [0,1]^n$ and let $\lambda \in \Reals^m$ satisfy $\lambda \geq 0$
and $\sum_i \lambda_i=1$.
Then there exists $\mu \in \Reals^m$ with $\mu \geq 0$, $\sum_i \mu_i=1$
and only $O(\log n/\eps^2)$ nonzero entries such that
$\Norm{\sum_i \lambda_i v_i - \sum_i \mu_i v_i}_\infty \leq \eps$.
\end{theorem}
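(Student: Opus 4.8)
The plan is to prove Theorem~\ref{thm:alt} by the probabilistic method, sampling coordinates of the convex combination according to the weights $\lambda$. Set $x := \sum_i \lambda_i v_i$ and regard $\lambda$ as a probability distribution on $\{1,\dots,m\}$. Fix an integer $k$ to be chosen below, and draw indices $I_1,\dots,I_k$ independently, each with $\prob[I_j = i] = \lambda_i$. Define $\mu \in \Reals^m$ by $\mu_i := \tfrac{1}{k}\,\card{\setst{j}{I_j = i}}$; then automatically $\mu \geq 0$, $\sum_i \mu_i = 1$, and $\mu$ is supported on the set $\{I_1,\dots,I_k\}$, hence has at most $k$ nonzero entries. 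Moreover $\sum_i \mu_i v_i = \tfrac{1}{k}\sum_{j=1}^k v_{I_j}$ is an average of i.i.d.\ random vectors, each with mean $\mean[v_{I_j}] = \sum_i \lambda_i v_i = x$.

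Next I would apply a scalar concentration bound coordinate by coordinate. Fix a coordinate $\ell \in \{1,\dots,n\}$. The numbers $(v_{I_1})_\ell,\dots,(v_{I_k})_\ell$ are i.i.d., lie in the interval $[0,1]$ since each $v_i \in [0,1]^n$, and have common mean $x_\ell$. By Hoeffding's inequality,
\[
  \prob\Bigl[\, \bigl| \tfrac{1}{k}\textstyle\sum_{j=1}^k (v_{I_j})_\ell - x_\ell \bigr| > \eps \,\Bigr] ~\leq~ 2\exp(-2k\eps^2).
\]
Taking a union bound over the $n$ coordinates, the probability that $\Norm{\sum_i \mu_i v_i - x}_\infty > \eps$ is at most $2n\exp(-2k\eps^2)$. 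Choosing $k := \bigl\lceil (\ln(2n))/(2\eps^2) \bigr\rceil + 1 = O(\log n/\eps^2)$ makes this quantity strictly less than $1$.

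Hence with positive probability the sampled $\mu$ satisfies all the stated requirements, so in particular some realization of $I_1,\dots,I_k$ works, which establishes existence. The support bound is immediate because $\mu$ lives on $\{I_1,\dots,I_k\}$, a set of size at most $k = O(\log n/\eps^2)$. There is essentially no serious obstacle here: the only points requiring care are invoking the correct scalar tail bound (Hoeffding, exploiting that each coordinate of each $v_i$ lies in a length-$1$ interval, which is exactly where the hypothesis $v_i \in [0,1]^n$ is used) and checking that the union-bound threshold indeed admits a value of $k$ of order $\log n / \eps^2$; no derandomization is needed since only existence is asserted.
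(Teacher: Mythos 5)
Your proof is correct and is precisely the ``simple random sampling argument'' the paper alludes to for this theorem: the paper itself gives no proof, only citing Alth\"ofer and Lipton--Young, and your empirical-distribution sampling with a coordinatewise Hoeffding bound plus a union bound over the $n$ coordinates is the standard argument behind those references. The parameter choice $k = O(\log n/\eps^2)$ and the support bound both check out.
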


This theorem follows from simple random sampling arguments,
but it has several interesting consequences, including
the existence of sparse, low-regret solutions to zero-sum games.
The following corollary of Theorem~\ref{thm:bssextension}
can be viewed as a matrix generalization of Theorem~\ref{thm:alt}.

\newcommand{\corcara}{
Let $B_1,\ldots,B_m$ be symmetric, positive semidefinite matrices of size $n \times n$
and let $\lambda \in \Reals^m$ satisfy $\lambda \geq 0$ and $\sum_i \lambda_i = 1$.
Let $B = \sum_i \lambda_i B_i$.
For any $\eps \in (0,1)$, there exists $\mu \geq 0$ with $\sum_i \mu_i = 1$
such that $\mu$ has $O(n / \eps^2)$ nonzero entries and
$$
(1-\eps)B ~\preceq~ \sum_i \mu_i B_i ~\preceq~ (1+\eps) B.
$$
}
\begin{corollary}
\label{cor:cara}
\corcara
\end{corollary}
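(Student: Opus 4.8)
The plan is to reduce directly to Theorem~\ref{thm:bssextension} and then do a routine rescaling to turn the one-sided $(1+\eps)$ sandwich into the symmetric $(1-\eps)$ sandwich with $\ell_1$-normalized coefficients. First I would apply Theorem~\ref{thm:bssextension} to the matrices $B_i' := \lambda_i B_i$ (these are still symmetric positive semidefinite, and their sum is exactly $B$). This produces a vector $y \in \Reals^m$ with $y \geq 0$, at most $O(n/\eps^2)$ nonzero entries, and $B \preceq \sum_i y_i \lambda_i B_i \preceq (1+\eps) B$. Writing $z_i := y_i \lambda_i \geq 0$, we have $z$ supported on $O(n/\eps^2)$ coordinates and $B \preceq \sum_i z_i B_i \preceq (1+\eps) B$.

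Next I would normalize. The obstacle to simply taking $\mu = z$ is that $z$ need not satisfy $\sum_i z_i = 1$; however, the sandwich inequality controls $\sum_i z_i$ in a way that makes this harmless. Let $s := \sum_i z_i$. Since the $B_i$ are PSD and $\sum_i \lambda_i B_i = B$ with $\sum_i \lambda_i = 1$, a natural tactic is to test the sandwich against a suitable vector (or take traces): $\trace B \leq \trace\bigl(\sum_i z_i B_i\bigr) \leq (1+\eps)\trace B$. The issue is that $\trace(\sum_i z_i B_i) = \sum_i z_i \trace B_i$ need not equal $s$ unless the $\trace B_i$ are all equal, so the trace alone does not pin down $s$. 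The cleaner route is to set $\mu_i := z_i / s$, so that $\sum_i \mu_i = 1$ and $\mu \geq 0$ automatically, with the same support as $z$; then $\sum_i \mu_i B_i = \tfrac{1}{s}\sum_i z_i B_i$ lies between $\tfrac{1}{s} B$ and $\tfrac{1+\eps}{s} B$. It remains to bound $s$ into a window of the form $[1, 1+O(\eps)]$ so that $\tfrac{1}{s} \in [\tfrac{1}{1+\eps}, 1] \supseteq [1-\eps, 1]$ after adjusting constants; in fact one wants $\tfrac{1}{s}B \succeq (1-\eps)B$ and $\tfrac{1+\eps}{s}B \preceq (1+\eps)B$, i.e.\ $1 \leq s \leq \tfrac{1}{1-\eps}$, which after replacing $\eps$ by $\eps/3$ at the outset is comfortably implied by $1 \leq s \leq 1+\eps$.

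To bound $s$, the key observation I would use is that the BSS construction in Theorem~\ref{thm:bssextension} (as in \cite{BSS09}) can be taken to preserve a scalar linear functional as well, or more simply: apply the theorem to the augmented family $\{B_i \oplus \lambda_i\}$ on $\Reals^{n+1}$, i.e.\ append to each $B_i$ a $1\times 1$ block equal to $\lambda_i$. The sum of the augmented matrices is $B \oplus 1$, and the output $y$ then satisfies, in the last coordinate, $1 \leq \sum_i y_i \lambda_i \leq 1+\eps$, which is exactly $1 \leq s \leq 1+\eps$; the support size and running time bounds are unaffected up to constants since the dimension grew only to $n+1$. (Here $\eps$ should be taken as $\eps/3$ from the start so all three inequalities close with the stated $\eps$.) This is the one genuinely non-formal step — recognizing that tracking the partition-of-unity constraint costs only one extra dimension — and everything after it is a one-line rescaling. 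I would then conclude that $\mu := z/s$ has $O(n/\eps^2)$ nonzero entries, $\mu \geq 0$, $\sum_i \mu_i = 1$, and $(1-\eps)B \preceq \sum_i \mu_i B_i \preceq (1+\eps)B$, as required. $\qed$
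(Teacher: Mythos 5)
Your proposal is correct and takes essentially the same route as the paper: the paper also applies Theorem~\ref{thm:bssextension} to the block-diagonal matrices $\lambda_i B_i \oplus \lambda_i$ (whose sum is $B \oplus 1$), reads off $1 \leq s := \sum_i y_i\lambda_i \leq 1+\eps$ from the scalar block, and sets $\mu_i := y_i\lambda_i/s$. The only difference is cosmetic: the paper closes the bounds with the stated $\eps$ directly via $1/(1+\eps) \geq 1-\eps$, so your replacement of $\eps$ by $\eps/3$ is unnecessary (though harmless).
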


Although the support size in Theorem~\ref{thm:alt} is much smaller than in Corollary~\ref{cor:cara},
the latter provides a multiplicative error bound
whereas the former only provides an additive error bound.
Theorem~\ref{thm:alt} can be modified to give multiplicative error bounds
if we allow $\mu$ to have $O(n \log n / \eps^2)$ non-zero entries.
However such a result is not interesting as Carath\'eodory's theorem
provides a $\mu$ with only $n+1$ non-zero entries and no error (i.e., $\epsilon=0$).
In contrast, Carath\'eodory's theorem is very weak in the scenario of Corollary~\ref{cor:cara} as it
only provides a $\mu$ with $n(n+1)/2+1$ nonzero entries.

\paragraph{Sparsifiers on subgraphs.}

\newcommand{\corsubgraph}{
Let $G = (V,E)$ be a graph, let $w \ffrom E \fto \Reals_+$ be a weight function,
and let $\Fcal$ be a collection of subgraphs of~$G$ such that $\sum_{F \in
  \Fcal} \card{V(F)} = O(n)$.
For any real $\eps\in (0,1)$, there is a deterministic polynomial-time algorithm
to find a subgraph $H$ of $G$ and a weight function $w_H\ffrom E(H)\to \Reals_+$
such that $\card{E(H)} = O(n / \eps^2)$ and
\begin{gather*}
    \Laplacian[G]{w} ~\preceq~ \Laplacian[H]{w_H} ~\preceq~ (1+\eps)\Laplacian[G]{w}, \\
    \Laplacian[F]{w_F}
        ~\preceq~ \Laplacian[H \cap F]{w_H\!\!\restriction_{E(H \cap F)}^{}}
        ~\preceq~ (1+\eps)\Laplacian[F]{w_F}
    \qquad\text{for all $F \in \Fcal$},
\end{gather*}
where $w_F := w\!\!\restriction_{E(F)}$ is the restriction of $w$ to the coordinates
$E(F)$ and $H \cap F = \paren[\big]{V(F),E(F) \cap E(H)}$.
}

\begin{corollary}
\label{cor:sim-sparsifier}
\corsubgraph
\end{corollary}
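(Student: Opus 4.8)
The plan is to reduce Corollary~\ref{cor:sim-sparsifier} to Theorem~\ref{thm:bssextension} by packing the Laplacian of $G$ together with the Laplacians of all the subgraphs in $\Fcal$ into a single block-diagonal positive semidefinite sum. For an edge $e = \set{a,b} \in E$ and a vertex set $W$, write $L_e^W$ for the edge Laplacian on $\Reals^W$, namely the rank-one matrix $\oprodsym{(\chi_a - \chi_b)}$ when $a,b \in W$ and the zero matrix of size $\card{W} \times \card{W}$ otherwise. First I would define, for each $e \in E$,
\[
B_e ~:=~ L_e^V ~\oplus~ \bigoplus_{F \in \Fcal} L_e^{V(F)},
\]
which is symmetric and positive semidefinite (a direct sum of PSD matrices) and has size $N \times N$ with $N := n + \sum_{F \in \Fcal} \card{V(F)} = O(n)$ by hypothesis. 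Setting $B := \sum_{e \in E} w_e B_e$, this matrix is block-diagonal, with the $V$-block equal to $\Laplacian[G]{w}$ and the $F$-block equal to $\Laplacian[F]{w_F}$ for each $F \in \Fcal$; here one uses that $L_e^{V(F)} = 0$ whenever $e \notin E(F)$, so the $F$-block only accumulates contributions from the edges of $F$.

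Next I would invoke Theorem~\ref{thm:bssextension} on the matrices $\setst{w_e B_e}{e \in E}$ with error parameter $\eps$, obtaining $y \in \Reals^E$ with $y \geq 0$, at most $O(N/\eps^2) = O(n/\eps^2)$ nonzero entries, and
\[
B ~\preceq~ \sum_{e \in E} y_e w_e B_e ~\preceq~ (1+\eps) B.
\]
The point now is that $\sum_{e \in E} y_e w_e B_e$ is block-diagonal with exactly the same block structure as $B$, and the semidefinite order respects this structure blockwise: $\bigoplus_i M_i \preceq \bigoplus_i M_i'$ if and only if $M_i \preceq M_i'$ for every $i$ (for the nontrivial direction, evaluate the quadratic forms on vectors supported in a single block). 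Hence the displayed sandwich holds block by block. In the $V$-block it reads $\Laplacian[G]{w} \preceq \sum_{e \in E} y_e w_e L_e^V \preceq (1+\eps)\Laplacian[G]{w}$, and in the $F$-block it reads $\Laplacian[F]{w_F} \preceq \sum_{e \in E(F)} y_e w_e L_e^{V(F)} \preceq (1+\eps)\Laplacian[F]{w_F}$.

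Finally I would set $E(H) := \setst{e \in E}{y_e \neq 0}$ with weights $w_{H,e} := y_e w_e$, so that $\card{E(H)} = O(n/\eps^2)$. Then $\Laplacian[H]{w_H} = \sum_{e \in E} y_e w_e L_e^V$, giving the first required inequality, and for each $F \in \Fcal$, since $H \cap F = \paren[\big]{V(F), E(F) \cap E(H)}$ and $y_e = 0$ for $e \notin E(H)$, we get $\Laplacian[H \cap F]{w_H\restriction_{E(H \cap F)}} = \sum_{e \in E(F) \cap E(H)} y_e w_e L_e^{V(F)} = \sum_{e \in E(F)} y_e w_e L_e^{V(F)}$, which is precisely the $F$-block inequality above. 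The algorithm is polynomial-time because $N = O(n)$ and Theorem~\ref{thm:bssextension} runs in $O(\card{E}\,N^3/\eps^2)$ time. There is no genuinely hard step here: the only things to get right are the block-diagonal bookkeeping, the dimension count $N = O(n)$, and the blockwise characterization of $\preceq$.
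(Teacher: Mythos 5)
Your proof is correct and matches the paper's: the paper likewise defines, for each edge $e$, the block-diagonal matrix $w_e\sqbrac[\big]{\Lcal_G(\chi^e) \oplus \bigoplus_{F\in\Fcal}\Lcal_F(\chi^e\!\!\restriction_{E(F)})}$ and applies Theorem~\ref{thm:bssextension}. You have simply spelled out the blockwise bookkeeping and the dimension count $N = O(n)$ that the paper leaves implicit.
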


\section{Preliminaries}

For a non-negative integer~$n$, we denote $[n] := \set{1,\dotsc,n}$. The
non-negative reals are denoted by~$\Reals_+$. The set of $n \times n$ symmetric
matrices is denoted by~$\Sym$. The set of symmetric, $n \times n$ positive
semidefinite (resp., positive definite) matrices is denoted by $\Psd$ (resp.,
$\Pd$). Recall that $X \in \Sym$ is positive semidefinite if $\qform{X}{v} \geq
0$ for all $v \in \Reals^n$, and $X$ is positive definite if $X$ is positive
semidefinite and $\qform{X}{v} = 0$ implies $v = 0$. Sometimes we denote $X \in
\Psd$ by $X \succeq 0$ and the notation $X \succeq Y$ means that $X-Y\succeq
0$. For $X \in \Sym$ and $a,b \in \Reals$, the notation $X \in [a,b]$ means that
$aI \preceq X \preceq bI$, where $I$ is the identity matrix.

For $X \in \Sym$, its trace is $\trace X := \sum_{i=1}^n X_{ii}$, its largest (resp., smallest) eigenvalue is denoted by $\lambda_{\max}(X)$ (resp., $\lambda_{\min}(X)$). The vector
space~$\Sym$ can be endowed with the trace inner product
$\iprod{\cdot}{\cdot}$ defined by $\iprod{X}{Y} := \trace(XY) =
\sum_{i,j} X_{ij} Y_{ij}$ for every $X,Y \in \Sym$. We shall repeatedly
use that $\trace(XY) = \trace(YX)$ for any matrices $X,Y$ for which the
products $XY$ and $YX$ make sense.

Let $G = (V,E)$ be a graph.  The canonical basis vectors
of~$\Reals^V$ are $\setst{e_i}{i \in V}$, and the canonical basis vectors
of~$\Reals^E$ are $\setst{e_{\set{i,j}}}{\set{i,j}\in E}$. The Laplacian of~$G$
is the linear transformation $\Laplacian[G]{\cdot} \ffrom \Reals^E \fto
\Sym[V]$ defined by $\Laplacian[G]{w} = \sum_{\set{i,j}\in E} w_{\set{i,j}}
\oprodsym{(e_i-e_j)}$.

When dealing with Problem~\ref{prob:rankn}, we may assume that $B =
I$. See~\cite[Proof of Theorem~1.1]{BSS09} for the details of the reduction.

\section{Solving Problem~\ref{prob:rankn} by Ahlswede-Winter}
\label{sec:aw}

As mentioned earlier, Spielman and Srivastava \cite{SS08} explain how
Problem~\ref{prob:rank1} can be solved by Rudelson's sampling lemma.
This lemma can be easily generalized to handle matrices of arbitrary
rank using the Ahlswede-Winter inequality, yielding a solution to Problem~\ref{prob:rankn}.

Let $X$ be a random matrix such that $X = B_i
/\trace{B_i}$ with probability $p_i := \trace{B_i} / \trace{I}$.
Since $B_i\succeq 0$ and $\sum_i B_i = I$, the $p_i$'s define a
probability distribution.

\begin{theorem}[{\cite[Theorem~19]{AW}}]
  Let $X, X_1, \dotsc, X_T$ be i.i.d.\ random variables with values in
  $\Sym$ such that $X_i \in [0, 1]$ for every $i$ and $\mean(X) = \mu
  I$ with $\mu\in[0,1]$. Let $\eps \in (0,1/2)$. Then
  \begin{equation*}
    \prob \paren[\bigg]{\frac{1}{\mu T} \sum_{i=1}^T X_i \not\in [1-\eps,1+\eps]}
    \leq
    2 n\cdot \exp\left(-T\frac{\eps^2\mu}{2\ln 2}\right).
  \end{equation*}
\end{theorem}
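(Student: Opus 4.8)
The plan is to view the normalized running average $Y_T := \frac{1}{\mu T}\sum_{i=1}^T X_i$ and bound the probability that its largest eigenvalue exceeds $1+\eps$ and, separately, that its smallest eigenvalue falls below $1-\eps$; a union bound over these two events gives the claimed estimate. Each of these tail events is controlled by the standard Ahlswede--Winter trace-exponential (Golden--Thompson) technique. The central inequality is $\prob(\lambda_{\max}(\sum_i X_i) \geq a) \leq e^{-\lambda a}\,\trace \mean \exp(\lambda \sum_i X_i)$ for any $\lambda > 0$, obtained from Markov's inequality applied to $\trace\exp(\lambda \sum_i X_i)$ together with the facts that $\exp$ is operator monotone enough for this purpose, $\trace\exp(\cdot) \geq \lambda_{\max}\exp(\cdot) = \exp(\lambda_{\max}(\cdot))$, and $\mean \exp(\lambda \sum_i X_i) \preceq (\mean e^{\lambda X})^T$ by Golden--Thompson applied inductively over the i.i.d.\ summands.

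The next step is to estimate $\mean e^{\lambda X}$. Since $0 \preceq X \preceq I$, convexity of $t \mapsto e^{\lambda t}$ on $[0,1]$ gives the scalar bound $e^{\lambda t} \leq 1 + (e^{\lambda}-1)t$ for $t \in [0,1]$, which lifts to the matrix inequality $e^{\lambda X} \preceq I + (e^{\lambda}-1)X$ because $X$ has spectrum in $[0,1]$. Taking expectations and using $\mean X = \mu I$ yields $\mean e^{\lambda X} \preceq (1 + (e^{\lambda}-1)\mu) I \preceq \exp((e^{\lambda}-1)\mu) I$. Hence $\trace \mean \exp(\lambda \sum_i X_i) \leq n \exp(T(e^{\lambda}-1)\mu)$, so $\prob(\lambda_{\max}(\sum_i X_i) \geq (1+\eps)\mu T) \leq n \exp\!\big(T\mu(e^{\lambda}-1) - \lambda(1+\eps)\mu T\big)$. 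Optimizing over $\lambda$ (choose $e^{\lambda} = 1+\eps$) and using the elementary inequality $(1+\eps)\ln(1+\eps) - \eps \geq \eps^2/(2\ln 2)$ for $\eps \in (0,1/2)$ bounds this by $n\exp(-T\mu\eps^2/(2\ln 2))$. The lower-tail bound is entirely symmetric: apply the same argument to $-X$ (equivalently, use $\lambda < 0$ and the analogous convexity bound), giving $\prob(\lambda_{\min}(\sum_i X_i) \leq (1-\eps)\mu T) \leq n\exp(-T\mu\eps^2/(2\ln 2))$, where again one uses $(1-\eps)\ln(1-\eps) + \eps \geq \eps^2/(2\ln 2)$ on $(0,1/2)$. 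Adding the two bounds produces the factor $2n$ in the statement.

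I expect the main obstacle to be the clean justification of the operator-valued steps — specifically, that Golden--Thompson ($\trace e^{A+B} \leq \trace(e^A e^B)$ for Hermitian $A,B$) can be iterated to peel off one i.i.d.\ summand at a time while keeping the remaining matrix exponential inside the trace, and that the scalar-to-matrix lifting $e^{\lambda t} \leq 1 + (e^\lambda - 1)t$ on $[0,1]$ is valid at the matrix level since it only needs to hold on the spectrum of $X$. Once those two points are stated carefully, the rest is the standard Chernoff-style optimization over $\lambda$ and the two numerical inequalities for $(1\pm\eps)\ln(1\pm\eps)\mp\eps$, which are routine on the interval $(0,1/2)$. (This is essentially the proof in \cite{AW}; see also \cite{V,H11}.)
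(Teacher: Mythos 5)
The paper does not actually prove this statement; it cites it verbatim as \cite[Theorem~19]{AW} and uses it as a black box (the closest the paper comes to a proof is the pessimistic-estimator computation in Section~\ref{sec:pe}). So the only thing to assess is your argument on its own terms. Your overall strategy --- union bound over the two tails, Markov applied to $\trace\exp(\lambda\sum_i X_i)$, Golden--Thompson to peel off i.i.d.\ summands, the operator bound $e^{\lambda X}\preceq I+(e^{\lambda}-1)X$, and a Chernoff optimization --- is exactly the standard Ahlswede--Winter route, and all of those steps are sound (with the caveat that the peeling must be stated at the trace level, $\mean\trace\exp(\lambda\sum_i X_i)\le n\,\lVert\mean e^{\lambda X}\rVert^T$ via $\trace(AB)\le\lVert B\rVert\trace A$ for $A\succeq 0$, not as the operator inequality $\mean\exp(\lambda\sum_i X_i)\preceq(\mean e^{\lambda X})^T$ you wrote, which is not what Golden--Thompson gives).

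The genuine gap is in the last step: both ``elementary inequalities'' you invoke are false. One has $(1+\eps)\ln(1+\eps)-\eps=\tfrac{\eps^2}{2}-\tfrac{\eps^3}{6}+\dotsb$ and $(1-\eps)\ln(1-\eps)+\eps=\tfrac{\eps^2}{2}+\tfrac{\eps^3}{6}+\dotsb$, i.e.\ both are asymptotically $\eps^2/2$, whereas $\eps^2/(2\ln 2)\approx 0.721\,\eps^2$. Concretely, at $\eps=0.1$ the left-hand sides are about $0.00484$ and $0.00518$ while the right-hand side is about $0.00721$, and the failure persists on all of $(0,1/2)$. The culprit is the relaxation $1+(e^{\lambda}-1)\mu\le\exp((e^{\lambda}-1)\mu)$ before optimizing: that discards part of the moment generating function and caps the achievable exponent at the Bennett-type quantity $T\mu[(1\pm\eps)\ln(1\pm\eps)\mp\eps]\approx T\mu\eps^2/2$, which is strictly weaker than the stated $T\mu\eps^2/(2\ln 2)$. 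To reach the constant in the theorem you must keep the factor $(1+(e^{\lambda}-1)\mu)^T$ intact and optimize $\lambda$ exactly, which yields $e^{\lambda}=\frac{(1+\eps)(1-\mu)}{1-(1+\eps)\mu}$ (precisely the parameters $t,t'$ appearing in Section~\ref{sec:pe}) and the bound $n\exp\paren{-T\,D((1\pm\eps)\mu\,\Vert\,\mu)}$ with the full binary relative entropy $D(p\Vert q)=p\ln(p/q)+(1-p)\ln\frac{1-p}{1-q}$; one then invokes the divergence lower bound used in \cite{AW,WX} rather than the two scalar inequalities you stated. As written, your argument proves the theorem only with $2\ln 2$ replaced by a larger constant (which, to be fair, is all that the application in Section~\ref{sec:aw} actually needs).
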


In our case, $\mean(X) =(1/n) I$ and $X \in [0,1]$. So $\mu =
1/n$. Thus, if $T > (2\ln 2)\cdot \frac{\ln n +2 \ln 2}{\eps^2\mu} =
O(n \log n/\eps^2)$, then $\prob \paren[\Big]{\frac{1}{\mu T}
  \sum_{i=1}^T X_i \not\in [1-\eps,1+\eps]} < 1/2$.
Thus, with constant probability, we obtain a solution $y$ to Problem~\ref{prob:rankn}
where $y$ has only $O(n \log n / \eps^2)$ non-zero entries.

\section{Solving Problem~\ref{prob:rankn} by BSS}
\label{sec:bss}

In our modification of the BSS algorithm~\cite{BSS09},
we keep a matrix~$A$ of the form $A = \sum_{i} y_i
B_i$ with $y \geq 0$, starting with $A = 0$, and at each iteration we
add another term $\alpha B_j$ to~$A$. We enforce the invariant that the
eigenvalues of~$A$ lie in $[\ell,u]$, where $u$ and~$\ell$ are parameters given by $u
= u_0 + t\delta_U$ and $\ell = \ell_0 + t\delta_L$ after $t$
iterations.
This procedure is presented in Algorithm~\ref{alg:bss}.
The step of the algorithm which finds $B_j$ and $\alpha$ can be done by exhaustive
search on $j$ and binary search on $\alpha$.
Instead of the binary search, one could also compare the quantities
$U_{A(t-1)}(B_j)$ and $L_{A(t-1)}(B_j)$ defined below.

\begin{algorithm}
\begin{outer_alg}
\item	\textbf{procedure} SparsifySumOfMatricesByBSS($B_1,\ldots,B_m$, $\eps$)
\item	\textbf{input:} Matrices $B_1,\ldots,B_m \in \Psd$ such that $\sum_i B_i = I$, and a parameter $\eps \in (0,1)$.
\item	\textbf{output:} A vector $y$
    with $O(n / \eps^2)$ nonzero entries
    such that $I \preceq \sum_i y_i B_i \preceq (1+O(\eps)) I$.
\item   Initially $A(0):=0$ and $y(0) := 0$.
        Set parameters $u_0, \ell_0, \delta_L, \delta_U$ as in \eqref{eq:bssparams}
        and $T:=4n/\eps^2$.
\item   Define the potential functions
        $\Phi^u(A) := \trace (uI-A)^{-1}$ and $\Phi_{\ell}(A) := \trace (A-\ell I)^{-1}$.
\item   For $t=1,\ldots,T$
    \begin{alg}
        \item Set $u_t := u_{t-1} + \delta_U$ and $\ell_t := \ell_{t-1} + \delta_L$.
        \item Find a matrix $B_j$ and a value $\alpha > 0$ such that
        $A(t-1)+\alpha B_j \in [\ell_t,u_t]$, and
        $$\Phi^{u_t}(A(t-1)+\alpha B_j) \leq \Phi^{u_{t-1}}(A(t-1))
        \quad\text{and}\quad
        \Phi_{\ell_t}(A(t-1)+\alpha B_j) \leq \Phi_{\ell_{t-1}}(A(t-1)).$$
        \item Set $A(t) := A(t-1) + \alpha B_j$ and $y(t) := y(t-1) + \alpha e_j$.
    \end{alg}
\item   Return $y(T) / \lambda_{\min}(A(T))$.
\end{outer_alg}
\caption{A procedure for solving Problem~\ref{prob:rankn} based on the BSS method.}
\label{alg:bss}
\end{algorithm}

In the original BSS algorithm, the matrices are rank one: $B_j = v_j v_j^T$
for some vector~$v_j$.
Their Lemmas~3.3 and~3.4 give sufficient conditions on the new term $\alpha v_j v_j \transpose$
so that the invariant on the eigenvalues is maintained; Lemma~3.5 gives sufficient conditions on
the remaining parameters so that a suitable new term $\alpha v_j v_j^T$ exists
with $\alpha > 0$.
In this section we generalize those lemmas to allow $B_i$ matrices of arbitrary rank.

Let $A \in \Sym$. If $u \in \Reals$ with $\lambda_{\max}(A) < u$, define
$\Phi^u(A) := \trace(uI-A)^{-1}$. If $\ell \in \Reals$ with $\lambda_{\min}(A) >
\ell$, define $\Phi_{\ell}(A) := \trace(A-\ell I)^{-1}$. Note that $\Phi_{\ell}(A) = \sum_i 1/(\lambda_i-\ell)$ and $\Phi^u(A) = \sum_i 1/(u-\lambda_i)$, where $\lambda_1,\dotsc,\lambda_n$ are the eigenvalues of~$A$.
\begin{lemma}[Analog of Lemma 3.3 in \cite{BSS09}]
  \label{lemma:BSS3.3}
  Let $A \in \Sym$ and $X \in \Psd$ with $X \neq 0$. Let $u \in \Reals$ and
  $\delta_U > 0$. Suppose $\lambda_{\max}(A) < u$. Let $u' := u + \delta_U$ and
  $M := u'I-A$. If
  \[
  \frac{1}{\alpha}
  \geq
  \frac{
    \iprod{M^{-2}}{X}
  }{
    \Phi^u(A) - \Phi^{u'}(A)
  }
  +
  \iprod{M^{-1}}{X}
  =: U_A(X),
  \]
  then
  $
  \lambda_{\max}(A+\alpha X) < u'
  $
  and
  $
  \Phi^{u'}(A+\alpha X) \leq \Phi^u(A)$.
\end{lemma}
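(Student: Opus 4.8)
The plan is to mimic the Batson--Spielman--Srivastava argument, but replacing the rank-one update $\alpha vv\transpose$ with an arbitrary PSD update $\alpha X$ and using the Sherman--Morrison--Woodbury identity in its general form. First I would verify the eigenvalue claim $\lambda_{\max}(A+\alpha X) < u'$. Since $\lambda_{\max}(A) < u < u'$, the matrix $M = u'I - A$ is positive definite, so I can write $u'I - (A+\alpha X) = M - \alpha X = M^{1/2}(I - \alpha M^{-1/2} X M^{-1/2})M^{1/2}$. Thus $\lambda_{\max}(A+\alpha X) < u'$ is equivalent to $\alpha \lambda_{\max}(M^{-1/2} X M^{-1/2}) < 1$, i.e. to $\alpha^{-1} > \lambda_{\max}(M^{-1/2}XM^{-1/2})$. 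I would bound the latter by the trace: $\lambda_{\max}(M^{-1/2}XM^{-1/2}) \le \trace(M^{-1/2}XM^{-1/2}) = \iprod{M^{-1}}{X}$, which is one of the two terms in $U_A(X)$. Since the other term $\iprod{M^{-2}}{X}/(\Phi^u(A) - \Phi^{u'}(A))$ is strictly positive (here I need $X \neq 0$ and $\Phi^u(A) > \Phi^{u'}(A)$, the latter because $u \mapsto \Phi^u(A)$ is strictly decreasing), the hypothesis $\alpha^{-1} \ge U_A(X)$ gives $\alpha^{-1} > \iprod{M^{-1}}{X} \ge \lambda_{\max}(M^{-1/2}XM^{-1/2})$, so $M - \alpha X \succ 0$, proving the eigenvalue bound.

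Next I would establish the potential inequality $\Phi^{u'}(A+\alpha X) \le \Phi^u(A)$. Write $\Phi^{u'}(A+\alpha X) = \trace\big((M - \alpha X)^{-1}\big)$. The key computation is to apply the Woodbury identity to $(M - \alpha X)^{-1}$. Factoring through $M^{1/2}$, one gets
\[
(M-\alpha X)^{-1} = M^{-1} + \alpha M^{-1/2}(I - \alpha W)^{-1} W M^{-1/2},
\]
where $W := M^{-1/2} X M^{-1/2} \succeq 0$, and since $\alpha \lambda_{\max}(W) < 1$ we have $(I-\alpha W)^{-1} \succeq I$, hence $(I-\alpha W)^{-1} W \succeq W$. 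Actually, a cleaner route is the scalar-style bound used by BSS: since $0 \preceq \alpha W \prec I$, we have $(I - \alpha W)^{-1} \preceq I + \frac{\alpha W}{1 - \alpha \lambda_{\max}(W)}$ — but this is loose; instead I would track the trace directly. Taking traces and using cyclicity,
\[
\Phi^{u'}(A+\alpha X) = \Phi^{u'}(A) + \alpha \trace\big((I-\alpha W)^{-1} W M^{-1}\big).
\]
I would then bound $\trace\big((I-\alpha W)^{-1}WM^{-1}\big) \le \frac{1}{1-\alpha\lambda_{\max}(W)}\trace(WM^{-1}) = \frac{\iprod{M^{-2}}{X}}{1-\alpha\iprod{M^{-1}}{X}}$ using $\lambda_{\max}(W) \le \iprod{M^{-1}}{X}$ and $(I-\alpha W)^{-1} \preceq \frac{1}{1-\alpha\lambda_{\max}(W)} I$ together with $WM^{-1}$ having nonnegative-trace against PSD... more carefully, I'd use $\trace(CWM^{-1}) \le \|C\|\trace(WM^{-1})$ for $C = (I-\alpha W)^{-1} \succeq 0$ and $WM^{-1}$ — but $WM^{-1}$ need not be PSD, so instead I would go through $\trace\big((I-\alpha W)^{-1}WM^{-1}\big) = \trace\big(M^{-1/2}(I-\alpha W)^{-1}W M^{-1/2}\big)$ and bound the middle PSD matrix $(I-\alpha W)^{-1}W \preceq \frac{1}{1-\alpha\lambda_{\max}(W)}W$ in Loewner order, which then transfers under conjugation and trace. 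Combining, it suffices to show
\[
\Phi^{u'}(A) + \frac{\alpha\,\iprod{M^{-2}}{X}}{1 - \alpha\,\iprod{M^{-1}}{X}} \le \Phi^u(A),
\]
i.e. $\frac{\alpha\,\iprod{M^{-2}}{X}}{1-\alpha\,\iprod{M^{-1}}{X}} \le \Phi^u(A) - \Phi^{u'}(A)$. Since $1 - \alpha\iprod{M^{-1}}{X} > 0$, this rearranges to $\alpha^{-1} \ge \frac{\iprod{M^{-2}}{X}}{\Phi^u(A)-\Phi^{u'}(A)} + \iprod{M^{-1}}{X} = U_A(X)$, which is exactly the hypothesis.

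The main obstacle I anticipate is the Loewner-order step $(I-\alpha W)^{-1}W \preceq \frac{1}{1-\alpha\lambda_{\max}(W)}W$ and ensuring the trace manipulation is valid when $X$ has rank greater than one (in the rank-one case everything collapses to scalars and is trivial). Concretely I must be careful that $WM^{-1}$ is not symmetric, so I would consistently rewrite $\trace\big((I-\alpha W)^{-1}WM^{-1}\big)$ as the trace of a product of a PSD matrix with $M^{-1} \succ 0$, or diagonalize $W$ in an orthonormal eigenbasis and expand both sides in that basis — in the eigenbasis of $W$, the factor $(I-\alpha W)^{-1}$ acts diagonally with entries $(1-\alpha w_k)^{-1} \le (1-\alpha\lambda_{\max}(W))^{-1}$, so $\trace\big((I-\alpha W)^{-1}WM^{-1}\big) = \sum_k \frac{w_k}{1-\alpha w_k}(M^{-1})_{kk} \le \frac{1}{1-\alpha\lambda_{\max}(W)}\sum_k w_k (M^{-1})_{kk} = \frac{1}{1-\alpha\lambda_{\max}(W)}\trace(WM^{-1})$, using that $(M^{-1})_{kk} \ge 0$. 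This makes the estimate clean and avoids any subtlety with non-symmetric products. Everything else is the same bookkeeping as in BSS.
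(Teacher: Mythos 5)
Your proof is correct, and the core of it — expanding $\trace (M-\alpha X)^{-1}$ into $\Phi^{u'}(A)$ plus an error term, bounding the middle inverse via $(I-\alpha W)^{-1}W \preceq (1-\alpha\lambda_{\max}(W))^{-1}W$, and then using $\lambda_{\max}(W)\leq\trace(W)=\iprod{M^{-1}}{X}$ to land exactly on the hypothesis $1/\alpha\geq U_A(X)$ — is the same computation as the paper's, which works with $V:=X^{1/2}$ and the Sherman--Morrison--Woodbury identity rather than conjugating by $M^{1/2}$; the two are related by a congruence and $\lambda_{\max}(\qform{M^{-1}}{V})=\lambda_{\max}(M^{-1/2}XM^{-1/2})$. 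Your eigenbasis computation correctly sidesteps the non-symmetry of $WM^{-1}$, and your observation that $\iprod{M^{-2}}{X}>0$ (from $X\neq 0$) makes the inequality $1/\alpha>\iprod{M^{-1}}{X}$ strict is exactly the point the paper also relies on. Where you genuinely diverge is the claim $\lambda_{\max}(A+\alpha X)<u'$: you prove it directly and up front, via $u'I-(A+\alpha X)=M^{1/2}(I-\alpha W)M^{1/2}\succ 0$, whereas the paper establishes the potential bound first and then derives the eigenvalue bound by a continuity-and-contradiction argument (if some intermediate $\alpha'$ pushed an eigenvalue to $u'-\eps$, the potential would exceed $\Phi^u(A)$). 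Your ordering is arguably cleaner, since it guarantees invertibility of $M-\alpha X$ before the trace expansion is performed; the paper's argument has the advantage of reusing the potential bound itself and of generalizing to situations where one only controls the potential.
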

\begin{proof}
  Clearly $M \succ 0$. Let $V := X^{1/2}$. By the Sherman-Morrison-Woodbury formula \cite{Hager},
  \[
  \begin{split}
    \Phi^{u'}(A+\alpha X)
    & =
    \trace\paren{M-\alpha VV^T}^{-1}
    =
    \trace\paren[\big]{
      M^{-1}
      +
      \alpha M^{-1} V
      \paren{
        I-\alpha\qform{M^{-1}}{V}
      }^{-1}
      V^T M^{-1}
    }
    \\
    & =
    \Phi^{u'}(A)
    +
    \trace\paren[\big]{
      \alpha M^{-1} V
      \paren{
        I-\alpha\qform{M^{-1}}{V}
      }^{-1}
      V^T M^{-1}
    }.
  \end{split}
  \]
  Since $M^{-1} \succ 0$, $X \neq 0$ and $ \Phi^u(A) > \Phi^{u'}(A)$, our
  hypotheses imply
  $1/\alpha
  >
  \iprod{M^{-1}}{X} = \trace\paren{\qform{M^{-1}}{V}}
  \geq
  \lambda_{\max}\paren{\qform{M^{-1}}{V}} \geq 0$,
  so
  $
  \beta :=
  \lambda_{\min}\paren{
    I-\alpha\qform{M^{-1}}{V}
  }
  =
  1
  -
  \alpha\lambda_{\max}\paren{\qform{M^{-1}}{V}}
  > 0
  $
  and by, e.g., \cite[Corollary~7.7.4]{HornJ90a},
  \[
  0
  \prec
  \beta I
  \preceq
  I-\alpha\qform{M^{-1}}{V}
  \implies
  0
  \prec
  \paren{
    I-\alpha\qform{M^{-1}}{V}
  }^{-1}
  \preceq
  \beta^{-1} I.
  \]
  Thus,
  \[
  \begin{split}
    \Phi^{u'}(A+\alpha X)
    & \leq
    \Phi^{u'}(A)
    +
    \alpha\beta^{-1}
    \trace\paren{
      \qform{M^{-2}}{V}
    }
    \\
    & =
    \Phi^u(A)
    -
    \paren{
      \Phi^u(A)
      -
      \Phi^{u'}(A)
    }
    +
    \alpha\beta^{-1}
    \iprod{M^{-2}}{X}
  \end{split}
  \]

  To prove that $\Phi^{u'}(A+\alpha X) \leq \Phi^u(A)$, it suffices to show
  that $ \alpha\beta^{-1} \iprod{M^{-2}}{X} \leq \Phi^u(A) -
  \Phi^{u'}(A)$. This is equivalent to
  \[
  \begin{split}
    &
    \frac{
      \iprod{M^{-2}}{X}
    }{
      1/\alpha
      -
      \lambda_{\max}\paren{\qform{M^{-1}}{V}}
    }
    \leq
    \Phi^u(A)
    -
    \Phi^{u'}(A),
  \end{split}
  \]
  which follows from $1/\alpha \geq U_A(X)$ since $\lambda_{\max}(\qform{M^{-1}}{V})
  \leq \trace(\qform{M^{-1}}{V}) = \iprod{M^{-1}}{X}$.

  It remains to show that $\lambda_{\max}(A+\alpha X) < u'$. Suppose not. Choose $\eps \in (0,\delta_U)$ such that $1/\eps > \Phi^u(A)$.
  By continuity, for some $\alpha' \in (0, \alpha)$ we  have $\lambda_{\max}(A+\alpha'X) = u'-\eps$. Since $1/\alpha' \geq 1/\alpha \geq U_A(X)$, we get
  $\Phi^{u'}(A+\alpha'X) \geq 1/\eps > \Phi^u(A) \geq \Phi^{u'}(A+\alpha'X)$, a contradiction.
\end{proof}

\begin{lemma}[Analog of Lemma 3.4 in \cite{BSS09}]
  \label{lemma:BSS3.4}
  Let $A \in \Sym$ and $X \in \Psd$, with $n \geq 2$. Let $\ell \in \Reals$ and
  $\delta_L > 0$. Suppose $\lambda_{\min}(A) > \ell$ and $\Phi_\ell(A) \leq
  1/\delta_L$. Let $\ell' := \ell + \delta_L$ and $N := A-\ell'I$. If
  \[
  0
  <
  \frac{1}{\alpha}
  \leq
  \frac{
    \iprod{N^{-2}}{X}
  }{
    \Phi_{\ell'}(A) - \Phi_{\ell}(A)
  }
  -
  \iprod{N^{-1}}{X}
  =: L_A(X),
  \]
  then
  $
    \lambda_{\min}(A+\alpha X) > \ell'$ and
    $\Phi_{\ell'}(A+\alpha X) \leq \Phi_{\ell}(A)$.
    Moreover,
    $N \succ 0$.
\end{lemma}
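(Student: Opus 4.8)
The plan is to mirror the proof of Lemma~\ref{lemma:BSS3.3}, with the pleasant twist that in this ``lower barrier'' direction the delicate continuity argument can be dispensed with entirely. First I would settle the ``moreover'' clause $N \succ 0$, since it underpins everything else. Writing $\lambda_1,\dotsc,\lambda_n$ for the eigenvalues of~$A$, we have $\Phi_\ell(A) = \sum_i 1/(\lambda_i-\ell)$ with every summand strictly positive because $\lambda_{\min}(A) > \ell$. Since $n \geq 2$, each summand is strictly smaller than the whole sum, so $1/(\lambda_i-\ell) < \Phi_\ell(A) \leq 1/\delta_L$, i.e., $\lambda_i > \ell + \delta_L = \ell'$ for all $i$; hence $N = A-\ell' I \succ 0$. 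This is the only place where the hypotheses $n \geq 2$ and $\Phi_\ell(A) \leq 1/\delta_L$ are used.

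With $N \succ 0$ in hand the eigenvalue bound is immediate: we may assume $X \neq 0$ (otherwise $L_A(X) = 0$ and no admissible $\alpha$ exists), and since $\alpha > 0$ and $X \succeq 0$ we get $N + \alpha X \succeq N \succ 0$, so $\lambda_{\min}(A+\alpha X) = \lambda_{\min}(N+\alpha X) + \ell' > \ell'$. In particular $N+\alpha X$ is invertible, which is all that is needed to apply the Sherman--Morrison--Woodbury formula legitimately; unlike in Lemma~\ref{lemma:BSS3.3}, positive definiteness here holds automatically and monotonically in $\alpha$, so no continuity argument is required. Setting $V := X^{1/2}$, the computation of Lemma~\ref{lemma:BSS3.3} carries over (with a sign change) to give
\[
\Phi_{\ell'}(A + \alpha X)
=
\trace\paren{N + \alpha VV^T}^{-1}
=
\Phi_{\ell'}(A)
-
\alpha\,\trace\paren[\Big]{\paren{I + \alpha\qform{N^{-1}}{V}}^{-1}\qform{N^{-2}}{V}}.
\]

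Next the subtracted trace must be bounded from \emph{below} --- the reverse of Lemma~\ref{lemma:BSS3.3}, and the one point needing care. With $\beta' := 1 + \alpha\lambda_{\max}(\qform{N^{-1}}{V}) > 0$ we have $0 \prec I + \alpha\qform{N^{-1}}{V} \preceq \beta' I$, hence $\paren{I + \alpha\qform{N^{-1}}{V}}^{-1} \succeq (\beta')^{-1}I$; since $\qform{N^{-2}}{V} \succeq 0$, this yields $\trace\paren[\big]{\paren{I + \alpha\qform{N^{-1}}{V}}^{-1}\qform{N^{-2}}{V}} \geq (\beta')^{-1}\iprod{N^{-2}}{X}$, and therefore $\Phi_{\ell'}(A+\alpha X) \leq \Phi_{\ell'}(A) - \alpha(\beta')^{-1}\iprod{N^{-2}}{X}$.

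Finally, because $\ell < \ell'$ and every $\lambda_i > \ell'$ we have $\Phi_{\ell'}(A) - \Phi_\ell(A) > 0$, so $\Phi_{\ell'}(A+\alpha X) \leq \Phi_\ell(A)$ reduces to $1/\alpha + \lambda_{\max}(\qform{N^{-1}}{V}) \leq \iprod{N^{-2}}{X}/(\Phi_{\ell'}(A) - \Phi_\ell(A))$; and since $\lambda_{\max}(\qform{N^{-1}}{V}) \leq \trace(\qform{N^{-1}}{V}) = \iprod{N^{-1}}{X}$, it suffices that $1/\alpha \leq \iprod{N^{-2}}{X}/(\Phi_{\ell'}(A) - \Phi_\ell(A)) - \iprod{N^{-1}}{X} = L_A(X)$, which is precisely the hypothesis. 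The only genuine obstacle is the $N \succ 0$ step, where $n \geq 2$ is what forces the strict inequality, together with choosing the correct direction for the bound on $\paren{I + \alpha\qform{N^{-1}}{V}}^{-1}$.
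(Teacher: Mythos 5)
Your proof is correct and follows essentially the same route as the paper's: establish $N \succ 0$ from $\Phi_\ell(A) \le 1/\delta_L$ (using $n \ge 2$ for strictness), apply Sherman--Morrison--Woodbury, lower-bound the subtracted trace via $(I+\alpha\qform{N^{-1}}{V})^{-1} \succeq \beta^{-1}I$, and reduce to the hypothesis $1/\alpha \le L_A(X)$ using $\lambda_{\max}(\qform{N^{-1}}{V}) \le \iprod{N^{-1}}{X}$. You even supply slightly more detail than the paper on why $n \ge 2$ forces the strict inequality $\lambda_i > \ell'$, and your observation that the continuity argument from the upper-barrier lemma is unnecessary here matches the paper exactly.
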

\begin{proof}
  Note that $\lambda_{\min}(A) > \ell$ and $\Phi_{\ell}(A) \leq 1/\delta_L$ imply that
  $N \succ 0$, and therefore $\lambda_{\min}(A+\alpha X) > \ell'$.
  Let $V := X^{1/2}$. By the Sherman-Morrison-Woodbury formula,
  \[
  \begin{split}
    \Phi_{\ell'}(A+\alpha X)
    & =
    \trace\paren{N+\alpha VV^T}^{-1}
    =
    \trace\paren[\big]{
      N^{-1}
      -
      \alpha N^{-1} V
      \paren{
        I+\alpha\qform{N^{-1}}{V}
      }^{-1}
      V^T N^{-1}
    }
    \\
    & =
    \Phi_{\ell'}(A)
    -
    \trace\paren[\big]{
      \alpha N^{-1} V
      \paren{
        I+\alpha\qform{N^{-1}}{V}
      }^{-1}
      V^T N^{-1}
    }.
  \end{split}
  \]

  For
  $
  \beta :=
  \lambda_{\max}\paren{I+\alpha\qform{N^{-1}}{V}},
  $
  we have
  \[
  0
  \prec
  I+\alpha\qform{N^{-1}}{V}
  \preceq
  \beta I
  \implies
  0
  \prec
  \beta^{-1} I
  \preceq
  \paren{
    I+\alpha\qform{N^{-1}}{V}
  }^{-1}.
  \]
  Thus,
  \[
  \begin{split}
    \Phi_{\ell'}(A+\alpha X)
    & \leq
    \Phi_{\ell'}(A)
    -
    \alpha\beta^{-1}
    \trace\paren{
      \qform{N^{-2}}{V}
    }
    \\
    & =
    \Phi_{\ell}(A)
    +
    \paren{
      \Phi_{\ell'}(A)
      -
      \Phi_{\ell}(A)
    }
    -
    \alpha\beta^{-1}
    \iprod{N^{-2}}{X}
  \end{split}
  \]
  We will be done if we show that
  $
  \alpha\beta^{-1}
  \iprod{N^{-2}}{X}
  \geq
  \Phi_{\ell'}(A)
  -
  \Phi_{\ell}(A)$.
  This is equivalent to
  \[
  \begin{split}
    &
    \frac{
      \iprod{N^{-2}}{X}
    }{
      1/\alpha
      +
      \lambda_{\max}\paren{\qform{N^{-1}}{V}}
    }
    \geq
    \Phi_{\ell'}(A)
    -
    \Phi_{\ell}(A)
  \end{split}
  \]
  which follows from $0 < 1/\alpha \leq L_A(X)$, since $\Phi_{\ell'}(A) > \Phi_{\ell}(A)$, $N
  \succ 0$, and $\lambda_{\max}(\qform{N^{-1}}{V}) \leq
  \trace(\qform{N^{-1}}{V}) = \iprod{N^{-1}}{X}$.
\end{proof}

The next lemma can be proved by a syntactic modification of the proof of Lemma~3.5 in~\cite{BSS09}.
\begin{lemma}[Analog of Lemma 3.5 in \cite{BSS09}]
  \label{lemma:BSS3.5}
  Let $A \in \Sym$ with $n \geq 2$, and let $u, \ell \in \Reals$ and $\eps_U,
  \delta_U, \eps_L, \delta_L > 0$ such that $\lambda_{\max}(A) < u$,
  $\lambda_{\min}(A) > \ell$, $\Phi^u(A) \leq \eps_U$, and $\Phi_{\ell}(A) \leq
  \eps_L$. Let $B_1, \dotsc, B_m \in \Sym$ such that $\sum_{i} B_i =
  I$. If
  \begin{equation}
    \label{eq:BSS3.5-condition}
    0 \leq \frac{1}{\delta_U} + \eps_U \leq \frac{1}{\delta_L}
    -\eps_L
  \end{equation}
  then there exists $j \in [m]$ and $\alpha > 0$ for which $L_A(B_j) \geq 1/\alpha \geq U_A(B_j)$.
\end{lemma}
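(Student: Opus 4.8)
The plan is to reduce the existence of a good pair $(j,\alpha)$ to the single scalar inequality $\sum_{j} U_A(B_j) < \sum_{j} L_A(B_j)$; this is essentially the proof of Lemma~3.5 of~\cite{BSS09} verbatim, the only change being that the rank-one terms $v_jv_j\transpose$ there become general positive semidefinite $B_j$ and the quadratic forms $v_j\transpose(\,\cdot\,)v_j$ become trace inner products $\iprod{\,\cdot\,}{B_j}$ --- which is harmless because $\sum_j v_jv_j\transpose = I$ simply becomes $\sum_j B_j = I$. Granting the scalar inequality, some index $j$ satisfies $U_A(B_j) < L_A(B_j)$ (otherwise summing the reverse pointwise inequalities contradicts it); writing $M := (u+\delta_U)I-A \succ 0$, the fact that $B_j \succeq 0$ gives $U_A(B_j) \geq \iprod{M^{-1}}{B_j} \geq 0$, so $L_A(B_j) > 0$ and $\alpha := 1/L_A(B_j) > 0$ satisfies $L_A(B_j) \geq 1/\alpha \geq U_A(B_j)$.

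To prove $\sum_{j} U_A(B_j) < \sum_{j} L_A(B_j)$, put $u' := u+\delta_U$, $\ell' := \ell+\delta_L$, $M := u'I-A$, $N := A-\ell'I$, and let $\lambda_1,\dots,\lambda_n$ be the eigenvalues of $A$. Note \eqref{eq:BSS3.5-condition} forces $\delta_L\eps_L < 1$, so $\Phi_\ell(A) \leq \eps_L < 1/\delta_L$, which --- exactly as in Lemma~\ref{lemma:BSS3.4} --- yields $N \succ 0$. Using $\sum_j B_j = I$, $\trace M^{-1} = \Phi^{u'}(A)$, and $\trace N^{-1} = \Phi_{\ell'}(A)$,
\[
\sum_{j} U_A(B_j) = \frac{\trace M^{-2}}{\Phi^u(A) - \Phi^{u'}(A)} + \Phi^{u'}(A),
\qquad
\sum_{j} L_A(B_j) = \frac{\trace N^{-2}}{\Phi_{\ell'}(A) - \Phi_{\ell}(A)} - \Phi_{\ell'}(A).
\]
For the first, the identity $\Phi^u(A) - \Phi^{u'}(A) = \delta_U \sum_i \big[(u-\lambda_i)(u'-\lambda_i)\big]^{-1}$ and the bound $(u-\lambda_i)(u'-\lambda_i) < (u'-\lambda_i)^2$ give $\Phi^u(A) - \Phi^{u'}(A) > \delta_U\trace M^{-2}$, hence $\sum_j U_A(B_j) < 1/\delta_U + \Phi^{u'}(A) \leq 1/\delta_U + \Phi^u(A) \leq 1/\delta_U + \eps_U$.

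The remaining and more delicate bound is $\sum_j L_A(B_j) \geq 1/\delta_L - \eps_L$; since $\Phi_\ell(A) \leq \eps_L$, it suffices to show $\sum_j L_A(B_j) \geq 1/\delta_L - \Phi_\ell(A)$, i.e., after clearing the positive denominator,
\[
\trace N^{-2} ~\geq~ \frac{\Phi_{\ell'}(A) - \Phi_\ell(A)}{\delta_L} + \big(\Phi_{\ell'}(A) - \Phi_\ell(A)\big)^2 .
\]
Writing $m_i := \lambda_i - \ell' > 0$ (so $\lambda_i - \ell = m_i + \delta_L$), a direct computation gives $\Phi_{\ell'}(A) - \Phi_\ell(A) = \delta_L\sum_i \frac{1}{m_i(m_i+\delta_L)}$ and $\trace N^{-2} - \delta_L^{-1}\big(\Phi_{\ell'}(A)-\Phi_\ell(A)\big) = \delta_L\sum_i \frac{1}{m_i^2(m_i+\delta_L)}$, so the displayed inequality is equivalent to $\sum_i \frac{1}{m_i^2(m_i+\delta_L)} \geq \delta_L\big(\sum_i \frac{1}{m_i(m_i+\delta_L)}\big)^2$. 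This in turn follows from the Cauchy--Schwarz inequality $\big(\sum_i \frac{1}{m_i(m_i+\delta_L)}\big)^2 \leq \big(\sum_i \frac{1}{m_i^2(m_i+\delta_L)}\big)\big(\sum_i \frac{1}{m_i+\delta_L}\big) = \big(\sum_i \frac{1}{m_i^2(m_i+\delta_L)}\big)\Phi_\ell(A)$, together with $\delta_L\Phi_\ell(A) \leq \delta_L\eps_L < 1$. Combining the two estimates with \eqref{eq:BSS3.5-condition} yields $\sum_j U_A(B_j) < 1/\delta_U + \eps_U \leq 1/\delta_L - \eps_L \leq \sum_j L_A(B_j)$, as required. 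I expect the only real obstacle to be this last bound: the crude estimate $\trace N^{-2}/(\Phi_{\ell'}(A)-\Phi_\ell(A)) > 1/\delta_L$ alone only gives $\sum_j L_A(B_j) > 1/\delta_L - \Phi_{\ell'}(A)$, which is too weak since $\Phi_{\ell'}(A) > \Phi_\ell(A)$; salvaging the extra quadratic term $\big(\Phi_{\ell'}(A)-\Phi_\ell(A)\big)^2$ via the Cauchy--Schwarz step (which is exactly where $\delta_L\eps_L < 1$ is used) is what closes the gap.
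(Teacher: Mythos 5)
Your proof is correct and follows the same route as the paper: reduce to the averaged scalar inequality $\sum_j U_A(B_j) \leq \sum_j L_A(B_j)$ using $\sum_j B_j = I$ and bilinearity of the trace inner product, exactly as the paper does. The only difference is that the paper then cites Lemma~3.5 of \cite{BSS09} as a black box for the bounds $\sum_j U_A(B_j) \leq 1/\delta_U + \eps_U$ and $\sum_j L_A(B_j) \geq 1/\delta_L - \eps_L$, whereas you reproduce that calculation in full (including the Cauchy--Schwarz step for the lower potential, which is indeed where $\delta_L\eps_L < 1$ is needed); your version of that calculation checks out.
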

\begin{proof}
  As in~\cite[Lemma~3.5]{BSS09}, it suffices to show that $\sum_i L_A(B_i) \geq \sum_i U_A(B_i)$.
  Let $u' := u + \delta_U$, $M := u'I-A$, $\ell' := \ell + \delta_L$, and $N := A-\ell'I$.
  It follows from the bilinearity of $\iprod{\cdot}{\cdot}$ and the assumption $\sum_i B_i = I$ that
  \begin{subequations}
  \begin{gather}
    \label{eq:BSS3.5-goal1}
    \sum_i U_A(B_i)
    =
    \frac{
      \trace
      M^{-2}
    }{
      \Phi^u(A)-\Phi^{u'}(A)
    }
    +
    \trace M^{-1}
    \\
    \label{eq:BSS3.5-goal2}
    \sum_i L_A(B_i)
    =
    \frac{
      \trace
      N^{-2}
    }{
      \Phi_{\ell'}(A)-\Phi_{\ell}(A)
    }
    -
    \trace N^{-1}
  \end{gather}
  \end{subequations}
It is shown in \cite[Lemma~3.5]{BSS09} that
\eqref{eq:BSS3.5-goal1} is at most \eqref{eq:BSS3.5-goal2},
completing the proof.
\end{proof}

Now we set the parameters of Lemma~\ref{lemma:BSS3.5} similarly as
in~\cite{BSS09}:
\begin{equation}
  \label{eq:bssparams}
  \delta_L := 1
  \qquad
  \eps_L := \frac{\eps}{2}
  \qquad
  \ell_0 := -\frac{n}{\eps_L}
  \qquad
  \delta_U := \frac{2+\eps}{2-\eps}
  \qquad
  \eps_U := \frac{\eps}{2\delta_U}
  \qquad
  u_0 := \frac{n}{\eps_U}.
\end{equation}
So~\eqref{eq:BSS3.5-condition} holds with equality. If $A$ is the matrix
obtained after $T = 4n/\eps^2$ iterations, then
\[
\frac{\lambda_{\max}(A)}{\lambda_{\min}(A)}
\leq
\frac{u_0+T\delta_U}{\ell_0+T\delta_L}
=
\paren[\bigg]{\frac{2+\eps}{2-\eps}}^2
\leq
\frac{1+\eps}{1-\eps}
\]
so $A' := A/\lambda_{\min}(A)$ satisfies $I \preceq A' \preceq
(1+\eps)I/(1-\eps)$ and $A'$ is a positive linear combination of $O(n/\eps^2)$
of the matrices $B_i$.

It is easy to check that the previous lemmas also hold if we replace the
set $\Sym$ of symmetric matrices of size $n \times n$ by the set
$\mathbb{H}^n$ of Hermitian matrices of size $n \times n$.

\subsection{Running Time}

At each iteration, we must compute $U_A(B_j)$ and $L_A(B_j)$ for each $j
\in [m]$. The functions $U_A(X)$ and $L_A(X)$ are the inner products
of~$X$ with certain matrices that can be obtained from~$A$ in time
$O(n^3)$. Thus, each iteration runs in time $O(n^3 + mn^2) = O(mn^2)$,
and the total running time after $T = 4n/\eps^2$ iterations is
$O(mn^3/\eps^2)$. We remark that the reduction to the case $B = I$ can
be made in time $O(mn^3)$. This concludes the proof of
Theorem~\ref{thm:bssextension}.

If the matrices $B_i$ have $O(1)$ nonzero entries, as in the graph
sparsification problem, the algorithm can be made to run in time
$O(n^4/\eps^2+mn/\eps^2)$. We briefly sketch the details. To reduce the
problem to the case that $B = I$, we first compute $(B^+)^{1/2}$, where
$B^+$ is the Moore-Penrose pseudoinverse of~$B$. Define the function
$f(X) := (B^+)^{1/2} X (B^+)^{1/2}$ on~$\Sym$.

The reduction now calls for replacing each input matrix~$B_i$ by
$f(B_i)$ and the matrix~$B$ by~$f(B)$. But we shall not do
this. Instead, we do some preprocessing at each iteration as
follows. The function $U_A(X)$ (as well as~$L_A(X)$) is the inner
product of~$X$ with a certain matrix~$V$. Hence, $U_A(f(B_j)) =
\iprod{V}{f(B_j)} = \iprod{f(V)}{B_j}$ for every~$j$, since~$f$ is
self-adjoint. Thus, to compute $U_A(f(B_j))$ for each~$j$, we first
compute the matrix~$f(V)$ in time $O(n^3)$, and now the inner product
$U_A(f(B_j)) = \iprod{f(V)}{B_j}$ can be computed in constant time for
each~$j$, since $B_j$ has $O(1)$ nonzero entries. Thus, each iteration
runs in time $O(n^3+m)$ and the total running time is
$O(n^4/\eps^2+mn/\eps^2)$.

\section{Solving Problem~\ref{prob:rankn} by MMWUM}
\label{sec:mmwum}

Observe that the set of all vectors $y$ that are feasible for
\eqref{eq:sparse-approx-rankn}
is the feasible region of a semidefinite program (SDP).
So solving Problem~\ref{prob:rankn} amounts to finding a \emph{sparse} solution to this SDP.
Here ``sparse'' means that there are few non-zero entries in the solution $y$;
this differs from other notions of ``low-complexity'' SDP solutions, such as the
low-rank solutions studied by So, Ye and Zhang~\cite{SYZ}.

It has long been known known that the multiplicative weight update method
can be used to construct sparse solutions for some linear programs.
A prominent example is the construction of sparse, low-regret solutions to zero-sum games
\cite{FreundSchapire,YoungGreedy,Y}.
(Another example is the work of Charikar et al.~\cite{CCGGP} on approximating
metrics by few tree metrics.)
Building on that idea, one might imagine that Arora and Kale's
matrix multiplicative update method (MMWUM) \cite{AK07}
can construct sparse solutions to \eqref{eq:sparse-approx-rankn}.
In this section, we show that this is indeed possible:
we obtain a solution $y$ to Problem~\ref{prob:rankn} with $O(n \log n / \eps^3)$
nonzero entries.

\subsection{Overview of MMWUM}

The MMWUM is an algorithm that helps us approximately solve an SDP
feasibility problem. The gist of (a slight modification of) the method
is contained in the following result (its proof can be found in Appendix~\ref{sec:mmwum-proofs}):
\begin{theorem}
  \label{thm:block-mmwum}
  Let $T, K, n_1, \dotsc, n_K$ be positive integers. Let $C_k, A_{1,k},\dotsc,
  A_{m, k} \in \Sym[n_k^{}]$ for $k\in[K]$. For each $k \in [K]$, let $\eta_k >
  0$ and $0 < \beta_k \leq 1/2$. Given~$X_1,\dotsc, X_K \in \Sym$, consider the
  system
  \begin{equation}
    \label{eq:block-oracle}
    \sum_{i=1}^m y_i \iprod{A_{i,k}}{X_k}
    \geq \iprod{C_k}{X_k} - \eta_k \trace X_k,
    \quad
    \forall k\in[K],
    \quad
    \text{and}
    \quad
    y \in \Reals_+^m.
  \end{equation}
  For each $k \in [K]$, let $\set{\Pcal_k,\Ncal_k}$ be a partition of $[T]$, let
  $0 < \ell_k \leq \rho_k$, and let $W_k^{(t)} \in \Sym$ and $\ell_k^{(t)} \in
  \Reals$ for $t \in [T+1]$. Let $y^{(t)} \in \Reals^m$ for $t \in [T]$. Suppose
  the following properties hold: {\allowdisplaybreaks
  \begin{gather*}
    W_k^{(t+1)} = \exp\paren[\bigg]{-\frac{\beta_k}{\ell_k+\rho_k}
      \sum_{\tau=1}^t
      \sqbrac[\bigg]{\sum_{i=1}^m y_i^{(\tau)}
        A_{i,k} - C_k + \ell_k^{(\tau)} I}},
    \qquad
    \forall t \in \set{0,\dotsc,T},\,
    \forall k \in [K],
    \\
    y = y^{(t)} \text{ is a solution for~\eqref{eq:block-oracle} with }
    X_k = W_k^{(t)},\, \forall k \in [K],
    \qquad
    \forall t \in [T],
    \\
    \sum_{i=1}^m y_i^{(t)}
    A_{i,k} - C_k
    \in
    \begin{cases}
      [-\ell_k,\rho_k], & \text{if }t \in \Pcal_k,\\
      [-\rho_k,\ell_k], & \text{if }t \in \Ncal_k,
    \end{cases}
    \qquad
    \forall t \in [T],\, k \in [K],
    \\
    \ell_k^{(t)} = \ell_k^{},
    \quad
    \forall t \in \Pcal_k,\,
    \forall k \in [K],
    \quad\text{and}\quad
    \ell_k^{(t)} = -\ell_k^{},
    \quad \forall t \in \Ncal_k,\,
    \forall k \in [K].
  \end{gather*}
  }
  Define $\yb := \frac{1}{T} \sum_{t=1}^T y^{(t)}$. Then,
  \begin{equation}
    \label{eq:mmwum-conclusion}
    \sum_{i=1}^m \yb_i A_{i,k} - C_k \succeq
    -\sqbrac[\Big]{\beta_k \ell_k + \frac{(\rho_k+\ell_k)\ln n}{T\beta_k}
      + (1+\beta_k)\eta_k}
    I,
    \qquad
    \forall k \in [K].
  \end{equation}
\end{theorem}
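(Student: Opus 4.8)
The plan is to run the standard matrix multiplicative weights potential argument, one block $k\in[K]$ at a time, with two small adaptations: the loss matrices are boxed in asymmetric intervals, and what we want out at the end is a lower bound on $\lambda_{\min}$ rather than a duality gap. Fix $k$. I would write $M_k^{(t)} := \sum_i y_i^{(t)} A_{i,k} - C_k$ for the loss at time $t$, set $w_k := \ell_k + \rho_k$, and shift the losses into a definite sign by defining $\tilde M_k^{(t)} := M_k^{(t)} + \ell_k^{(t)} I$. In this notation the hypotheses of the theorem become exactly $0 \preceq \tilde M_k^{(t)} \preceq w_k I$ for $t\in\Pcal_k$, $-w_k I \preceq \tilde M_k^{(t)} \preceq 0$ for $t\in\Ncal_k$, and $W_k^{(t)} = \exp(-\tfrac{\beta_k}{w_k}\sum_{\tau<t}\tilde M_k^{(\tau)})$. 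Put $\Phi_k^{(t)} := \trace W_k^{(t)} > 0$ and $P_k^{(t)} := W_k^{(t)}/\Phi_k^{(t)}$, so that $\Phi_k^{(1)} = n_k$ and $\trace P_k^{(t)} = 1$.

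First I would prove a per-step decrease of the potential $\Phi_k^{(t)}$. By the Golden--Thompson inequality, $\Phi_k^{(t+1)} = \trace\exp\!\big(\ln W_k^{(t)} - \tfrac{\beta_k}{w_k}\tilde M_k^{(t)}\big) \le \trace\!\big(W_k^{(t)}\exp(-\tfrac{\beta_k}{w_k}\tilde M_k^{(t)})\big)$. Since $0<\beta_k\le1/2$, the matrix $B := \tfrac{\beta_k}{w_k}\tilde M_k^{(t)}$ satisfies $0\preceq B\preceq\beta_k I$ on $\Pcal_k$ and $-\beta_k I\preceq B\preceq0$ on $\Ncal_k$; so the scalar bounds $e^{-a}\le1-a+a^2$ $(a\ge0)$ and $e^{a}\le1+a+a^2$ $(0\le a\le1)$, lifted via functional calculus and combined with $A^2\preceq\beta_k A$ for $0\preceq A\preceq\beta_k I$, give $\exp(-B)\preceq I-(1-\beta_k)B$ on $\Pcal_k$ and $\exp(-B)\preceq I-(1+\beta_k)B$ on $\Ncal_k$. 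Taking the trace against $W_k^{(t)}\succeq0$ and then using $1+x\le e^x$ yields, in both cases,
\[
\Phi_k^{(t+1)} \;\le\; \Phi_k^{(t)}\,\exp\!\Big(-c_k^{(t)}\tfrac{\beta_k}{w_k}\,\iprod{P_k^{(t)}}{\tilde M_k^{(t)}}\Big),
\qquad c_k^{(t)} := \begin{cases} 1-\beta_k, & t\in\Pcal_k,\\ 1+\beta_k, & t\in\Ncal_k.\end{cases}
\]
Telescoping over $t=1,\dots,T$, combining with the bound $\Phi_k^{(T+1)} = \trace\exp(-\tfrac{\beta_k}{w_k}\sum_t\tilde M_k^{(t)}) \ge \exp\!\big(-\tfrac{\beta_k}{w_k}\lambda_{\min}(\sum_t\tilde M_k^{(t)})\big)$ and with $\Phi_k^{(1)}=n_k$, and taking logarithms gives the regret inequality
\[
\lambda_{\min}\!\Big(\sum_{t}\tilde M_k^{(t)}\Big) \;\ge\; \sum_{t} c_k^{(t)}\iprod{P_k^{(t)}}{\tilde M_k^{(t)}} \;-\; \frac{w_k\ln n_k}{\beta_k}.
\]

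Next I would feed in the oracle. Dividing the hypothesis ``$y^{(t)}$ solves \eqref{eq:block-oracle} with $X_k=W_k^{(t)}$'' through by $\Phi_k^{(t)}$ yields $\iprod{M_k^{(t)}}{P_k^{(t)}} \ge -\eta_k$, hence $\iprod{\tilde M_k^{(t)}}{P_k^{(t)}} \ge \ell_k^{(t)}-\eta_k$. Combining this with $\iprod{\tilde M_k^{(t)}}{P_k^{(t)}}\ge0$ on $\Pcal_k$ and $\le0$ on $\Ncal_k$, multiplying by the positive constants $c_k^{(t)}$, and summing over $t$ (writing $p:=\card{\Pcal_k}$, $q:=\card{\Ncal_k}$, so $p+q=T$, and absorbing the leftover cross term via $(p-q)\beta_k\eta_k\ge-T\beta_k\eta_k$), I would obtain
\[
\sum_t c_k^{(t)}\iprod{P_k^{(t)}}{\tilde M_k^{(t)}} \;\ge\; (p-q)\ell_k - T\beta_k\ell_k - T(1+\beta_k)\eta_k.
\]
To finish, observe $\sum_t M_k^{(t)} = \sum_t\tilde M_k^{(t)} - (p-q)\ell_k I$, so $\lambda_{\min}(\sum_t M_k^{(t)}) = \lambda_{\min}(\sum_t\tilde M_k^{(t)}) - (p-q)\ell_k$. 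Plugging in the regret inequality and the last display makes the two $(p-q)\ell_k$ terms cancel; dividing by $T$ and recalling $\sum_i\yb_i A_{i,k}-C_k = \tfrac1T\sum_t M_k^{(t)}$ and $\ln n_k\le\ln n$ gives exactly \eqref{eq:mmwum-conclusion}.

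I expect the hardest part to be the bookkeeping in the third paragraph, not any single inequality. The scalar exponential estimates, Golden--Thompson, and the operator facts ($B\preceq C\Rightarrow\trace WB\le\trace WC$ for $W\succeq0$; $A^2\preceq\|A\|\,A$ for $A\succeq0$; $\trace\exp(X)\ge\exp(\lambda_{\max}(X))$ for symmetric $X$) are all routine. What takes care is reconciling the asymmetry of the two regimes --- the intervals $[-\ell_k,\rho_k]$ versus $[-\rho_k,\ell_k]$, the multipliers $1-\beta_k$ versus $1+\beta_k$, and the opposite signs of the shift $\ell_k^{(t)}$ --- so that the spurious $(p-q)\ell_k$ term introduced by passing from $M_k^{(t)}$ to $\tilde M_k^{(t)}$ is cancelled exactly, leaving only the $\beta_k\ell_k$, $(1+\beta_k)\eta_k$, and $\tfrac{(\rho_k+\ell_k)\ln n}{T\beta_k}$ terms of \eqref{eq:mmwum-conclusion}. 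Since the whole argument is carried out block by block, $K$ itself never enters.
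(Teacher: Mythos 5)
Your proposal is correct and follows essentially the same route as the paper: a block-by-block Golden--Thompson potential argument (the paper's Theorem~\ref{thm:10}) followed by plugging in the oracle guarantee and the $\ell_k^{(t)}$ bookkeeping (the paper's Theorem~\ref{thm:kale-mmwum}), applied independently for each $k$. The only differences are presentational --- you work with the shifted matrices $\tilde M_k^{(t)}$ and cancel the $(p-q)\ell_k$ term explicitly, and you derive the $(1\mp\beta_k)$ coefficients from Taylor-type scalar bounds rather than the paper's convexity bounds $1-e^{-\beta}\geq\beta(1-\beta)$ and $e^{\beta}-1\leq\beta(1+\beta)$ --- and both yield identical constants.
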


Take $K = 2$, set $C_1 := I$ and $C_2 := -I$, and put $A_{i,1} := B_i$ and
$A_{i,2} := -B_i$ for each $i \in [m]$. Then Theorem~\ref{thm:block-mmwum}
shows that finding a solution for~\eqref{eq:sparse-approx-rankn} reduces to
constructing an oracle that solves linear systems of the
form~\eqref{eq:block-oracle} with a few extra technical properties involving the
parameters~$\ell_k$ and~$\rho_k$, and adjusting the other parameters so that the
error term on the right-hand side of~\eqref{eq:mmwum-conclusion} is $\leq \eps$.

To obtain a feasible solution for~\eqref{eq:sparse-approx-rankn} that is also
sparse, the idea is to design an implementation of the oracle that
returns a vector $y^{(t)}$ with only \emph{one} nonzero entry at each
iteration~$t$ of MMWUM, and to adjust the parameters so that, after $T =
O(n \log n/\eps^3)$ iterations, the smallest and largest eigenvalues of
$\sum_{i=1}^m \yb_i B_i$ are $\eps$-close to~$1$. Since $\yb$ is the
average of the $y^{(t)}$'s, the resulting $\yb$ will have at most $T$
nonzero entries.

We set the remaining parameters as follows:
\begin{gather*}
  \beta := \beta_1 := \beta_2 := \frac{\eps}{4},
  \qquad
  T := \frac{2 (\rho +\ell) \ln n}{\beta \eps},
  \qquad
  \eta := \eta_1 := \eta_2 := \frac{\eps}{8},
  \\
  \ell := \ell_1 := \ell_2 := 1,
  \qquad
  \rho := \rho_1 := \rho_2 := \frac{1+\eta}{\eta} n,
  \qquad
  \Pcal_1 := \Ncal_2 := [T],
  \qquad
  \Ncal_1 := \Pcal_2 := \emptyset.
\end{gather*}
Then the error term on the right-hand side of~\eqref{eq:mmwum-conclusion} is
\begin{equation}
  \label{eq:mmwum-small-error}
  \beta \ell + \frac{(\rho + \ell) \ln n}{T \beta} + (1+\beta)\eta
  ~=~
  \frac{\eps}{4}
  +
  \frac{\eps}{2}
  +
  \left( 1 + \frac{\eps}{4}\right) \frac{\eps}{8}
  ~=~
  \frac{7\eps}{8}
  +
  \frac{\eps^2}{32}
  ~\leq~
  \eps.
\end{equation}
Thus, \eqref{eq:sparse-approx-rankn} follows
from~\eqref{eq:mmwum-conclusion}
and~\eqref{eq:mmwum-small-error}. Moreover, $T = O(n\log n/\eps^3)$, as desired.

\subsection{The Oracle}
It remains to implement the oracle. Consider an iteration~$t$, and let
$X_1 := W_1^{(t)}$ and $X_2 := W_2^{(t)}$ be given. We must find $y :=
y^{(t)} \in \Reals_+^m$ with at most one nonzero entry such that
\begin{equation*}
  \begin{split}
    \sum_{i=1}^m y_i \iprod{X_1}{B_i}
    & \geq
    (1-\eta) \trace X_1,
    \qquad
    \sum_{i=1}^m y_i \iprod{X_2}{B_i}
    \leq
    (1 + \eta) \trace X_2,
    \quad
    \text{and}
    \quad
    \sum_{i=1}^m y_i B_i \in [0,\rho].
  \end{split}
\end{equation*}
Since $y$ should have only one nonzero entry, it suffices to find $j\in
[m]$ and $\alpha \in\Reals_+$ such that
\begin{equation}
  \label{eq:our-oracle-simple}
  \begin{split}
    \alpha \iprod{X_1}{B_j}
    & \geq
    (1-\eta) \trace X_1,
    \\
    \alpha \iprod{X_2}{B_j}
    & \leq
    (1+\eta) \trace X_2,
    \\
    \alpha \trace B_j
    & \leq
    \rho.
  \end{split}
\end{equation}
Here we are using the fact that $\lambda_{\max}(B_j) \leq \trace B_j$ since $B_j \succeq 0$.
We will show that such $j$ and $\alpha$ exist.
Due to the definition of $W_1$ and $W_2$, the oracle can assume that $X_1$ is a scalar
multiple of $X_2^{-1}$, although we will not make use of that fact.

\begin{proposition}
  \label{prop:oracle}
  Let $B_1, \dotsc, B_m \in \Psd$ such that $\sum_{i=1}^m B_i = I$. Let
  $\eta > 0$ and $X_1, X_2 \in \Pd$. Then, for $\rho := (1+\eta)n/\eta$,
  there exist $j \in [m]$ and $\alpha \geq 0$ such
  that~\eqref{eq:our-oracle-simple} holds.
\end{proposition}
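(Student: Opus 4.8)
The plan is to prove that the linear system~\eqref{eq:our-oracle-simple} is feasible by an averaging argument over $j\in[m]$, in the same spirit as the proof of Lemma~\ref{lemma:BSS3.5}. Two harmless reductions come first. Since the three inequalities in~\eqref{eq:our-oracle-simple} involve only $B_j$, any index $i$ with $B_i = 0$ may be ignored; as $\sum_i B_i = I \neq 0$, at least one summand remains, so we may assume $B_j \neq 0$ for all $j\in[m]$. Because $X_1 \succ 0$, this yields $\iprod{X_1}{B_j} = \trace(X_1 B_j) > 0$ and $\trace B_j > 0$ for every $j$, and also $\trace X_1,\trace X_2 > 0$. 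Next, if $\eta \geq 1$ then $(1-\eta)\trace X_1 \leq 0$, so $j=1$ and $\alpha=0$ satisfy~\eqref{eq:our-oracle-simple} trivially; hence assume $0 < \eta < 1$.

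For each $j$, the first inequality of~\eqref{eq:our-oracle-simple} is equivalent to $\alpha \geq \alpha_j := (1-\eta)\trace X_1 / \iprod{X_1}{B_j} > 0$, so the pair $(j,\alpha_j)$ is a valid choice precisely when $\alpha_j \iprod{X_2}{B_j} \leq (1+\eta)\trace X_2$ and $\alpha_j\trace B_j \leq \rho$. I would argue by contradiction: if no $j$ works, then every $j$ violates at least one of these two bounds, so $[m]=P\cup Q$ where
\[
P := \setst[\big]{j}{(1-\eta)\trace X_1\cdot\iprod{X_2}{B_j} > (1+\eta)\trace X_2\cdot\iprod{X_1}{B_j}}
\]
and
\[
Q := \setst[\big]{j}{(1-\eta)\trace X_1\cdot\trace B_j > \rho\,\iprod{X_1}{B_j}}.
\]

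Now I would sum over each family using $\sum_j B_j = I$, which gives $\sum_j\iprod{X_1}{B_j} = \trace X_1$, $\sum_j\iprod{X_2}{B_j} = \trace X_2$, and $\sum_j\trace B_j = n$. Rewriting the defining inequality of $P$ as $\iprod{X_1}{B_j} < \tfrac{(1-\eta)\trace X_1}{(1+\eta)\trace X_2}\iprod{X_2}{B_j}$ and summing over $j\in P$ gives $\sum_{j\in P}\iprod{X_1}{B_j} < \tfrac{1-\eta}{1+\eta}\trace X_1$; rewriting the defining inequality of $Q$ as $\iprod{X_1}{B_j} < \tfrac{(1-\eta)\trace X_1}{\rho}\trace B_j$ and summing over $j\in Q$ gives $\sum_{j\in Q}\iprod{X_1}{B_j} < \tfrac{(1-\eta)n}{\rho}\trace X_1$. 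Adding these (at least one of $P,Q$ is nonempty, so the combined inequality is strict) yields $\trace X_1 < \bigl(\tfrac{1-\eta}{1+\eta} + \tfrac{(1-\eta)n}{\rho}\bigr)\trace X_1$, and substituting $\rho = (1+\eta)n/\eta$ collapses the bracketed factor to exactly $1-\eta$, so $\trace X_1 < (1-\eta)\trace X_1$ --- impossible since $0<\eta<1$ and $\trace X_1 > 0$.

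I do not expect a genuine obstacle here: once it is set up, the heart of the proof is a one-line averaging. The only care needed is in the degenerate bookkeeping (discarding zero matrices, handling the sign of $1-\eta$, and making sure strictness survives the summation so the final inequality is strict) and in verifying that the choice $\rho = (1+\eta)n/\eta$ is exactly what forces the two ``harmonic'' terms $\tfrac{1-\eta}{1+\eta}$ and $\tfrac{(1-\eta)n}{\rho}$ to sum to something $<1$. In fact any $\rho \geq \tfrac{(1-\eta^2)n}{2\eta}$ would suffice, and $(1+\eta)n/\eta$ comfortably satisfies this.
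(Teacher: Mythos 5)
Your proof is correct and takes essentially the same approach as the paper: both are averaging arguments over $j$ with weights proportional to $\iprod{X_1}{B_j}$, showing the two ``bad'' sets cannot cover $[m]$ (the paper phrases this via Markov's inequality on the distribution $p_j = \iprod{X_1}{B_j}/\trace X_1$, you phrase it as a direct summation--contradiction). The only substantive difference is that you take the minimal $\alpha_j = (1-\eta)\trace X_1/\iprod{X_1}{B_j}$ rather than the paper's $\alpha = \trace X_1/\iprod{X_1}{B_j}$, which is why you correctly observe that a slightly smaller $\rho$ would also suffice.
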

\begin{proof}
  By possibly dropping some~$B_i$'s, we may assume that $B_i \neq 0$ for
  every $i \in [m]$. Define $p_i := \iprod{X_1}{B_i}/\trace{X_1} > 0$
  for every $i \in [m]$. Consider the probability space on~$[m]$ where
  $j$ is sampled from~$[m]$ with probability~$p_j$. The fact that
  $\sum_{j=1}^m p_j = 1$ follows from $\sum_{i=1}^m B_i = I$. Then
  $\mean_j\sqbrac{p_j^{-1} \trace{B_j}} = \sum_{i=1}^m \trace{B_i} =
  \trace{I} = n$. By Markov's inequality,
  \begin{equation}
    \label{eq:dirty-trick1}
    \prob\paren[\bigg]{p_j^{-1} \trace{B_j} \leq \frac{(1+\eta)}{\eta} n}
    =
    1-
    \prob\paren[\bigg]{p_j^{-1} \trace{B_j} > \frac{(1+\eta)}{\eta} n}
    >
    1- \frac{\eta}{1+\eta}
    =
    \frac{1}{1+\eta}.
  \end{equation}

  Next note that $ \mean_j\sqbrac{p_j^{-1} \iprod{X_2}{B_j}} =
  \sum_{i=1}^m \iprod{X_2}{B_i} = \iprod{X_2}{I} = \trace X_2$.
  Together with Markov's inequality, this yields
  \begin{equation}
    \label{eq:dirty-trick2}
    \prob\paren[\bigg]{p_j^{-1} \iprod{X_2}{B_j} \leq (1+\eta) \trace{X_2}}
    =
    1-
    \prob\paren[\bigg]{p_j^{-1} \iprod{X_2}{B_j} > (1+\eta) \trace{X_2}}
    >
    1- \frac{1}{1+\eta}.
  \end{equation}

  It follows from~\eqref{eq:dirty-trick1} and~\eqref{eq:dirty-trick2}
  that there exists $j \in [m]$ satisfying
  \begin{equation*}
    \begin{split}
      p_j^{-1} \iprod{X_2}{B_j}
      \leq
      (1+\eta) \trace X_2,
      \qquad
      \text{and}
      \qquad
      p_j^{-1} \trace{B_j}
      \leq
      \frac{1+\eta}{\eta}n=
      \rho.
    \end{split}
  \end{equation*}
  Set $\alpha := p_j^{-1}$ and note that
  \begin{equation*}
    \alpha \iprod{X_1}{B_j}
    =
    p_j^{-1} \iprod{X_1}{B_j}
    =
    \trace X_1
    \geq (1-\eta) \trace X_1.
  \end{equation*}
  Hence, $j$ and~$\alpha$ satisfy~\eqref{eq:our-oracle-simple}.
\end{proof}

The following proposition, proven in Appendix~\ref{app:oracle_optimal},
shows that the parameters achieved by Proposition~\ref{prop:oracle}
is essentially optimal.

\newcommand{\proporopt}{
Any oracle for satisfying \eqref{eq:our-oracle-simple}
must have $\rho = \Omega( n / \eta )$,
even if the $B_i$ matrices have rank one,
and even if $X_1$ is a scalar multiple of $X_2^{-1}$.
}
\begin{proposition}
\label{prop:oracle_optimal}
\proporopt
\end{proposition}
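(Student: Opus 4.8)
The plan is to exhibit one valid instance of the oracle problem on which the system~\eqref{eq:our-oracle-simple} has no solution unless $\rho=\Omega(n/\eta)$; since any oracle must in particular succeed on that instance, this proves the claim. I will assume $\eta\le\tfrac12$ throughout, which is the only interesting regime (in the application $\eta=\Theta(\eps)$). Two preliminary reductions are convenient. First, \eqref{eq:our-oracle-simple} is unaffected by rescaling $X_1$ or $X_2$ and by replacing $X_2$ with a scalar multiple of $X_1^{-1}$, so I may work with $X_2:=X_1^{-1}$ and am free to choose $X_1$ up to scale; in particular $X_1$ will be a (scalar) multiple of $X_2^{-1}$, as required. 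Second, fix $j$ with $B_j\ne 0$, so that $\iprod{X_1}{B_j}\ge\lambda_{\min}(X_1)\trace B_j>0$ and likewise $\iprod{X_2}{B_j}>0$; the first two inequalities of~\eqref{eq:our-oracle-simple} confine $\alpha$ to the interval $\bigl[\,(1-\eta)\trace X_1/\iprod{X_1}{B_j},\ (1+\eta)\trace X_2/\iprod{X_2}{B_j}\,\bigr]$, and when this interval is nonempty its left endpoint is the cheapest feasible $\alpha$, so the third inequality then says exactly $\rho\ge(1-\eta)\trace X_1\,\trace B_j/\iprod{X_1}{B_j}$. Hence the oracle succeeds on an instance if and only if some $j$ has a nonempty feasibility interval \emph{and} $\rho\ge(1-\eta)\trace X_1\,\trace B_j/\iprod{X_1}{B_j}$, and it suffices to build an instance in which $\trace X_1\,\trace B_j/\iprod{X_1}{B_j}=\Omega(n/\eta)$ for every $j$ whose interval is nonempty.

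The construction uses a three-level spectrum. Fix a small $\delta>0$ (pinned down at the end), set $M:=(1-\eta)(n-2)/(4\eta)=\Theta(n/\eta)$, and put $X_1:=\Diag(M,1,\dotsc,1,\delta)\in\Pd$ with the block of $1$'s in coordinates $2,\dotsc,n-1$, and $X_2:=X_1^{-1}$. For the inputs take $B_i:=e_ie_i\transpose$ for $i\in\set{2,\dotsc,n-1}$, together with $B^{\pm}:=\tfrac12(e_1\pm e_n)(e_1\pm e_n)\transpose$; these are rank one, positive semidefinite, and sum to $I$ (the diagonal entries are all $1$, the $(1,n)$ off-diagonal contributions of $B^{+}$ and $B^{-}$ cancel, and the remaining off-diagonals vanish). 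The point is that the ``expensive'' direction $e_1$ has been split between $B^{+}$ and $B^{-}$, each of which also carries the vanishing direction $e_n$, so that $\iprod{X_2}{B^{\pm}}$ is of order $\delta^{-1}$; for small $\delta$ this will make the second inequality of~\eqref{eq:our-oracle-simple} infeasible on $B^{\pm}$, leaving only the matrices $e_ie_i\transpose$ available, on which $\iprod{X_1}{B_i}=\trace B_i=1$.

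Then I would verify the two points. For $B^{+}$ (and identically for $B^{-}$) one computes $\iprod{X_1}{B^{+}}=\tfrac12(M+\delta)$, $\iprod{X_2}{B^{+}}=\tfrac12(M^{-1}+\delta^{-1})$, and $\trace B^{+}=1$; the feasibility interval is nonempty only if $(1-\eta)\trace X_1\,\iprod{X_2}{B^{+}}\le(1+\eta)\trace X_2\,\iprod{X_1}{B^{+}}$, which, since $(M^{-1}+\delta^{-1})/(M+\delta)=1/(M\delta)$, simplifies to $(1-\eta)\trace X_1\le(1+\eta)\,M\delta\,\trace X_2$; substituting $\trace X_1=M+(n-2)+\delta$ and $M\delta\,\trace X_2=M+(n-2)M\delta+\delta$ and letting $\delta\to 0$, this tends to $(1-\eta)(M+n-2)\le(1+\eta)M$, i.e.\ $M\ge(1-\eta)(n-2)/(2\eta)$, which is false since our $M$ is half that value. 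So for all sufficiently small $\delta$ the matrices $B^{\pm}$ have empty feasibility interval. For $B_i=e_ie_i\transpose$ with $i\in\set{2,\dotsc,n-1}$ we have $\iprod{X_1}{B_i}=\iprod{X_2}{B_i}=\trace B_i=1$, the feasibility interval is $[(1-\eta)\trace X_1,\ (1+\eta)\trace X_2]$, which is nonempty once $\delta$ is small (as $\trace X_2\to\infty$), and every feasible $\alpha$ satisfies $\rho\ge\alpha\ge(1-\eta)\trace X_1\ge(1-\eta)M=(1-\eta)^2(n-2)/(4\eta)\ge(n-2)/(16\eta)$. Fixing $\delta$ small enough for both verifications, on this instance every oracle needs $\rho\ge(n-2)/(16\eta)=\Omega(n/\eta)$.

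The delicate part will be finding the right construction, not the computations. The obvious candidates---$X_1=X_2=I$, or $X_1$ with only one large and one small eigenvalue, or the canonical $B_i=e_ie_i\transpose$---only force $\rho=\Omega(n)$, because whatever input supplies the dominant eigendirection of $X_1$ tends both to have a nonempty feasibility interval and to have a large ratio $\iprod{X_1}{B_j}/\trace B_j$, which hands the oracle a cheap feasible move. What rescues the $1/\eta$ factor is the third, vanishingly small eigenvalue $\delta$: attaching it to the expensive direction blows up $\iprod{X_2}{\cdot}$ and so destroys the second inequality precisely on the inputs that would otherwise be cheap, while the ``moderate'' inputs $e_ie_i\transpose$ remain and feasibility through them costs $\rho\gtrsim\trace X_1=\Theta(n/\eta)$. (Pushing $M$ up toward $(1-\eta)(n-2)/(2\eta)$ and letting $\delta\to 0$ essentially matches the upper bound $\rho=O(n/\eta)$ of Proposition~\ref{prop:oracle}, but the weaker $\Omega(n/\eta)$ is all that is needed here.)
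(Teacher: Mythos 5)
Your proof is correct and follows essentially the same strategy as the paper's: an explicit instance with $X_2=X_1^{-1}$, a carefully tiered spectrum for $X_1$, and rank-one $B_i$'s summing to $I$ in which the ``cheap'' direction is paired with a direction of tiny $X_1$-eigenvalue so that the first two constraints of \eqref{eq:our-oracle-simple} become contradictory there, leaving only inputs for which $\trace X_1\,\trace B_j/\iprod{X_1}{B_j}=\Omega(n/\eta)$. The paper's version uses a three-level spectrum $\set{1,\zeta^3,\zeta}$ tensored with $I_{n/3}$ instead of your $\set{M,1,\delta}$ with a $\delta\to 0$ limit, but the mechanism and the conclusion are the same.
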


We also point out that a naive application of MMWUM as stated by Kale
in~\cite{Kale07} does not work. In his description of MMWUM, the
parameter $K$ is fixed as~$1$. So we must correspondingly adjust our
input matrices to be block-diagonal, e.g., $C$ has two blocks: $I$ and
$-I$. However, applying Theorem~\ref{thm:block-mmwum} in this manner
would lead to a sparsifier with $\Omega(n^2)$ edges.  The reason is that
the parameter $\rho$ needs to be $\Omega(n)$, and we must choose $\ell =
\rho$ since the spectrum of $\sum_{i=1}^m y_i A_i - C$ is symmetric
around zero for any~$y$.  Thus, to get the error term on the right-hand side
of~\eqref{eq:mmwum-conclusion} to be $\leq \eps$, we would need to take $T =
\Omega(n^2)$.

\section{Solving Problem~\ref{prob:rankn} by a Width-Free MMWUM}
\label{sec:tweak}

The algorithm of Section~\ref{sec:mmwum} solves Problem~\ref{prob:rankn} with
only $O(n \log n / \eps^3)$ nonzero entries, which is slightly worse
than the $O(n \log n / \eps^2)$ nonzero entries achieved by the Ahlswede-Winter method
discussed in Section~\ref{sec:aw}.
The main reason for this discrepancy is that MMWUM requires us to bound the ``width''
of the oracle using the parameter $\rho$; formally, the oracle must the
inequality $\alpha \trace B_j \leq \rho$ in \eqref{eq:our-oracle-simple}.
In order to satisfy this width constraint, the oracle loses an extra factor of
$O(1/\eps)$, and this is necessary as shown in Proposition~\ref{prop:oracle_optimal}.

In this section, we slightly refine MMWUM to avoid its dependence on the width.
This allows us to simplify our oracle and avoid losing the extra factor of $O(1/\eps)$.
We obtain a solution to Problem~\ref{prob:rankn} with only
only $O(n \log n / \eps^2)$ nonzero entries,
matching the sparsity of the solutions obtained by the Ahlswede-Winter inequality.

The following theorem is our width-free variant of MMWUM.
We remark that the method described in this theorem is geared towards
solving Problem~\ref{prob:rankn} and is not necessarily useful
for all applications of MMWUM.

\begin{theorem}
  \label{thm:mmwum-bss}
  Let $T$ be a positive integer.
  Let $B_1, \dotsc, B_m \in \Psd$ be nonzero.
  Let $\gamma, \eta, \delta_L, \delta_U > 0$. For any
  given $X_L, X_U \in \Sym$, consider the system
  \begin{equation}
    \label{eq:mmwum-bss-oracle}
    \begin{split}
      & \delta_U \geq
      \frac{\exp(\gamma \alpha \trace B_j)-1}{\trace B_j} \iprod{X_U}{B_j},
      \\
      &
      \delta_L \leq
      \frac{1-\exp(-\gamma \alpha \trace B_j)}{\trace B_j} \iprod{X_L}{B_j},
      \\
      &
      \alpha \in \Reals_+, \quad j \in [m].
    \end{split}
  \end{equation}
  For each $t \in \set{0,\dotsc,T+1}$, let $A(t), W_L(t), W_U(t) \in
  \Sym$, let $\alpha(t) \in \Reals_+$, and let $j(t) \in [m]$. Suppose
  the following properties hold:
  \begin{gather*}
    A(t) = \sum_{\tau=1}^t \alpha(\tau) B_{j(\tau)},
    \qquad
    \forall t \in \set{0,\dotsc,T},
    \\
    W_U(t+1) = \exp(\gamma A(t))
    \qquad
    \text{and}
    \qquad
    W_L(t+1) = \exp(-\gamma A(t)),
    \qquad
    \forall t \in \set{0,\dotsc,T},
    \\
    (\alpha,B_j) = (\alpha(t),B_{j(t)})
    \text{ is a solution for~\eqref{eq:mmwum-bss-oracle} with }
    (X_U,X_L) = \displaystyle \paren[\bigg]{
      \frac{W_U(t)}{\trace W_U(t)},
      \frac{W_L(t)}{\trace W_L(t)}
    },
    \quad
    \forall t \in [T].
  \end{gather*}
  Then
  \begin{equation}
    \label{eq:mmwum-bss-conclusion}
    \frac{A(T)}{T}
    \in
    \sqbrac[\bigg]{
      \frac{\log (1 - \delta_L)^{-1}}{\gamma} - \frac{\log n}{T\gamma},
      \frac{\log (1 + \delta_U)}{\gamma} + \frac{\log n}{T\gamma}
    }.
  \end{equation}
\end{theorem}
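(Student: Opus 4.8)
The plan is a matrix multiplicative-weights potential argument, run twice (once for each side of \eqref{eq:mmwum-bss-conclusion}), with the Golden--Thompson inequality $\trace\exp(P+Q)\le\trace\paren{\exp(P)\exp(Q)}$ (valid for symmetric $P,Q$) playing the role of the scalar identity $e^{a+b}=e^ae^b$. For the upper bound I would track the potential $\Phi_U(t):=\trace W_U(t)=\trace\exp(\gamma A(t-1))$; since $A(0)=0$ we have $W_U(1)=I$ and $\Phi_U(1)=n$. The heart of the argument is the one-step bound $\Phi_U(t+1)\le(1+\delta_U)\Phi_U(t)$ for every $t\in[T]$.

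To obtain it, write $\gamma A(t)=\gamma A(t-1)+\gamma\alpha(t)B_{j(t)}$ and apply Golden--Thompson to get $\Phi_U(t+1)=\trace\exp(\gamma A(t))\le\trace\paren{W_U(t)\exp(\gamma\alpha(t)B_{j(t)})}$. Next, convexity of $x\mapsto e^x$ gives the chord bound $e^x\le 1+\tfrac{e^R-1}{R}x$ for all $x\in[0,R]$; because $B_{j(t)}\succeq0$ we have $\lambda_{\max}(\gamma\alpha(t)B_{j(t)})\le R:=\gamma\alpha(t)\trace B_{j(t)}$, so
\[
\exp(\gamma\alpha(t)B_{j(t)})\ \preceq\ I+\frac{\exp(\gamma\alpha(t)\trace B_{j(t)})-1}{\trace B_{j(t)}}\,B_{j(t)}.
\]
Taking the inner product with $W_U(t)\succeq0$ and dividing by $\Phi_U(t)$ yields
\[
\frac{\Phi_U(t+1)}{\Phi_U(t)}\ \le\ 1+\frac{\exp(\gamma\alpha(t)\trace B_{j(t)})-1}{\trace B_{j(t)}}\,\iprod{W_U(t)/\trace W_U(t)}{B_{j(t)}}\ \le\ 1+\delta_U,
\]
where the last inequality is precisely the first line of the oracle system \eqref{eq:mmwum-bss-oracle} with $X_U=W_U(t)/\trace W_U(t)$. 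Iterating gives $\trace\exp(\gamma A(T))=\Phi_U(T+1)\le n(1+\delta_U)^T$; combining with $\trace\exp(\gamma A(T))\ge\exp(\gamma\lambda_{\max}(A(T)))$ and taking logarithms gives $\lambda_{\max}(A(T))/T\le\log(1+\delta_U)/\gamma+\log n/(T\gamma)$, which is the $\preceq$ half of \eqref{eq:mmwum-bss-conclusion}.

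The lower bound is symmetric: track $\Phi_L(t):=\trace W_L(t)=\trace\exp(-\gamma A(t-1))$, apply Golden--Thompson to $-\gamma A(t)$, and use the chord bound $e^{-x}\le 1-\tfrac{1-e^{-R}}{R}x$ on $[0,R]$ (again convexity) with the same $R=\gamma\alpha(t)\trace B_{j(t)}$ to get $\exp(-\gamma\alpha(t)B_{j(t)})\preceq I-\frac{1-\exp(-\gamma\alpha(t)\trace B_{j(t)})}{\trace B_{j(t)}}B_{j(t)}$; the second line of \eqref{eq:mmwum-bss-oracle} then gives $\Phi_L(t+1)\le(1-\delta_L)\Phi_L(t)$, hence $\trace\exp(-\gamma A(T))\le n(1-\delta_L)^T$, and $\trace\exp(-\gamma A(T))\ge\exp(-\gamma\lambda_{\min}(A(T)))$ finishes it after taking logs, using $-\log(1-\delta_L)=\log(1-\delta_L)^{-1}$. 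I do not expect a serious obstacle: the only delicate points are invoking Golden--Thompson correctly and applying the convexity/chord bound on exactly the interval $[0,\gamma\alpha\trace B_j]$, which contains all eigenvalues of $\gamma\alpha B_j$ precisely because $B_j\succeq0$ forces $\lambda_{\max}(B_j)\le\trace B_j$; one should also note that the oracle is solvable only when $\delta_L<1$ (otherwise the second line of \eqref{eq:mmwum-bss-oracle} fails for every $j$ and $\alpha$), so $\log(1-\delta_L)^{-1}$ in \eqref{eq:mmwum-bss-conclusion} is well defined whenever the hypotheses hold and the statement is vacuous otherwise.
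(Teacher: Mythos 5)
Your proposal is correct and follows essentially the same route as the paper's proof: Golden--Thompson plus the chord bound $\exp(cX)\preceq I+\frac{\exp(c\,\trace X)-1}{\trace X}X$ for $X\succeq 0$ to get the per-step potential inequalities $\Phi_U(t+1)\le(1+\delta_U)\Phi_U(t)$ and $\Phi_L(t+1)\le(1-\delta_L)\Phi_L(t)$, then comparing the final potentials to the extreme eigenvalues of $A(T)$. Your remark that feasibility of the oracle forces $\delta_L<1$ is a small correct addition the paper leaves implicit.
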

\begin{proof}
  We will use Golden-Thompson inequality:
  \begin{equation}
    \label{eq:golden-thompson}
    \trace(\exp(A+B)) \leq \trace(\exp(A)\exp(B)),
    \qquad
    \forall A,B \in \Sym.
  \end{equation}
  We will also make use of the following facts. First,
  $$
  \exp(cx) ~\leq~ 1 + \frac{\exp(c \cdot b)-1}{b} x \qquad \forall c \in
  \Reals,\, b>0,\, x \in [0,b].
  $$
  For $X \in \Psd$, we have $\lambda_{\max}(X) \leq \trace X$, so $X \in
  [0,\trace X]$, and
  \begin{equation}
    \label{eq:expineq}
    \exp(cX) ~\preceq~ I + \frac{\exp(c \cdot \trace X)-1}{\trace X} X.
  \end{equation}

  For each $t \in [T+1]$, define $\Phi_L(t) := \trace W_L(t)$ and
  $\Phi_U(t) = \trace W_U(t)$. For each $t \in [T]$,
  \begin{equation}
    \label{eq:mmuwm-bss-potentials}
    \begin{split}
      \Phi_U(t+1)
      &=
      \trace\paren[\Big]{\exp(\gamma A(t))}
      =
      \trace\paren[\Big]{\exp(\gamma A(t-1) + \gamma \alpha B_{j})}
      \\
      &\stackrel{\eqref{eq:golden-thompson}}{\leq}
      \trace\paren[\Big]{\exp(\gamma A(t-1)) \exp(\gamma \alpha B_{j})}
      \\
      &\stackrel{\eqref{eq:expineq}}{\leq}
      \trace \left(
        \exp(\gamma A(t-1)) \paren[\Big]{\frac{\exp(\gamma\alpha \trace B_{j}) -1}{\trace B_{j}} B_{j} + I}
      \right)
      \\
      &=
      \frac{\exp(\gamma\alpha \trace B_{j}) -1}{\trace B_{j}}
      \trace (\exp(\gamma A(t-1)) B_{j})
      +\trace(\exp(\gamma A(t-1)))
      \\
      &=
      \frac{\exp(\gamma\alpha \trace B_{j}) -1}{\trace B_{j}}
      \iprod{ W_U(t)}{B_{j}}
      + \Phi_U(t)
      \\
      &\stackrel{\eqref{eq:mmwum-bss-oracle}}{\leq}
      (1+\delta_U)\Phi_U(t),
    \end{split}
  \end{equation}
  where we abbreviated $j := j(t)$ and $\alpha := \alpha(t)$.

  Since $A(0) = 0$, we have that $\Phi_U(1) = \trace I =
  n$. Using~\eqref{eq:mmuwm-bss-potentials}, after $T$ iterations,
  \begin{equation*}
    \Phi_U(T+1) \leq (1 + \delta_U)^T n.
  \end{equation*}
  Thus,
  \begin{equation*}
    \exp(\gamma \lambda_{\max}(A(T)))
    \leq
    \sum_{i=1}^n \exp\paren{\gamma \lambda_i}
    = \trace{W_U(T+1)}
    = \Phi_U(T+1) \leq (1 +\delta_U)^T n,
  \end{equation*}
  where $\lambda_1,\dotsc,\lambda_n$ are the eigenvalues of~$A(T)$.
  And so $ \gamma \lambda_{\max}(A(T)) \leq T \log (1 + \delta_U) +
  \log n$, which implies the upper bound
  in~\eqref{eq:mmwum-bss-conclusion}. The proof of the lower bound is
  analogous.
\end{proof}

Next we establish conditions under which we can construct an oracle for solving
the system~\eqref{eq:mmwum-bss-oracle}. The proof consists of
algebraic manipulations and an averaging argument analogous to the proof of Lemma~3.5 in~\cite{BSS09}.
\begin{theorem}
  \label{thm:mmwum-bss-oracle}
  Let $B_1, \dotsc, B_m \in \Psd$ be nonzero such that $\sum_{i=1}^m B_i = I$.
  Let $\delta_U,\delta_L > 0$ be such that
  \begin{equation}
    \label{eq:oracle-deltas}
    \frac{1}{\delta_L}-n
    \geq
    \frac{1}{\delta_U}.
  \end{equation}
  Then, for any $X_L, X_U \in \Pd$ with trace one, the
  system~\eqref{eq:mmwum-bss-oracle} has a solution.
\end{theorem}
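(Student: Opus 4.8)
The plan is to mimic the averaging argument in \cite[Lemma~3.5]{BSS09} (and in our Lemma~\ref{lemma:BSS3.5}), but adapted to the exponential quantities appearing in~\eqref{eq:mmwum-bss-oracle}. First I would fix $X_L, X_U \in \Pd$ with trace one, and for each $j \in [m]$ ask: for which values of $\alpha$ does the pair $(\alpha, B_j)$ satisfy the two inequalities in~\eqref{eq:mmwum-bss-oracle}? Writing $t_j := \trace B_j > 0$, $u_j := \iprod{X_U}{B_j} \ge 0$ and $\ell_j := \iprod{X_L}{B_j} \ge 0$, the first inequality reads $(\exp(\gamma\alpha t_j) - 1)\, u_j/t_j \le \delta_U$ and the second reads $(1 - \exp(-\gamma\alpha t_j))\, \ell_j/t_j \ge \delta_L$. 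Both left-hand sides are monotone in $\alpha$ (increasing for the upper one, increasing for the lower one), so each inequality defines an interval of admissible $\alpha$: the upper constraint gives $\alpha \le \alpha_U(j)$ and the lower gives $\alpha \ge \alpha_L(j)$. Solving the exponentials explicitly, one finds that a valid $\alpha$ exists for index $j$ exactly when $\alpha_L(j) \le \alpha_U(j)$, which after taking logarithms becomes a clean inequality relating $\log(1 + \delta_U t_j / u_j)$ and $\log\big((1 - \delta_L t_j/\ell_j)^{-1}\big)$ — equivalently, an inequality of the form $U_j \le L_j$ for suitably defined ``per-index'' quantities $U_j$ and $L_j$ built from $u_j, \ell_j, t_j$.

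The second step is the averaging: instead of comparing $U_j$ and $L_j$ index by index, I would show that $\sum_j U_j \le \sum_j L_j$, which by pigeonhole forces $U_j \le L_j$ for at least one $j$, and that index (together with any $\alpha \in [\alpha_L(j), \alpha_U(j)]$) is the desired solution. To evaluate the sums I would use $\sum_i B_i = I$, hence $\sum_j t_j = \trace I = n$, $\sum_j u_j = \iprod{X_U}{I} = \trace X_U = 1$, and $\sum_j \ell_j = \trace X_L = 1$. The natural way to make the per-index quantities sum nicely is to use the elementary bound $\log(1 + x) \le x$ on the upper side and $\log((1-x)^{-1}) \ge x$ on the lower side (valid for $x \in [0,1)$), which linearizes both $U_j$ and $L_j$ in the ratios $t_j/u_j$ and $t_j/\ell_j$. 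But those ratios are the wrong way around to sum directly; the trick (exactly as in \cite{BSS09}) is instead to bound $U_j$ from above by something proportional to $u_j$ plus a ``width'' term and $L_j$ from below by something proportional to $\ell_j$ minus a width term, so that the sums become $(1/\delta_U) + (\text{something}) \le \sum u_j \cdot(\dots)$ and $\sum \ell_j\cdot(\dots) \le (1/\delta_L) - (\text{something})$. Carrying this through, the condition $\sum_j U_j \le \sum_j L_j$ reduces precisely to $1/\delta_U \le 1/\delta_L - n$, which is hypothesis~\eqref{eq:oracle-deltas}. (The extra additive $n$ comes from $\sum_j t_j = n$ appearing as the width penalty.)

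The main obstacle I anticipate is getting the inequalities on the exponentials pointed in the right direction while keeping the bounds tight enough that the final condition is exactly~\eqref{eq:oracle-deltas} rather than something lossier. Concretely: on the upper side we need an \emph{upper} bound on $\alpha_L(j)$ in terms of $u_j, \ell_j$ and a \emph{lower} bound on $\alpha_U(j)$; the former wants a lower bound on $1 - \exp(-\gamma\alpha t_j)$ and the latter an upper bound on $\exp(\gamma\alpha t_j) - 1$, and these must be chosen compatibly so that the two ``$\alpha$-thresholds'' can be compared after clearing logarithms. I expect the cleanest route is to \emph{not} solve for $\alpha$ at all, but rather to exhibit a single candidate value — e.g. the $\alpha$ for which $\alpha\, p_j := \alpha\iprod{X_U}{B_j}/\trace X_U$ equals some fixed constant, à la the oracle in Proposition~\ref{prop:oracle} — and verify both inequalities of~\eqref{eq:mmwum-bss-oracle} for that choice using Markov's inequality over the probability distribution $p_j \propto \iprod{X_U}{B_j}$, since $\sum_j p_j = 1$ by $\sum_i B_i = I$. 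That sidesteps the monotonicity bookkeeping and turns the whole argument into two Markov-type estimates plus the elementary exponential inequality $e^x - 1 \le x e^x$ and $1 - e^{-x} \ge x e^{-x}$ (or the simpler $1 - e^{-x} \ge x - x^2/2$) to linearize, with~\eqref{eq:oracle-deltas} emerging as exactly the compatibility condition between the two Markov thresholds. The routine exponential manipulations I would leave to the write-up.
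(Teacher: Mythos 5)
Your first route is structurally the paper's route, and your predictions about the averaging step are exactly right: one derives a per-index compatibility condition, sums it using $\sum_j \trace B_j = n$ and $\sum_j \iprod{X_U}{B_j} = \sum_j \iprod{X_L}{B_j} = 1$, and \eqref{eq:oracle-deltas} emerges as precisely the summed condition. The gap is in how you plan to make the per-index quantities summable. You assume the logarithmic thresholds $\alpha_U(j),\alpha_L(j)$ must be compared via lossy bounds such as $\log(1+x)\le x$, or replaced by a Markov argument as in Proposition~\ref{prop:oracle}. Neither is needed, and neither would suffice: the hypothesis \eqref{eq:oracle-deltas} is used with \emph{equality} by the parameter choice \eqref{eq:mmwumparams}, so any strict slack introduced in the averaging leaves you proving the existence of a solution only under a strictly stronger hypothesis than the theorem states. (The Markov route is exactly what costs the extra $1/\eps$ factor in Section~\ref{sec:mmwum}; the point of Section~\ref{sec:tweak} is to avoid it.)

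The observation you are missing is that the compatibility condition is \emph{exactly} linear in $(u_j,\ell_j,t_j)$ once you clear the exponentials instead of taking logarithms. Writing $t_j = \trace B_j$, $u_j = \iprod{X_U}{B_j}$, $\ell_j = \iprod{X_L}{B_j}$, both inequalities of \eqref{eq:mmwum-bss-oracle} constrain the single quantity $Q := t_j/(\exp(\gamma\alpha t_j)-1)$: the first says $Q \ge u_j/\delta_U$, and the second, via the identity $1/(1-1/x) = 1 + 1/(x-1)$ with $x = \exp(\gamma\alpha t_j)$, says $Q \le \ell_j/\delta_L - t_j$. Since $Q$ is a continuous decreasing bijection of $\alpha \in (0,\infty)$ onto $(0,\infty)$ (the paper simply sets $\alpha$ so the first inequality is tight), index $j$ admits a valid $\alpha \ge 0$ if and only if $u_j/\delta_U + t_j \le \ell_j/\delta_L$ --- which is your condition $(1-\delta_L t_j/\ell_j)^{-1} \le 1 + \delta_U t_j/u_j$ after cross-multiplying. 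Summing this linear condition over $j$ gives exactly $1/\delta_U + n \le 1/\delta_L$, i.e.\ \eqref{eq:oracle-deltas}, and pigeonhole finishes. So keep your first plan, discard both the linearization and the Markov fallback, and carry out the algebra exactly.
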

\begin{proof}
  The first inequality in~\eqref{eq:mmwum-bss-oracle} is equivalent to
  \begin{equation}
    \label{eq:mmwum-bss-oracle1}
    \frac{\trace B_{j}}{\exp(\gamma\alpha \trace B_{j}) -1} \geq \frac{\iprod{X_U}{B_{j}}}{\delta_U}.
  \end{equation}
  Using the identity $ \frac{1}{1-1/x} = 1+\frac{1}{x-1}$, the second
  inequality in~\eqref{eq:mmwum-bss-oracle} is equivalent to
  \begin{equation}
    \label{eq:mmwum-bss-oracle2}
    \frac{\trace B_{j}}{\exp(\gamma\alpha \trace B_{j}) -1}
    \leq \frac{\iprod{X_L}{B_{j}}}{\delta_L} - \trace B_{j}.
  \end{equation}

  We will choose $j \in [m]$ so that
  \begin{equation}
    \label{eq:goalv}
    \frac{\iprod{X_L}{B_{j}}}{\delta_L} - \trace B_{j}
    \geq
    \frac{\iprod{X_U}{B_{j}}}{\delta_U}
  \end{equation}
  and set $\alpha$ so that~\eqref{eq:mmwum-bss-oracle1} holds with
  equality. Then both~\eqref{eq:mmwum-bss-oracle1}
  and~\eqref{eq:mmwum-bss-oracle2} will hold.  Note that $\alpha \geq 0$
  since $e^{\gamma\alpha \trace B_{j}} = 1 + \delta_U \trace B_{j} / \iprod{X_U}{B_{j}} > 1$
  and $\gamma \trace B_{j} > 0$.

  To see that there exists $j \in [m]$
  satisfying~\eqref{eq:goalv}, note that,
  by~\eqref{eq:oracle-deltas} and $\sum_{i=1}^m B_i = I$,
  \begin{equation*}
    \begin{split}
      \sum_{i=1}^m\left[ \frac{\iprod{X_L}{B_i}}{\delta_L}-\trace B_i \right]
      ~=~
      \frac{\trace{X_L}}{\delta_L}-n
      ~=~
      \frac{1}{\delta_L}-n
      ~\geq~
      \frac{1}{\delta_U}
      ~=~
      \frac{\trace{X_U}}{\delta_U}
      ~=~
      \sum_{i=1}^m \frac{\iprod{X_U}{B_i}}{\delta_U}.
    \end{split}
  \end{equation*}
  This concludes the proof.
\end{proof}

Finally, let us show how to set the parameters to get a sparsifier.
Given $\eps \in (0,1)$, set
\begin{equation}
\label{eq:mmwumparams}
  \eta := \eps/2,
  \qquad
  \delta_U := \frac{\eta}{n},
  \qquad
  \delta_L := \frac{\eta}{(1+\eta)n},
  \qquad
  T := \frac{n \log n}{\eta^2}.
\end{equation}
By our choice of $\delta_L$ and $\delta_U$, we have $1/\delta_L-n =
(1+\eta)n/\eta-n = n/\eta = 1/\delta_U$, so~\eqref{eq:oracle-deltas}
holds with equality. After we run the modified version of MMWUM given by
Theorem~\ref{thm:mmwum-bss}, we obtain a matrix~$A(T)$. Set $\bar{A} := A(T)/T$.
By Theorem~\ref{thm:mmwum-bss},
\begin{equation*}
  \lambda_{\max}(\bar{A})
  \leq
  \frac{\log (1 + \delta_U)}{\gamma} + \frac{\log n}{T\gamma}
  \leq
  \Big( \delta_U + \frac{\eta^2}{n} \Big) / \gamma
  =
  \frac{ 1 + \eta }{ n \gamma / \eta }.
\end{equation*}
We will use that $- \log (1-x)\geq x$ for
$x<1$. Thus,
\begin{equation*}
  \begin{split}
    \lambda_{\min}(\bar{A})
    &\geq
    \frac{\log (1 - \delta_L)^{-1}}{\gamma} - \frac{\log n}{T\gamma}
    \geq
    \Big(\delta_L - \frac{\eta^2}{n} \Big) / \gamma
    =
    \frac{ 1/(1+\eta) - \eta }{ n \gamma/\eta }
    \geq
    \frac{ 1 - 2\eta }{ n \gamma/\eta }.
  \end{split}
\end{equation*}
So if we choose $\gamma = \eta/n$ then
$(1-\eps)I \preceq \bar{A} \preceq (1+\eps)I$ and $\bar{A}$ is of the
form $\sum_i y_i B_i$ with $y \geq 0$ and has at most $T = O(n\log n/\eps^2)$
nonzero entries.

\paragraph{Remark.} The choice of $\gamma$ is actually irrelevant here.
We could choose $\gamma > 0$ arbitrarily, then
define $\bar{A} = A(T) \cdot (n \gamma /\eta T)$ and the desired conclusion would hold.

\section{Solving Problem~\ref{prob:rankn} by Pessimistic Estimators}
\label{sec:pe}

An anonymous reviewer for a preliminary draft of this paper raised the possibility
of designing another deterministic solution to Problem~\ref{prob:rankn}.
The proposal was to use the pessimistic estimators of Wigderson and Xiao~\cite{WX}
to derandomize the random sampling approach of Section~\ref{sec:aw}.
In this section we show that this proposal indeed works.
We remark that pessimistic estimators were also used
by Hofmeister and Lefmann~\cite{HL}
to derandomize the proof of Theorem~\ref{thm:alt}.

It is known that there is a close relationship between pessimistic estimators
and multiplicative weight update methods.
(See, for example, the work of Young~\cite{Y}.)
However, the two methods are not identical,
and in particular the algorithm presented in this section is not identical
to either of our algorithms based on MMWUM.
To illustrate one difference, notice that the algorithm in Section~\ref{sec:aw} has the property
that its output vector $y$ has every component $y_i$ equal to an integer multiple of
$n/(T \cdot \trace B_i)$.
The algorithm of this section also has that property as it is a derandomization
of the algorithm in Section~\ref{sec:aw}.
However, the algorithms in Sections~\ref{sec:bss}, \ref{sec:mmwum} and \ref{sec:tweak}
do not have that property.

\begin{definition}[Definition~3.1 in~\cite{WX}]
  Let $\vec X = (X_1,\dotsc, X_T)$ be random variables distributed
  over $[m]$. Let $S$ be an event with $\prob(\vec X \in S) > 0$.  We
  say that $\phi_0, \dotsc, \phi_T$, $\phi_i: [m]^i \to [0, 1]$, are pessimistic estimators
  for $S$ if the following hold.
  \begin{enumerate}
  \item For any $i$ and any fixed $x_1 , \dotsc, x_i\in [m]$, we have
    that
    \begin{equation*}
      \prob_{X_{i+1},\dotsc, X_T}\paren[\Big]{(x_1,\dotsc, x_i, X_{i+1},\dotsc, X_T)\not\in S}
    \leq\phi_i(x_1,\dotsc, x_i).
    \end{equation*}
  \item For any $i$ and any fixed $x_1 , . . . , x_i \in [n]$:
    \begin{equation*}
      \mean_{X_{i+1}}\paren{\phi_{i+1}(x_1,\dotsc, x_i, X_{i+1})} \leq \phi_i(x_1,\dotsc, x_i).
    \end{equation*}
  \end{enumerate}
\end{definition}
Note that the function $\phi_0$ depends on no variables and is
therefore just a scalar in $[0, 1]$. A nice property of this
definition is that it allows compositions very easily. That is, if we
have pessimistic estimators $\phi_0, \dotsc, \phi_T$ and
$\psi_0,\dotsc, \psi_T$ for events $S$ and $S'$, resp., then
$\phi_0+\psi_0, \dotsc, \phi_T+\psi_T$ are pessimistic estimators for
the event $S\cap S'$ (see Lemma 3.3 in~\cite{WX}).

The key point of this method is that, if there are pessimistic
estimators $\phi_0,\dotsc, \phi_T$, such that $\phi_0 < 1$ and each
$\phi_i$ can be computed efficiently, then one can find $(x_1,\dotsc,
x_T)\in S$ efficiently.

Let $X_1,\dotsc, X_T$ be be i.i.d.\ random variables with same
distribution as the random variable $X$ as defined in
Section~\ref{sec:aw}. Wigderson and Xiao~\cite{WX}
considered the event
$$
S_{\geq} = \set{\vec X:
  \frac{1}{T} \sum_{i=1}^T X_i \succeq (1-\eps)\mu I}
$$
and obtained\footnote{There was an factor of $n$ in the
  $\phi_i$ that can be removed.} the following pessimistic estimators:
\begin{align*}
  &\phi_0 = n
  e^{tT(1-\eps)\mu}\left\|{\mean_X\paren[\big]{\exp(-tX)}}\right\|^{T}
  \leq n \exp(-T\eps^2\mu / (2\ln 2));\\
  &
  \phi_i(x_1,\dotsc, x_i) :=
  e^{tT(1-\eps)\mu}
  \trace\paren[\Big]{\exp(-\sum_{i=1}^{j} tx_i)} \cdot
  \left\|\mean_X\paren[\big]{\exp(-tX)}\right\|^{T-i},
\end{align*}
where $t = \log\paren[\Big]{\frac{1-(1-\eps)\mu}{(1-\mu)(1-\eps)}}$.
Similarly, for the event
$S_{\leq} = \set{\vec X: \frac{1}{T} \sum_{i=1}^T X_i \preceq (1+\eps)\mu I},$
one can find the following pessimistic estimators
\begin{align*}
  &\psi_0 = n
  e^{-t'T(1+\eps)\mu}\left\|{\mean_X\paren[\big]{\exp(t'X)}}\right\|^{T}
  \leq n \exp(-T\eps^2\mu / (2\ln 2));\\
  &
  \psi_i(x_1,\dotsc, x_i) :=
  e^{-t'T(1+\eps)\mu}
  \trace\paren[\Big]{\exp(\sum_{j=1}^{i} t'x_j)} \cdot
  \left\|\mean_X\paren[\big]{\exp(t'X)}\right\|^{T-i},
\end{align*}
where $t' =
\log\paren[\Big]{\frac{(1+\eps)(1-\mu)}{1-(1+\eps)\mu}}$. If we choose
$T > (2 \ln 2) \ln (2n)/(\eps^2\mu) = (2\ln 2) n\ln (2n)/\eps^2$, then
$\phi_0+\psi_0 < 1$. Each $\phi_i,\psi_i$ can be computed efficiently
and so one can find in polynomial time $(x_1,\dotsc, x_T)\in
S_{\geq}\cap S_{\leq}$.

\section{Comparing BSS and MMWUM}
\label{sec:bss-connection}

In this section we show a striking similarity between the algorithms
presented in Sections~\ref{sec:bss} and~\ref{sec:tweak}.
The proof of Theorem~\ref{thm:mmwum-bss} defines two
potential functions for each iteration $t$.
\begin{align*}
\Phi_U(t) &~:=~ \trace W_U(t) = \trace \exp( \gamma A(t) )
\\
\Phi_L(t) &~:=~ \trace W_L(t) = \trace \exp( - \gamma A(t) )
\end{align*}
The proof shows that, for the algorithm of Section~\ref{sec:tweak}, the potentials
must change as follows:
\begin{equation}
\label{eq:phichange}
\begin{split}
    \Phi_U(t+1) &~\leq~ (1+\delta_U) \Phi_U(t)
    \qquad\forall t \in \set{0,\ldots,T-1}
    \\
    \Phi_L(t+1) &~\leq~ (1-\delta_L) \Phi_L(t)
    \qquad\forall t \in \set{0,\ldots,T-1}.
\end{split}
\end{equation}
Instead of requiring these potentials to grow and shrink in this way,
we could instead parameterize the potential functions by the iteration number $t$ and
then simply require that the potential do not grow from iteration to iteration.
To formalize this alternative approach, let us define the new potential functions
\begin{align*}
\Psi^u(A) &~:=~ \trace \exp(-uI+\gamma A),
\\
\Psi_{\ell}(A) &~:=~ \trace \exp(\ell I-\gamma A)
\end{align*}
and define the parameters $\Delta_U = \ln(1+\delta_U)$ and
$\Delta_L = \ln\big((1-\delta_L)^{-1}\big)$.

\begin{proposition}
The inequalities in \eqref{eq:phichange} governing the algorithm's change in
potentials are equivalent to inequalities in \eqref{eq:shift2}.
\begin{equation}
\label{eq:shift2}
\begin{split}
\Psi^{(t+1) \Delta_U}(A(t)+\alpha B_j) &~\leq~ \Psi^{t \Delta_U}(A(t))
\\
\Psi_{(t+1) \Delta_L}(A(t)+\alpha B_j) &~\leq~ \Psi_{t \Delta_L}(A(t))
\end{split}
\end{equation}
\end{proposition}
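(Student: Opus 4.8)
The plan is to unfold the definitions of the new potential functions and observe that the affine scalar shift in the exponent factors out as a scalar, converting the $t$-dependence into precisely the geometric factors $1+\delta_U$ and $1-\delta_L$ appearing in \eqref{eq:phichange}. Since $uI$ commutes with every symmetric matrix, we have $\exp(-uI+\gamma A) = e^{-u}\exp(\gamma A)$ and $\exp(\ell I - \gamma A) = e^{\ell}\exp(-\gamma A)$, so for all $u,\ell\in\Reals$ and $A\in\Sym$,
\begin{equation*}
\Psi^u(A) = e^{-u}\trace\exp(\gamma A), \qquad \Psi_{\ell}(A) = e^{\ell}\trace\exp(-\gamma A).
\end{equation*}

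First I would fix $t\in\set{0,\ldots,T-1}$ and abbreviate $A' := A(t)+\alpha B_j$, the iterate produced from $A(t)$ at that step, so that by definition $\Phi_U(t+1) = \trace\exp(\gamma A')$ and $\Phi_L(t+1) = \trace\exp(-\gamma A')$, while $\Phi_U(t) = \trace\exp(\gamma A(t))$ and $\Phi_L(t) = \trace\exp(-\gamma A(t))$. Substituting into the identities above,
\begin{equation*}
\Psi^{(t+1)\Delta_U}(A') = e^{-(t+1)\Delta_U}\Phi_U(t+1), \qquad \Psi^{t\Delta_U}(A(t)) = e^{-t\Delta_U}\Phi_U(t),
\end{equation*}
and likewise $\Psi_{(t+1)\Delta_L}(A') = e^{(t+1)\Delta_L}\Phi_L(t+1)$ and $\Psi_{t\Delta_L}(A(t)) = e^{t\Delta_L}\Phi_L(t)$.

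Then I would simply cancel the positive prefactors. Dividing the first inequality in \eqref{eq:shift2} by $e^{-t\Delta_U}>0$ shows it is equivalent to $\Phi_U(t+1) \leq e^{\Delta_U}\Phi_U(t)$, and $e^{\Delta_U} = 1+\delta_U$ by the definition $\Delta_U = \ln(1+\delta_U)$, which is the first line of \eqref{eq:phichange}. Dividing the second inequality in \eqref{eq:shift2} by $e^{t\Delta_L}>0$ shows it is equivalent to $\Phi_L(t+1)\leq e^{-\Delta_L}\Phi_L(t)$, and $e^{-\Delta_L} = 1-\delta_L$ by the definition $\Delta_L = \ln\bigl((1-\delta_L)^{-1}\bigr)$, which is the second line of \eqref{eq:phichange}. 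Since every step is an equivalence and $t\in\set{0,\ldots,T-1}$ was arbitrary, the two systems of inequalities coincide.

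The hard part, such as it is, amounts to nothing more than noticing that the additive scalar term $-uI$ (respectively $\ell I$) exponentiates to the scalar $e^{-u}$ (respectively $e^{\ell}$); the rest is cancellation of positive constants. The only points requiring a little care are the index bookkeeping that $A' = A(t)+\alpha B_j$ is the next iterate and the sign in the exponent of $\Psi_\ell$, which is exactly what turns the growth factor $1+\delta_U$ into the shrink factor $1-\delta_L$.
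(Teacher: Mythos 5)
Your proof is correct and follows essentially the same route as the paper's: both arguments reduce to the observation that the scalar shift $-uI$ (resp.\ $\ell I$) in the exponent factors out as $e^{-u}$ (resp.\ $e^{\ell}$), after which the equivalence is just cancellation of positive prefactors and the identities $e^{\Delta_U}=1+\delta_U$, $e^{-\Delta_L}=1-\delta_L$. The paper merely runs the chain in the opposite direction (starting from \eqref{eq:phichange} and multiplying by $(1+\delta_U)^{-t}$), which is the same computation.
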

\begin{proof}
Obviously \eqref{eq:phichange} is equivalent to
\begin{align*}
(1 + \delta_U)^{-(t+1)} \cdot \Phi_U(t+1) &~\leq~ (1 + \delta_U)^{-t} \cdot \Phi_U(t)
\qquad\forall t \in \set{0, \ldots, T-1},
\\
(1 - \delta_L)^{-(t+1)} \cdot \Phi_L(t+1) &~\leq~ (1 - \delta_L)^{-t} \cdot \Phi_L(t)
\qquad\forall t \in \set{0, \ldots, T-1}.
\end{align*}
By the definition of $\Phi_U$ and $\Phi_L$, and by properties of the
exponential function, these inequalities are equivalent to
\begin{equation}
\label{eq:shift}
\begin{split}
\trace \exp(-(t+1) \Delta_U I+\gamma A(t+1)) & ~\leq~ \trace \exp(-t \Delta_U I+\gamma A(t)),
\\
\trace \exp((t+1) \Delta_L I-\gamma A(t+1)) & ~\leq~ \trace \exp(t \Delta_L I-\gamma A(t)).
\end{split}
\end{equation}
Writing $A(t+1) = A(t) + \alpha B_j$, these inequalities in \eqref{eq:shift} are equivalent to
\eqref{eq:shift2}.
\end{proof}

Algorithm~\ref{alg:mmwum} gives pseudocode for the algorithm of Section~\ref{sec:tweak},
using the functions $\Psi^u$ and $\Psi_\ell$ to control the change in potentials.

\begin{algorithm}
\begin{outer_alg}
\item	\textbf{procedure} SparsifySumOfMatricesByMMWUM($B_1,\ldots,B_m$, $\eps$)
\item	\textbf{input:} Matrices $B_1,\ldots,B_m \in \Psd$ such that $\sum_i B_i = I$, and a parameter $\eps \in (0,1)$.
\item	\textbf{output:} A vector $y$
    with $O(n \log n / \eps^2)$ nonzero entries
    such that $I \preceq \sum_i y_i B_i \preceq (1+O(\eps)) I$.
\item   Initially $A(0):=0$, and $y(0) := 0$.
        Set parameters
        $$
            u_0:=0,
            \qquad
            \ell_0:=0,
            \qquad
            \Delta_U := \ln(1+\delta_U),
            \qquad
            \Delta_L := \ln\big((1-\delta_L)^{-1}\big),
        $$
        where $\delta_U$, $\delta_L$ and $T$ are as defined in \eqref{eq:mmwumparams}.
\item   Define the potential functions
        $\Psi^u(A) := \trace \exp(-uI+\gamma A)$ and $\Psi_{\ell}(A) := \trace \exp(\ell I-\gamma A)$.
\item   For $t=1,\ldots,T$
    \begin{alg}
        \item Set $u_t := u_{t-1} + \Delta_U$ and $\ell_t := \ell_{t-1} + \Delta_L$.
        \item Find a matrix $B_j$ and a value $\alpha > 0$ such that
        $$\Psi^{u_t}(A(t-1)+\alpha B_j) \leq \Psi^{u_{t-1}}(A(t-1))
        \quad\text{and}\quad
        \Psi_{\ell_t}(A(t-1)+\alpha B_j) \leq \Psi_{\ell_{t-1}}(A(t-1)).$$
        \item Set $A(t) := A(t-1) + \alpha B_j$ and $y(t) := y(t-1) + \alpha e_j$.
    \end{alg}
\item   Return $y(T) / \lambda_{\min}(A(T))$.
\end{outer_alg}
\caption{A procedure for solving Problem~\ref{prob:rankn} based on the
    Width-Free MMWUM method.}
\label{alg:mmwum}
\end{algorithm}

The main point of this section is to observe that Algorithms~\ref{alg:bss} and~\ref{alg:mmwum}
are identical with the exception of different parameters and different potential functions.
We believe that this similarity between these two algorithms is intriguing,
especially since the BSS algorithm has been called ``highly original'' by Naor \cite{N11}.
In retrospect, it would have been perhaps more natural to develop the BSS algorithm
by the following logical progression of ideas: first observe that MMWUM is
useful for giving sparse solutions to SDPs, then design Algorithm~\ref{alg:mmwum},
then later realize that a clever refinement of it leads
to Algorithm~\ref{alg:bss} and its improved analysis.
It is remarkable that Batson, Spielman and Srivastava developed their
algorithm from first principles, apparently without knowing this connection to
established algorithmic techniques.

With the advantage of hindsight (i.e., the knowledge that the BSS algorithm exists),
we now explain how one might be tempted to refine Algorithm~\ref{alg:mmwum}.
It is quite tempting to modify the potential functions to more strongly
penalize eigenvalues which deviate from the desired range.
The natural approach to do this would be to increase the derivatives of the potential
function by increasing the parameter $\gamma$.
However, as remarked at the end of Section~\ref{sec:tweak},
the algorithm is actually unaffected by varying $\gamma$!
Thus, to improve Algorithm~\ref{alg:mmwum},
one must seek a more substantially different potential function.

Focusing on the upper potential, we consider the question:
is there a function $f \ffrom \Reals \fto \Reals$
with steeper derivatives than $\exp(u-x)$ and such that,
for any matrices $A$ and $B$,
$\trace f(A+B)$ can be easily related to $\trace f(A)$?
The natural candidates to try are $f(x) = -\log(u-x)$ and $f(x) = (u-x)^{-1}$
since, in both cases, $\trace f(A+B)$ can be related to $\trace f(A)$
by the Sherman-Morrison-Woodbury formula.
We do not know whether the choice $f(x) = -\log(u-x)$ can be made to work.
However, choosing $f(x) = (u-x)^{-1}$, one arrives at Algorithm~\ref{alg:bss},
our generalization of the BSS algorithm.
Of course, even after arriving at this algorithm, one must also analyze it,
and this requires the delicate calculations
that were accomplished by Batson, Spielman and Srivastava.

\section*{Acknowledgements}

We thank Satyen Kale for helpful discussions.

\newpage
\appendix

\section{Proofs of the Applications}
\label{app:app}

\repeatclaim{Corollary \ref{cor:sparsifier-costs}}{\corcosts}

\begin{proof}
  For every edge $e=ij\in E$, let $B_e$ be the direct sum $
  w_{ij}\sqbrac[\big]{(e_i-e_j)(e_i-e_j)^T \oplus c_{1,e} \oplus\dotsm \oplus
  c_{k,e}}$. Let $B := \Laplacian[G]{w}\oplus \iprodt{w}{c_1}
\oplus\dotsm\oplus \iprodt{w}{c_k}$.  The result
  follows immediately by applying Theorem~\ref{thm:bssextension} to these
  matrices.
\end{proof}

\repeatclaim{Corollary \ref{cor:rainbow}}{\corrainbow}

\begin{proof}
  For each $i$, let $c_i \ffrom E \fto \Reals$ be the
  characteristic vector of~$E_i$. Now apply
  Corollary~\ref{cor:sparsifier-costs}.
\end{proof}

\repeatclaimcomment{Corollary \ref{cor:hypergraph-sparsifier}}
    {Spectral sparsifiers for hypergraphs}
    {\corhyp}

\begin{proof}
  The result follows directly by applying Theorem~\ref{thm:bssextension}
  to the matrices $w_E\Lcal_E$.
\end{proof}

\repeatclaimcomment{Corollary \ref{cor:hyp2}}
    {Cut sparsifiers for hypergraphs, second definition}
    {\corhyptwo}

\begin{proof}
Note that $w^*(\delta_{\Hcal}(S))$ is
obtained by evaluating the quadratic form $\qform{\Lcal_{\Hcal}(w)}{x}$,
where $x$ is the characteristic vector of~$S$.
Thus the sparsifier produced by Corollary~\ref{cor:hypergraph-sparsifier} satisfies
the desired inequalities.
\end{proof}

\repeatclaimcomment{Corollary \ref{cor:hyp3}}
    {Cut sparsifiers for hypergraphs, first definition}
    {\corhypthree}

\begin{proof}
For any $r$-uniform hypergraph $\Hcal$, it is easy to see that
\begin{equation}
  \label{eq:cut-weights}
  (r-1)w(\delta_{\Hcal}(S)) \leq
  w^*(\delta_{\Hcal}(S)) \leq \floor{r/2}\ceil{r/2} w(\delta_{\Hcal}(S))
  \qquad
  \forall S \subseteq V.
\end{equation}
Thus the sparsifier produced by Corollary~\ref{cor:hypergraph-sparsifier} satisfies
the desired inequalities.
\end{proof}

\repeatclaimcomment{Corollary \ref{cor:cut-sparsifiers}}
    {Cut sparsifiers for 3-uniform hypergraphs}
    {\threeunif}

\begin{proof}
  Since $r=3$, a consequence of~\eqref{eq:cut-weights} is that $w^*(\delta_{\Hcal}(S)) =
  2w(\delta_{\Hcal}(S))$ for every~$S$.
Thus the sparsifier produced by Corollary~\ref{cor:hypergraph-sparsifier} satisfies
the desired inequalities.
\end{proof}

\repeatclaim{Corollary \ref{cor:sparse-sdp}}{\corsdp}
\begin{proof}
  Let $B_i' :=
  \begin{bmatrix}
    z_i^*A_i &\mathbf{0}\\
    \mathbf{0} & c_iz_i^*
  \end{bmatrix}$ for every $i\in[m]$ and
  $B' :=
  \begin{bmatrix}
    D& \mathbf{0}\\
    \mathbf{0} & c^Tz^*
  \end{bmatrix}$, where $D := \sum_i z_i^* A_i \succeq B$. Then $B_i'\succeq 0$
  and $B'= \sum_i B_i'$.  By applying Theorem~\ref{thm:bssextension}, we
  obtain $y\in\Reals^m$ with $y\geq 0$ and $O(n/\eps^2)$ nonzero entries
  such that $\sum_i y_i z_i^* A_i \succeq D \succeq B$ and $\sum_i c_i
  y_i z_i^* \leq (1+\eps) c^T z^*$. Thus, we can take $\bar z_i =
  y_iz_i^*$ for every $i\in[m]$.
\end{proof}

\repeatclaim{Corollary \ref{cor:hypersphere}}{\corhypersphere}
\begin{proof}
  It is straightforward to formulate $t'(G)$ as an SDP (see, e.g.,
  \cite{LovaszSDP}) so that its dual has an optimal solution and there
  is no duality gap. The dual can be written as:
  \begin{equation}
    \label{opt:tprime}
    \max
    \setst[\Big]{
      \sum_{e \in E} z_e
    }{
      \Diag(y)\succeq \Laplacian{z},\,
      \sum_{v\in V} y_v = 1,\,
      z \geq 0
    }
  \end{equation}
  The proof is now almost identical to the proof of
  Corollary~\ref{cor:sparse-sdp}. Let $(z^*,y^*)$ be an optimal
  solution. Using Theorem~\ref{thm:bssextension}, we obtain $\bar z\in
  \Reals^E$ with $\bar z\geq 0$ and $O(n/\eps^2)$ nonzero entries such
  that $(y^*,\bar z)$ is feasible in~\eqref{opt:tprime} and has
  objective value $\sum_{e\in E(H)}\bar z_e\geq (1-\eps) t(G)$, where $H
  = (V,E(H))$ and $E(H)$ is the support of $\bar z$. Then $\bar z$ is
  also feasible for the SDP defined using $H$ instead of $G$, which
  shows that $t'(H) \geq (1-\eps)t'(G)$.
\end{proof}

\repeatclaim{Corollary \ref{cor:thetap}}{\corthetap}
\begin{proof}
  For a graph $G = (V,E)$, define $t(G)$ as the square of the minimum
  radius of a hypersphere on $\Reals^n$ such that there is a map
  from~$V$ to the hypersphere such that adjacent vertices are mapped to
  points at distance exactly $1$. \Lovasz~\cite{LovaszSDP} noted that
  $t(G)$ is related to the \Lovasz{} theta number
  $\vartheta(\overline{G})$ of the complement~$\overline{G}$ of~$G$ by
  the formula $2t(G) + 1/\vartheta(\overline{G}) =1$; see~\cite{dCST11a}
  for a proof. By repeating the same proof for $t'(G)$, one finds that
  $2t'(G) + 1/\vartheta'(\overline{G}) =1$. The result now follows from
  Corollary~\ref{cor:hypersphere} via this formula.
\end{proof}

\repeatclaim{Corollary \ref{cor:thetaplow}}{\corthetaplow}
\begin{proof}
  Apply Corollary~\ref{cor:thetap} with $\eps := \gamma/\vartheta'(G)$.
\end{proof}

\repeatclaim{Corollary \ref{cor:thetaphigh}}{\corthetaphigh}
\begin{proof}
  Apply Corollary~\ref{cor:thetap} with $\eps := \gamma/\sqrt{n}$.
\end{proof}

\repeatclaim{Corollary \ref{cor:cara}}{\corcara}
\begin{proof}
  Let $B_i' :=
  \begin{bmatrix}
    \lambda_i B_i &\mathbf{0}\\
    \mathbf{0} &\lambda_i
  \end{bmatrix}$ for every $i\in[m]$ and
  $B' :=
  \begin{bmatrix}
    B& \mathbf{0}\\
    \mathbf{0} & 1
  \end{bmatrix}$, so that $B_i'\succeq 0$ and $B'= \sum_i B_i'$. By
  applying Theorem~\ref{thm:bssextension}, we obtain $y\in\Reals^m$ with
  $y\geq 0$ and $O(n/\eps^2)$ nonzero entries such that $B'\preceq
  \sum_i y_i B_i' \preceq (1+\eps)B'$ or, equivalently, $B\preceq \sum_y
  y_i\lambda_i B_i \preceq (1+\eps)B$ and $1\leq
  \sum_{i}y_i\lambda_i\leq 1+\eps$. Let $\mu\in\Reals^m$ be defined by
  $\mu_i := y_i\lambda_i/(\sum_iy_i\lambda_i)$. Then $\mu\geq 0$ and
  $\sum_{i}\mu_i = 1$, and
  \begin{equation*}
    (1-\eps)B
    \preceq
    \frac{B}{1+\eps}
    \preceq
    \frac{B}{\sum_{i}y_i\lambda_i}
    \preceq
    \sum_{i} \mu_i B_i
    \preceq
    \frac{1+\eps}{\sum_{i}y_i\lambda_i} B
    \preceq
    (1+\eps) B.
  \end{equation*}
  This completes the proof.
\end{proof}

\repeatclaim{Corollary \ref{cor:sim-sparsifier}}{\corsubgraph}

\begin{proof}
  For each edge $e \in E$, define $B_e := w_e
  \sqbrac[\big]{\Lcal_G(\chi^{e}) \oplus \bigoplus_{F \in \Fcal}
    \Lcal_F(\chi^{e}\!\!\restriction_{E(F)})}$, where $\chi^e$ denotes
  the characteristic vector of $\set{e}$ as a subset of~$E$. Now apply
  Theorem~\ref{thm:bssextension}.
\end{proof}

\section{The MMWUM}
\label{sec:mmwum-proofs}

In this section we provide some proofs about the MMWUM. These proofs
are due to Kale~\cite{Kale07}. Our set up and conclusions are slightly
different and we modified the proofs accordingly. We reproduce the
proofs here for the sake of completeness.

Theorem~\ref{thm:block-mmwum} can be viewed as a block-friendly
version of MMWUM. First we show the version with only one block. It is
basically the same as~\cite[Theorem~13 in Chapter 4]{Kale07}.

\begin{theorem}
  \label{thm:kale-mmwum}
  Let $T$ be a positive integer. Let $C, A_1,\dotsc, A_m \in
  \Sym$. Let $\eta > 0$ and $0 < \beta \leq 1/2$. For any given~$X \in
  \Sym$, consider the system
  \begin{equation}
    \label{kale-oracle}
    \displaystyle \sum_{i=1}^m y_i \iprod{A_i}{X}
    \geq \iprod{C}{X} - \eta \trace X,
    \quad\text{and}\quad y \in \Reals_+^m.
  \end{equation}
  Let $\set{\Pcal,\Ncal}$ be a partition of $[T]$, let $0 < \ell \leq
  \rho$, and let $W^{(t)}\in \Sym$ and $\ell^{(t)}
  \in \Reals$ for $t \in [T+1]$. Let $y^{(t)} \in \Reals^m$ for $t \in
  [T]$. Suppose the following properties hold:
  \begin{gather*}
    W^{(t+1)} = \exp\paren[\bigg]{-\frac{\beta}{\ell+\rho}
      \sum_{\tau=1}^t
      \sqbrac[\bigg]{\sum_{i=1}^m y_i^{(\tau)} A_i - C
        + \ell^{(\tau)} I} }, \qquad \forall t \in \set{0,\dotsc,T},
    \\
    y = y^{(t)} \text{ is a solution for~\eqref{kale-oracle} with } X
    = W^{(t)}, \qquad \forall t \in [T],
    \\
    \sum_{i=1}^m y_i^{(t)} A_i - C
    \in
    \begin{cases}
      [-\ell,\rho], & \text{if }t \in \Pcal,\\
      [-\rho,\ell], & \text{if }t \in \Ncal,
    \end{cases}
    \qquad \forall t \in [T],
    \\
    \ell^{(t)} = \ell, \quad \forall t \in \Pcal, \quad\text{and}\quad
    \ell^{(t)} = -\ell, \quad \forall t \in \Ncal.
  \end{gather*}
  Define $\yb := \frac{1}{T} \sum_{t=1}^T y^{(t)}$. Then
  \begin{equation}
    \label{eq:kale-mmwum-conclusion}
    \sum_{i=1}^m \yb_i A_i - C \succeq
    -\sqbrac[\Big]{\beta \ell + \frac{(\rho+\ell)\ln n}{T\beta}
      + (1+\beta)\eta}
    I.
  \end{equation}
\end{theorem}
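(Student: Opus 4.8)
The plan is to run the standard matrix multiplicative weights potential argument on the trace of $W^{(t)}$, with the loss matrices dictated by the hypotheses. First I would set, for each $t\in[T]$, the normalized loss matrix
$$
M^{(t)} \ :=\ \frac{1}{\rho+\ell}\Bigl(\sum_{i=1}^m y_i^{(t)}A_i - C + \ell^{(t)}I\Bigr),
$$
and record two facts. The update rule then reads simply $W^{(t+1)}=\exp\bigl(-\beta\sum_{\tau=1}^t M^{(\tau)}\bigr)$ with $W^{(1)}=I$. And the width hypotheses, combined with the prescribed values of $\ell^{(t)}$, give $0\preceq M^{(t)}\preceq I$ when $t\in\Pcal$ and $-I\preceq M^{(t)}\preceq 0$ when $t\in\Ncal$; in either case $\|M^{(t)}\|\le 1$.

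Next I would analyze the potentials $\Phi_t:=\trace W^{(t)}$, using $p^{(t)}:=W^{(t)}/\trace W^{(t)}$ for the associated density matrix. From the Golden--Thompson inequality and the operator inequality $\exp(-\beta M)\preceq I-\beta M+\beta^2 M^2$ (valid since $\|\beta M^{(t)}\|\le\beta\le\tfrac12$, and $e^{-x}\le 1-x+x^2$ on $[-\tfrac12,\tfrac12]$), one gets $\Phi_{t+1}\le\Phi_t\bigl(1-\beta\iprod{M^{(t)}}{p^{(t)}}+\beta^2\iprod{(M^{(t)})^2}{p^{(t)}}\bigr)\le\Phi_t\exp\bigl(-\beta\iprod{M^{(t)}}{p^{(t)}}+\beta^2\iprod{(M^{(t)})^2}{p^{(t)}}\bigr)$. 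Iterating from $\Phi_1=n$ and comparing with $\Phi_{T+1}=\trace\exp\bigl(-\beta\sum_t M^{(t)}\bigr)\ge\exp\bigl(-\beta\lambda_{\min}(\sum_t M^{(t)})\bigr)$, then taking logarithms and dividing by $\beta$, yields
$$
\lambda_{\min}\Bigl(\sum_{t=1}^T M^{(t)}\Bigr) \ \ge\ \sum_{t=1}^T\iprod{M^{(t)}}{p^{(t)}} \ -\ \beta\sum_{t=1}^T\iprod{(M^{(t)})^2}{p^{(t)}} \ -\ \frac{\ln n}{\beta}.
$$

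To handle the quadratic term I would use the pointwise facts $(M^{(t)})^2\preceq M^{(t)}$ for $t\in\Pcal$ and $(M^{(t)})^2\preceq -M^{(t)}$ for $t\in\Ncal$, which follow directly from the eigenvalue bounds above. Substituting these yields $\lambda_{\min}(\sum_t M^{(t)})\ge(1-\beta)\sum_{t\in\Pcal}\iprod{M^{(t)}}{p^{(t)}}+(1+\beta)\sum_{t\in\Ncal}\iprod{M^{(t)}}{p^{(t)}}-\frac{\ln n}{\beta}$. The oracle hypothesis, divided by $\trace W^{(t)}$ and rewritten via $\sum_i y_i^{(t)}A_i-C=(\rho+\ell)M^{(t)}-\ell^{(t)}I$, says exactly that $\iprod{M^{(t)}}{p^{(t)}}\ge(\ell^{(t)}-\eta)/(\rho+\ell)$, i.e.\ $\ge(\ell-\eta)/(\rho+\ell)$ on $\Pcal$ and $\ge(-\ell-\eta)/(\rho+\ell)$ on $\Ncal$. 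Since the coefficients $1-\beta$ and $1+\beta$ are nonnegative, inserting these gives a purely numerical lower bound on $\lambda_{\min}(\sum_t M^{(t)})$ in terms of $|\Pcal|,|\Ncal|,\ell,\eta,\beta$ and $\ln n/\beta$.

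Finally I would translate back. Since $\sum_t M^{(t)}=\frac{T}{\rho+\ell}\bigl(\sum_i\yb_i A_i-C+\bar\ell I\bigr)$ with $\bar\ell:=\frac1T\sum_t\ell^{(t)}$, multiplying the previous inequality by $(\rho+\ell)/T$ and subtracting $\bar\ell$ produces a lower bound on $\lambda_{\min}(\sum_i\yb_i A_i-C)$. Writing $p:=|\Pcal|/T$ and $q:=|\Ncal|/T$ (so $p+q=1$ and $\bar\ell=(p-q)\ell$) and expanding, almost everything cancels, leaving $-\eta-\beta\ell+\beta(p-q)\eta-\frac{(\rho+\ell)\ln n}{T\beta}$; bounding $(p-q)\ge-1$ gives exactly $\lambda_{\min}(\sum_i\yb_i A_i-C)\ge-\bigl[\beta\ell+\frac{(\rho+\ell)\ln n}{T\beta}+(1+\beta)\eta\bigr]$, which is the claimed inequality. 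I expect the potential computation to be routine once the operator inequalities are set up; the delicate part will be this last bookkeeping step --- arranging the terms so that the $\bar\ell$ shift cancels against the matching pieces of the oracle bounds and the $\beta$-weighted contributions assemble cleanly into $\beta\ell+(1+\beta)\eta$ rather than something messier.
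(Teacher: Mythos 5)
Your proof is correct and follows essentially the same route as the paper: define $M^{(t)}=\frac{1}{\rho+\ell}(\sum_i y_i^{(t)}A_i-C+\ell^{(t)}I)$, run the Golden--Thompson potential argument on $\trace W^{(t)}$ to get the regret bound $(1-\beta)\sum_{t\in\Pcal}\iprod{M^{(t)}}{p^{(t)}}+(1+\beta)\sum_{t\in\Ncal}\iprod{M^{(t)}}{p^{(t)}}\leq\lambda_{\min}(\sum_t M^{(t)})+\frac{\ln n}{\beta}$, feed in the oracle guarantee, and do the final bookkeeping. The only (immaterial) difference is that you linearize the exponential via $e^{-x}\leq 1-x+x^2$ together with $(M^{(t)})^2\preceq\pm M^{(t)}$, whereas the paper uses $\exp(-\beta A)\preceq I-(1-e^{-\beta})A$ and $\exp(-\beta A)\preceq I-(e^{\beta}-1)A$; both yield the same inequality, and your closing computation matches the claimed error term exactly.
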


The main tool for the proof of Theorem~\ref{thm:kale-mmwum} is the
following result:
\begin{theorem}[{Kale~\cite[Corollary~3 in Chapter~3]{Kale07}}]
  \label{thm:10}
  Let $0 < \beta \leq 1/2$. Let $T$ be a positive integer. Let
  $\set{\Pcal,\Ncal}$ be a partition of~$[T]$, and let $M^{(t)} \in
  \Sym$ for $t \in [T]$ and $W^{(t)} \in \Sym$ for $t \in [T+1]$ with
  the following properties:
  \begin{gather*}
    W^{(t+1)} = \exp\paren[\bigg]{-\beta \sum_{\tau=1}^t M^{(\tau)}}
    \quad\forall
    t=0,\dotsc,T, \\
    0 \preceq M^{(t)} \preceq I,\quad \forall t \in \Pcal,
    \qquad\text{and}\qquad
    -I \preceq M^{(t)} \preceq 0,\quad \forall t \in \Ncal,
  \end{gather*}
  Let \[ P^{(t)} := \frac{1}{\trace{W^{(t)}}} W^{(t)}, \qquad \forall
  t \in [T]. \] Then
  \begin{equation}
    \label{eq:3.3}
    (1-\beta)\sum_{t \in \Pcal} \iprod{M^{(t)}}{P^{(t)}} +
    (1+\beta)\sum_{t \in \Ncal} \iprod{M^{(t)}}{P^{(t)}}
    \leq
    \lambda_{\min} \paren[\bigg]{\sum_{t=1}^T M^{(t)}} +
    \frac{\ln n}{\beta}.
  \end{equation}
\end{theorem}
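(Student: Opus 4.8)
The plan is to track the scalar potential $\Phi^{(t)} := \trace W^{(t)}$, to bound its multiplicative growth at each step, to telescope over $t = 1,\dotsc,T$, and then to compare the final value $\Phi^{(T+1)}$ against the smallest eigenvalue of the cumulative matrix $\sum_{t} M^{(t)}$. Since $W^{(1)} = \exp(0) = I$, the potential starts from $\Phi^{(1)} = n$.

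I would first isolate the two elementary facts used at each step. The first is the Golden--Thompson inequality $\trace\exp(A+B) \le \trace(\exp(A)\exp(B))$ for $A,B \in \Sym$. The second is a pair of matrix exponential bounds obtained by applying the scalar chord estimates $e^{-\beta x} \le 1-(1-e^{-\beta})x$ and $e^{\beta x} \le 1+(e^{\beta}-1)x$ (valid for $x \in [0,1]$ by convexity) to eigenvalues: if $0 \preceq M \preceq I$ then $\exp(-\beta M) \preceq I - (1-e^{-\beta})M$, and if $0 \preceq N \preceq I$ then $\exp(\beta N) \preceq I + (e^{\beta}-1)N$. I would also record the two scalar estimates $1 - e^{-\beta} \ge \beta(1-\beta)$ and $e^{\beta}-1 \le \beta(1+\beta)$, both of which hold for $\beta \in (0,1/2]$ (each follows by checking that the relevant function vanishes together with its derivative at $\beta = 0$ and has nonnegative second derivative on $[0,1/2]$), together with the observation that $\iprod{M}{P} = \trace(MP) \ge 0$ whenever $M,P \succeq 0$ (and hence $\iprod{M}{P} \le 0$ when $M \preceq 0 \preceq P$).

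The per-step estimate is the heart of the argument. Writing $W^{(t+1)} = \exp\paren{-\beta\sum_{\tau=1}^{t} M^{(\tau)}}$ and applying Golden--Thompson with $A = -\beta\sum_{\tau<t}M^{(\tau)}$ and $B = -\beta M^{(t)}$ gives $\Phi^{(t+1)} \le \trace\paren{W^{(t)}\exp(-\beta M^{(t)})}$. If $t \in \Pcal$, then $0 \preceq M^{(t)} \preceq I$, so by the first exponential bound and the monotonicity $\trace(W^{(t)}X) \le \trace(W^{(t)}Y)$ for $X \preceq Y$ and $W^{(t)} \succeq 0$, we get $\Phi^{(t+1)} \le \trace W^{(t)} - (1-e^{-\beta})\trace(W^{(t)}M^{(t)}) = \Phi^{(t)}\paren{1-(1-e^{-\beta})\iprod{M^{(t)}}{P^{(t)}}}$; since $\iprod{M^{(t)}}{P^{(t)}} \ge 0$ and $1+x \le e^{x}$, this is at most $\Phi^{(t)}\exp\paren{-\beta(1-\beta)\iprod{M^{(t)}}{P^{(t)}}}$. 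If instead $t \in \Ncal$, then $-M^{(t)} \in [0,I]$, so $\exp(-\beta M^{(t)}) = \exp\paren{\beta(-M^{(t)})} \preceq I - (e^{\beta}-1)M^{(t)}$, and the same manipulation gives $\Phi^{(t+1)} \le \Phi^{(t)}\exp\paren{-\beta(1+\beta)\iprod{M^{(t)}}{P^{(t)}}}$ --- here $\iprod{M^{(t)}}{P^{(t)}} \le 0$, so replacing $e^{\beta}-1$ by the larger quantity $\beta(1+\beta)$ only weakens the bound, as required.

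Finally I would telescope the per-step bounds from $t=1$ to $T$ and use $\Phi^{(1)} = n$ to obtain
\[
\Phi^{(T+1)} \le n\exp\paren[\Big]{-\beta\sqbrac[\Big]{(1-\beta)\sum_{t\in\Pcal}\iprod{M^{(t)}}{P^{(t)}} + (1+\beta)\sum_{t\in\Ncal}\iprod{M^{(t)}}{P^{(t)}}}}.
\]
On the other hand, $\Phi^{(T+1)} = \trace\exp\paren{-\beta\sum_{t=1}^{T} M^{(t)}} \ge \lambda_{\max}\paren[\big]{\exp\paren{-\beta\sum_{t} M^{(t)}}} = \exp\paren{-\beta\,\lambda_{\min}\paren{\sum_{t} M^{(t)}}}$. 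Combining the two displays, taking logarithms, and dividing by $\beta > 0$ yields \eqref{eq:3.3}. The only genuinely delicate points are extracting the constants $(1-\beta)$ and $(1+\beta)$ from the two scalar estimates (this is where $\beta \le 1/2$ enters) and keeping track of the sign of the interaction term $\iprod{M^{(t)}}{P^{(t)}}$, which is nonnegative on $\Pcal$ and nonpositive on $\Ncal$; everything else is a routine telescoping calculation.
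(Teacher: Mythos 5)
Your proposal is correct and follows essentially the same route as the paper's proof (which reproduces Kale's argument): Golden--Thompson, the convexity chord bounds on $\exp(-\beta M^{(t)})$, the per-step multiplicative potential estimate, telescoping from $\Phi^{(1)}=n$, and the lower bound $\Phi^{(T+1)}\geq \exp\paren[\big]{-\beta\,\lambda_{\min}\paren{\sum_t M^{(t)}}}$. The only (cosmetic) difference is that you invoke the scalar estimates $1-e^{-\beta}\geq\beta(1-\beta)$ and $e^{\beta}-1\leq\beta(1+\beta)$ inside each per-step bound, whereas the paper carries the factors $1-e^{-\beta}$ and $e^{\beta}-1$ through the telescoping and converts them to $(1-\beta)\beta$ and $(1+\beta)\beta$ only at the end, using the signs of the two sums.
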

\begin{proof}
  Set $\Phi^{(t)} := \trace(W^{(t)})$ for $t \in [T+1]$. Put
  $\kaleeps_1 := 1 - e^{-\kaleeps}$ and $\kaleeps_2 :=
  e^{\kaleeps}-1$. Then, for any $t \in [T]$,
  \[
  \begin{split}
    \Phi^{(t+1)}
    & =
    \trace\paren{W^{(t+1)}}
    =
    \trace\paren[\Bigg]{\exp\paren[\bigg]{
        -\kaleeps\sum_{\tau=1}^t M^{(\tau)}}}
    \\
    & \leq
    \trace\paren[\Bigg]{
      \exp\paren[\bigg]{-\kaleeps\sum_{\tau=1}^{t-1} M^{(\tau)}}
      \exp\paren[\Big]{-\kaleeps M^{(t)}}}
    =
    \trace\paren[\Big]{W^{(t)} \exp(-\kaleeps M^{(t)})}
    \\
    & =
    \iprod{W^{(t)}}{\exp(-\kaleeps M^{(t)})},
  \end{split}
  \] where we have used Golden-Thompson's
  inequality~\eqref{eq:golden-thompson}.

  Using the fact that $e^x$ is convex, one can prove that
  \begin{align*}
    0 \preceq A \preceq I & \implies
    \exp(-\kaleeps A) \preceq I - \kaleeps_1 A, \\
    -I \preceq A \preceq 0 & \implies
    \exp(-\kaleeps A) \preceq I - \kaleeps_2 A.
  \end{align*}

  Suppose that $t \in \Pcal$. Then $\exp(-\kaleeps M^{(t)}) \preceq I
  - \kaleeps_1 M^{(t)}$, and since $W^{(t)} \succeq 0$, we get
  \[
  \begin{split}
    \Phi^{(t+1)}
    & \leq
    \iprod{W^{(t)}}{\exp(-\kaleeps M^{(t)})}
    \leq
    \iprod{W^{(t)}}{I-\kaleeps_1 M^{(t)}}
    \\
    & =
    \trace(W^{(t)})- \kaleeps_1 \iprod{W^{(t)}}{M^{(t)}}
    \\
    & =
    \trace(W^{(t)})- \trace(W^{(t)}) \kaleeps_1 \iprod{P^{(t)}}{M^{(t)}}
    \\
    & =
    \trace(W^{(t)})\sqbrac[\Big]{1- \kaleeps_1 \iprod{P^{(t)}}{M^{(t)}}}
    \\
    & =
    \Phi^{(t)} \sqbrac[\Big]{1- \kaleeps_1 \iprod{P^{(t)}}{M^{(t)}}}
    \\
    & \leq \Phi^{(t)} \exp\paren{-\kaleeps_1 \iprod{P^{(t)}}{M^{(t)}}}.
  \end{split}
  \] Similarly, if $t \in \Ncal$, then
  \[ \Phi^{(t+1)} \leq \Phi^{(t)} \exp\paren{-\kaleeps_2
    \iprod{P^{(t)}}{M^{(t)}}}. \]

  By induction on~$t$, and using $\Phi^{(1)} = \trace(I) = n$, we get
  \begin{equation*}
    \Phi^{(t+1)}
    \leq
    n \exp\paren[\bigg]{-\kaleeps_1 \sum_{\tau \in \Pcal \cap [t]}
      \iprod{M^{(\tau)}}{P^{(\tau)}}
      -\kaleeps_2 \sum_{\tau \in \Ncal \cap [t]}
      \iprod{M^{(\tau)}}{P^{(\tau)}}},
    \quad \forall t \in [T].
  \end{equation*}

  For every $A \in \Sym$, we have $\trace(\exp(A)) = \sum_{i=1}^n
  e^{\lambda_i} \geq e^{\lambda_j}$ for any $j \in [n]$, where
  $\lambda_1,\dotsc,\lambda_n$ are the eigenvalues of~$A$. Thus,
  \begin{equation*}
    \begin{split}
      \Phi^{(T+1)}
      & =
      \trace(W^{(T+1)}) = \trace\paren[\bigg]{\exp\paren[\bigg]{
          -\kaleeps \sum_{t=1}^T M^{(t)}}}
      \\
      & \geq
      \exp\paren[\bigg]{
        \lambda_{\max} \paren[\bigg]{-\kaleeps \sum_{t=1}^T M^{(t)}}}
      = \exp\paren[\bigg]{
        -\kaleeps \lambda_{\min} \paren[\bigg]{\sum_{t=1}^T M^{(t)}}}.
    \end{split}
  \end{equation*}
  Thus,
  \begin{equation*}
    \exp\sqbrac[\bigg]{ -\kaleeps \lambda_{\min} \paren[\bigg]{\sum_{t=1}^T
        M^{(t)}}} \leq
    n \exp\sqbrac[\bigg]{-\kaleeps_1 \sum_{t \in \Pcal}
      \iprod{M^{(t)}}{P^{(t)}} -\kaleeps_2 \sum_{t \in \Ncal}
      \iprod{M^{(t)}}{P^{(t)}}}.
  \end{equation*}

  By taking $\ln(\cdot)$ on both sides, we get
  \begin{equation*}
    -\kaleeps \lambda_{\min} \paren[\bigg]{\sum_{t=1}^T
      M^{(t)}}
    \leq
    \ln n - \sqbrac[\bigg]{\kaleeps_1 \sum_{t \in \Pcal}
      \iprod{M^{(t)}}{P^{(t)}} +\kaleeps_2 \sum_{t \in \Ncal}
      \iprod{M^{(t)}}{P^{(t)}}},
  \end{equation*}
  so
  \begin{equation*}
    \kaleeps_1 \sum_{t \in \Pcal}
    \iprod{M^{(t)}}{P^{(t)}} +\kaleeps_2 \sum_{t \in \Ncal}
    \iprod{M^{(t)}}{P^{(t)}}
    \leq
    \kaleeps \lambda_{\min} \paren[\bigg]{\sum_{t=1}^T M^{(t)}}
    +
    \ln n,
  \end{equation*}
  and
  \begin{equation*}
    \frac{\kaleeps_1}{\kaleeps} \sum_{t \in \Pcal}
    \iprod{M^{(t)}}{P^{(t)}}
    + \frac{\kaleeps_2}{\kaleeps} \sum_{t \in \Ncal}
    \iprod{M^{(t)}}{P^{(t)}}
    \leq
    \lambda_{\min} \paren[\bigg]{\sum_{t=1}^T M^{(t)}}
    +
    \frac{\ln n}{\kaleeps}.
  \end{equation*}

  Since $\sum_{t \in \Pcal} \iprod{M^{(t)}}{P^{(t)}} \geq 0$ and
  $\sum_{t \in \Ncal} \iprod{M^{(t)}}{P^{(t)}} \leq 0$, to
  prove~\eqref{eq:3.3} it suffices to show that $1-\kaleeps \leq
  \kaleeps_1/\kaleeps$ and $1+\kaleeps \geq \kaleeps_2/\kaleeps$. It
  is not hard to prove that
  \begin{equation*}
    \label{lemma:exp-bound}
    1-e^{-x} \geq x(1-x), \ \forall x \in [0,+\infty)
    \qquad
    \text{and}
    \qquad
    e^x-1 \leq x(1+x), \ \forall x \in [0,\myhalf]
  \end{equation*}
  So our choice of~$\kaleeps_1$ and~$\kaleeps_2$ ensures that
  $1-\kaleeps \leq \kaleeps_1/\kaleeps$ and $1+\kaleeps \geq
  \kaleeps_2/\kaleeps$.
\end{proof}

We can now show the proof of Theorem~\ref{thm:kale-mmwum}.
\begin{proof}[Proof of Theorem~\ref{thm:kale-mmwum}]
  Let $M^{(t)} := \frac{1}{\ell+\rho}\sqbrac[\Big]{\sum_{i=1}^m
    y_i^{(t)} A_i - C + \ell^{(t)} I}$ and $P^{(t)} :=
  W^{(t)}/\trace{W^{(t)}}$ for every $t$. For every $t \leq T$,
  using~\eqref{kale-oracle},
  \[
  \begin{split}
    \iprod{M^{(t)}}{P^{(t)}}
    & =
    \frac{1}{\ell+\rho}
    \sqbrac[\bigg]{
      \sum_{i=1}^m
      y_i^{(t)}
      \iprod{A_{i}}{P^{(t)}}
      - \iprod{C}{P^{(t)}}
      + \ell^{(t)}\iprod{I}{P^{(t)}}}
    \\
    & =
    \frac{1}{(\ell+\rho)\trace W^{(t)}}
    \sqbrac[\bigg]{
      \sum_{i=1}^m
      y_i^{(t)}
      \iprod{A_{i}}{W^{(t)}}
      - \iprod{C}{W^{(t)}}}
    + \frac{\ell^{(t)}}{\ell+\rho}
    \geq
    -\frac{\eta}{\ell+\rho}
    +
    \frac{\ell^{(t)}}{\ell+\rho},
  \end{split}
  \]
  since $y^{(t)}$ is a solution for~\eqref{kale-oracle} with $X :=
  W^{(t)}$. Thus, by~\eqref{eq:3.3},
  \begin{multline*}
    \sum_{t \in \Pcal} \frac{(1-\beta)(\ell^{(t)} - \eta)}{\ell+\rho}
    +
    \sum_{t \in \Ncal} \frac{(1+\beta)(\ell^{(t)}- \eta)}{\ell+\rho}
    \\
    \leq
    \frac{1}{\rho+\ell}
    \lambda_{\min}\paren[\Bigg]{
      \sum_{t=1}^T \sqbrac[\bigg]{
        \paren[\bigg]{\sum_{i=1}^m y_i^{(t)} A_{i}}
        - C  + \ell^{(t)}I}}
    + \frac{\ln n}{\beta}.
  \end{multline*}
  Multiply through by $\ell+\rho$ and move $\ell^{(t)}I$ out of
  $\lambda_{\min}(\cdot)$:
  \begin{multline*}
    \sum_{t \in \Pcal} {(1-\beta)\ell^{(t)}}
    +
    \sum_{t \in \Ncal} {(1+\beta)\ell^{(t)}}
    - T(1+\beta)\eta
    \\
    \leq
    \lambda_{\min}\paren[\Bigg]{
      \sum_{t=1}^T \sqbrac[\bigg]{
        \paren[\bigg]{\sum_{i=1}^m y_i^{(t)} A_{i}} - C}}
    + \paren[\bigg]{\sum_{t=1}^T \ell^{(t)}}
    + \frac{(\rho+\ell)\ln n}{\beta}.
  \end{multline*}
  Thus,
  \begin{equation*}
    \sum_{t \in \Pcal} {-\beta\ell^{(t)}}
    +
    \sum_{t \in \Ncal} {\beta\ell^{(t)}}
    \leq
    \lambda_{\min}\paren[\Bigg]{
      \sum_{t=1}^T \sqbrac[\bigg]{
        \paren[\bigg]{\sum_{i=1}^m y_i^{(t)} A_{i}}
        - C}}
    + \frac{(\rho+\ell)\ln n}{\beta}
    + T(1+\beta)\eta.
  \end{equation*}
  Next note that $\sum_{t \in \Pcal} -\ell^{(t)} + \sum_{t \in \Ncal}
  \ell^{(t)} = \sum_{t \in \Pcal} -\ell + \sum_{t \in \Ncal} - \ell =
  -T\ell$, so
  \begin{equation*}
    0
    \leq
    \lambda_{\min}\paren[\Bigg]{
      \sum_{t=1}^T \sqbrac[\bigg]{
        \paren[\bigg]{\sum_{i=1}^m y_i^{(t)} A_{i}}
        - C}} + \beta T\ell
    + \frac{(\rho+\ell)\ln n}{\beta}
    + T(1+\beta)\eta.
  \end{equation*}
  and
  \begin{equation*}
    0
    \leq
    \lambda_{\min}\paren[\Bigg]{ \frac{1}{T}
      \sum_{t=1}^T \sqbrac[\bigg]{\paren[\bigg]{
          \sum_{i=1}^m y_i^{(t)} A_{i}}
        - C}} + \beta \ell
    + \frac{(\rho+\ell)\ln n}{T\beta}
    + (1+\beta)\eta.
  \end{equation*}
  Thus,
  \[ \sum_{i=1}^m \yb_i A_{i} - C
  =
  \frac{1}{T}
  \sum_{t=1}^T \sqbrac[\bigg]{\paren[\bigg]{
      \sum_{i=1}^m y_i^{(t)} A_{i}} - C}
  \succeq -\sqbrac[\Big]{\beta \ell
    +\frac{(\rho+\ell)\ln n}{T \beta} + (1+\beta)\eta}I. \]
\end{proof}

Theorem~\ref{thm:block-mmwum} can be easily proved from
Theorem~\ref{thm:kale-mmwum}. First, we apply
Theorem~\ref{thm:kale-mmwum} separately for each block. In each
iteration, $y^{(t)}$ is a solution for~\eqref{kale-oracle} for all
blocks simultaneously, and so the conclusion
in~\eqref{eq:kale-mmwum-conclusion} holds for all blocks with same
$\bar y$. This new algorithm can be seen as equivalent to running~$K$
copies of MMWUM, each with different input data, with the caveat that
all copies run for the same number of iterations and the
vector~$y^{(t)}$ returned from the oracle is the same for all copies
at each iteration~$t$.

\section{Optimality of MMWUM Oracle}
\label{app:oracle_optimal}

\repeatclaim{Proposition \ref{prop:oracle_optimal}}{\proporopt}

\newcommand{\tensor}{\otimes}
\begin{proof}
Let $k=n/3$, let $I_k$ be the identity of size $k \times k$,
and let $e_j \in \Reals^k$ be the $j$th standard basis vector.
Let $\zeta = 3 \eta$ and define
$$
 X_1 = \Diag( 1, \zeta^3, \zeta ) \tensor I_k,
 \qquad
 X_2 = \Diag( 1, 1/\zeta^3, 1/\zeta ) \tensor I_k,
$$
where $\tensor$ denotes tensor product.
For $j = 1,\ldots,k$, define
$$
    v_{1,j} = [1/\sqrt{2}, -1/\sqrt{2}, 0] \tensor e_j,
    \qquad
    v_{2,j} = [1/\sqrt{2}, 1/\sqrt{2}, 0] \tensor e_j,
    \qquad
    v_{3,j} = [0, 0, 1] \tensor e_j.
$$
Let $B_{i,j} = v_{i,j} v_{i,j}^T$.
Note that $\sum_{i,j} B_{i,j} = I$.

The oracle cannot choose a matrix $B_{i,j}$ with $i \in \set{1,2}$,
since satisfying \eqref{eq:our-oracle-simple} would lead to a contradiction:
\begin{gather*}
    \frac{ \iprod{X_2}{B_i} }{ \trace(X_2) (1+\eta) }
    \leq \frac{1}{\alpha}
    \leq \frac{ \iprod{X_1}{B_i} }{ \trace(X_1) (1-\eta) }
\\\implies\quad
    1 + 3 \eta
    ~=~ 1 + \zeta
    ~<~ \frac{ \iprod{X_2}{B_i} / \trace X_2 }{ \iprod{X_1}{B_i} / \trace X_1 }
    ~\leq~ \frac{ 1 + \eta }{ 1 - \eta }
    ~<~ 1 + 3 \eta,
\end{gather*}
for sufficiently small $\eta$.

So the oracle must choose a matrix $B_{i,j}$ with $i=3$.
In this case,
\begin{gather*}
    \frac{ \trace B_{i,j} }{ \rho }
    \leq \frac{1}{\alpha}
    \leq \frac{ \iprod{ X_1 }{ B_{i,j} } }{ \trace(X_1) (1-\eta) }
\\\implies\quad
    \frac{ n }{ 9 \eta }
    ~=~ \frac{ n }{ 3 \zeta }
    ~\leq~ \frac{ (1 + \zeta^3 + \zeta) k }{ \zeta }
    ~=~ \frac{ \trace(B_{i,j}) \trace(X_1) }{ \iprod{ X_1 }{ B_{i,j} } }
    ~\leq~ \frac{ \rho }{ 1-\eta }.
\end{gather*}
This shows that $\rho = \Omega(n / \eta)$.
\end{proof}


\section{The positive semidefiniteness assumption}
\label{sec:psd-assumption}

\begin{proposition}
  For every positive integer $n$, there exist matrices $B_1, \dotsc, B_m
  \in \Sym$ with $m = \Omega(n^2)$ such that $B := \sum_i B_i$ is
  positive definite and with the following property: for every $\eps \in
  (0,1)$ and $y \in \Reals^m$ such that $(1-\eps)B \preceq \sum_i y_i
  B_i$, all entries of~$y$ are nonzero.
\end{proposition}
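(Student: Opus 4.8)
The plan is to write down an explicit family. Take the index set to be the $\binom{n}{2}$ unordered pairs $\{a,b\}\subseteq[n]$ together with one extra index $\star$, so $m=\binom{n}{2}+1=\Omega(n^2)$. For $a<b$ put $B_{\{a,b\}}:=e_a e_b\transpose+e_b e_a\transpose$ (the symmetrized $(a,b)$ matrix unit), and put $B_\star:=2I-J$, where $J$ is the all-ones $n\times n$ matrix. Since $\sum_{a<b}B_{\{a,b\}}$ is exactly the adjacency matrix $J-I$ of the complete graph, we get $B:=\sum_i B_i=(J-I)+(2I-J)=I$, which is positive definite (this also respects the normalization $B=I$ used in the rest of the paper).

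The core of the argument is a \emph{test-vector} lemma: for each index $j$ there is a nonzero vector $u_j\in\Reals^n$ with $u_j\transpose B_i u_j=0$ for every $i\neq j$. For a pair index $j=\{a,b\}$, take $u_j:=e_a+e_b$; then for any other pair $\{a',b'\}$ we have $u_j\transpose B_{\{a',b'\}}u_j=2(u_j)_{a'}(u_j)_{b'}$, which vanishes unless $\{a',b'\}=\{a,b\}$, and $u_j\transpose B_\star u_j=(B_\star)_{aa}+2(B_\star)_{ab}+(B_\star)_{bb}=1+2(-1)+1=0$. For the index $j=\star$, take $u_\star:=e_1$; then $u_\star\transpose B_{\{a,b\}}u_\star=2(e_1)_a(e_1)_b=0$ since $a\neq b$. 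These are all one-line entrywise checks.

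Granting the lemma, the proposition is immediate. Suppose $\eps\in(0,1)$ and $y\in\Reals^m$ satisfy $(1-\eps)B\preceq\sum_i y_i B_i$, and suppose toward a contradiction that $y_j=0$ for some $j$. Evaluating quadratic forms at $u_j$, and using $B=I$, the test-vector property, and $y_j=0$,
\[
0 ~<~ (1-\eps)\,u_j\transpose B u_j ~=~ (1-\eps)\lVert u_j\rVert^2 ~\leq~ u_j\transpose\!\left(\sum_i y_i B_i\right)\!u_j ~=~ \sum_{i\neq j} y_i\, u_j\transpose B_i u_j ~=~ 0,
\]
a contradiction. Hence every entry of $y$ is nonzero, which proves the statement, with $m=\binom{n}{2}+1=\Omega(n^2)$.

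I expect the only nonobvious step to be the choice of $B_\star$. The pair-indexed matrices by themselves sum to the \emph{indefinite} matrix $J-I$, so one is forced to add something to make the total positive definite; but naive fixes such as adding multiples of $e_pe_p\transpose$ destroy the test vectors $u_{\{a,b\}}=e_a+e_b$, since $(e_a+e_b)\transpose(e_pe_p\transpose)(e_a+e_b)\neq0$ when $p\in\{a,b\}$. The matrix $2I-J$ is tailored so that $e_a+e_b$ lies in its isotropic cone for \emph{every} pair (equivalently, every $2\times2$ principal submatrix of $2I-J$ has diagonal entries $1$ and off-diagonal entries $-1$), while $(J-I)+(2I-J)=I$ keeps the total positive definite. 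Everything else is bookkeeping.
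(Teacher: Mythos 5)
Your proof is correct. It shares the paper's basic idea of using the symmetrized matrix units $E_{ab} := e_ae_b^T + e_be_a^T$ as the bulk of the family, but the two arguments diverge at the key step, and your version is the more robust one. The paper completes the family with $2I$ (so $B = I+J$) and then argues by ``taking the inner product with $E_{ab}$'' on both sides of the Loewner inequality; since $E_{ab}$ is indefinite, $X \preceq Y$ does not imply $\iprod{E_{ab}}{X} \leq \iprod{E_{ab}}{Y}$, and in fact for that family the stated property can fail for suitable $\eps \in (0,1)$: already for $n=3$ and $\eps = \tfrac12$ one checks that $\tfrac12(I+J) \preceq 2I + E_{13} + E_{23}$ (the difference is positive semidefinite with kernel spanned by $(1,1,-1)$), so the coefficient of $E_{12}$ may be taken to be zero. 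Your completing matrix $B_\star = 2I - J$ is chosen precisely so that every index $j$ admits a test vector $u_j$ that is isotropic for all $B_i$ with $i \neq j$ while $u_j^T B u_j > 0$; evaluating quadratic forms at $u_j$ is a legitimate use of $\preceq$, so the argument goes through for every $\eps \in (0,1)$. In short, your construction is not merely a variant of the paper's: it supplies both the right choice of completing matrix and the valid monotonicity step, and could replace the proof given in the appendix.
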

\begin{proof}
  Let $\Pcal := \setst{(i,j)}{i,j \in [n],\, i < j}$. For $(i,j) \in
  \Pcal$, let $E_{ij} := e_i e_j^T + e_j e_i^T$. Let $J$ denote the
  matrix of all ones. Then $2I + \sum_{(i,j) \in \Pcal} E_{ij} = I+J =:
  B \succ 0$. Let $\eps \in (0,1)$ and suppose that $(1-\eps)B \preceq
  2tI + \sum_{(i,j) \in \Pcal} z_{ij} E_{ij}$ for some $t \in \Reals$
  and $z \in \Reals^{\Pcal}$. By taking the inner product with $E_{ab}$
  on both sides, we see that $0 < 2(1-\eps) \leq z_{ab}$ for every
  $(a,b) \in \Pcal$. Similarly, we find that $0 < 2n(1-\eps) \leq 2nt$.
\end{proof}


\end{document}